\newcommand{\indep}{\rotatebox[origin=c]{90}{$\models$}}
\DeclareMathOperator*{\argmin}{arg\,min}
\tikzstyle{vertex}=[circle, draw, fill, inner sep=0pt, minimum size=0.15cm]
\newdimen\arrowsize
\pgfsetroundcap      \pgfpathmoveto{\pgfpoint{0\arrowsize}{0\arrowsize}}
\newcommand{\SI}{\boldsymbol{\Sigma}}
\newcommand{\pa}{\mathit{pa}}
\newcommand{\XX}{\mathbf{X}}
\newcommand{\XE}{\mathbf{X}^{\e}}
\newcommand{\Xone}{\mathbf{X}^{1}}
\newcommand{\Xtwo}{\mathbf{X}^{2}}
\newcommand{\Yone}{\mathbf{Y}^{1}}
\newcommand{\Ytwo}{\mathbf{Y}^{2}}
\newcommand{\YY}{\mathbf{Y}}
\newcommand{\Z}{\mathbf{Z}}
\newcommand{\Ye}{\mathbf{Y}^{e}}
\newcommand{\G}{\mathbf{G}}
\newcommand{\E}{\mathcal{E}}
\newcommand{\e}{e}
\newcommand{\Var}{\mathrm{Var}}
\newtheorem{lemma}{Lemma}
\newtheorem{theorem}{Theorem}
\newtheorem{proposition}{Proposition}
\newtheorem{assumption}{Assumption}
\newtheorem{definition}{Definition}
\newtheorem{remark}{Remark}
\begin{document}

\begin{frontmatter}
\title{Causal Dantzig: fast inference in  linear structural equation models with hidden variables under additive interventions}

\runtitle{Causal Dantzig}

% \begin{aug}
% \author{\fnms{Dominik} \snm{Rothenh\"ausler}\ead[label=e2]{rothenhaeusler@stat.math.ethz.ch}}, \author{\fnms{Peter} \snm{B\"uhlmann}} \\ \and
% \author{\fnms{Nicolai} \snm{Meinshausen}\ead[label=e1]{meinshausen@stat.math.ethz.ch}} \ead[label=e3]{buehlmann@stat.math.ethz.ch}
% \ead[label=u1,url]{http://stat.ethz.ch}
% \runauthor{D. Rothenh\"ausler, P. B\"uhlmann and N. Meinshausen}
% \affiliation{ETH Z\"urich}
% \address{Seminar f\"ur Statistik\\
% ETH Z\"urich\\
% R\"amistrasse 101\\
% 8092 Z\"urich\\
% Switzerland\\
% \printead{e2} \\
% \phantom{E-mail:\ }\printead*{e3} \\
% \phantom{E-mail:\ }\printead*{e1} \\
% \printead{u1}}
% \end{aug}

\begin{aug}
\author{\fnms{Dominik} \snm{Rothenh\"ausler}\ead[label=e2]{rothenhaeusler@stat.math.ethz.ch}}, \author{\fnms{Peter} \snm{B\"uhlmann}} \\ \and
\author{\fnms{Nicolai} \snm{Meinshausen}\ead[label=e1]{meinshausen@stat.math.ethz.ch}} \ead[label=e3]{buehlmann@stat.math.ethz.ch}
\ead[label=u1,url]{http://stat.ethz.ch}
\runauthor{D. Rothenh\"ausler, P. B\"uhlmann and N. Meinshausen}
\affiliation{ETH Z\"urich}
\address{Seminar f\"ur Statistik\\
ETH Z\"urich\\
8092 Z\"urich\\
Switzerland\\
\printead{e2} \\
\printead*{e3} \\
\printead*{e1}}
\end{aug}

\begin{abstract}
Causal inference is known to be
very challenging when only observational data are
available. Randomized experiments are
often costly and impractical and in instrumental variable
regression the number of instruments has to exceed the number of causal predictors.
It was recently shown in
\citet{peters2016causal} that causal inference for the full model is possible when
data from distinct observational environments are available, exploiting that the
conditional distribution of  a response variable is invariant  under the correct causal model.
Two shortcomings of such an approach are the high computational effort for
 large-scale data and the assumed absence of hidden confounders. Here
 we show that these two shortcomings can be addressed if one is
 willing to make a more restrictive assumption on the type of
 interventions that generate different environments. Thereby, we look at  a
 different notion of invariance, namely inner-product invariance.
By avoiding a computationally cumbersome reverse-engineering approach such as in \citet{peters2016causal}, it allows for large-scale causal inference in linear structural equation models.
We discuss identifiability conditions for the causal
parameter and derive asymptotic confidence intervals in the low-dimensional setting. In the case of non-identifiability we show that
the solution set of causal Dantzig has predictive guarantees under certain interventions. We derive finite-sample bounds in the high-dimensional setting and investigate its performance on simulated datasets.
\end{abstract}

\begin{keyword}[class=MSC]
\kwd[Primary ]{62J99}
\kwd{62H99}
\kwd[; secondary ]{68T99}
\end{keyword}
\begin{keyword}
\kwd{Causal inference}
\kwd{structural equation models}
\kwd{high-dimensional consistency.}
\end{keyword}
\end{frontmatter}

\section{Introduction}
Using only observational data to infer causal relations is a
challenging task and only possible under certain circumstances and
assumptions. In the context of structural equation models
\citep{Bollen1989,robins2000marginal,Pearl2009}, one possibility is to characterize the Markov equivalence
class of graphs under the assumption of acyclicity and usually
faithfulness  \citep{Verma1991,andersson1997,Tian2001, Hauser2012,Chickering2002}. Based on
the Markov equivalence class, some causal effects and often only bounds for them can be
inferred, see for example \citet{Maathuis2009} and \citet{vanderweele2010signed}. Other approaches
exploit non-Gaussianity or nonlinearities, while making suitable
assumptions about the causal model \citep{Shimizu2006,Hoyer2008}.

If both observational and data under interventions are available and
the target and effect of the interventions is perfectly known, the
task of inferring causal relationships becomes easier.
 \citet{hapb14}, for example, modify the greedy
equivalence search of \citet{Chickering2002} to such a
scenario. If an instrumental variable is available, then different
forms of instrumental variable regression
\citep{Wright1928,bowden1990instrumental,angrist1996identification,didelez2010assumptions}
can be used to infer the causal effect of a single variable on a
target of interest.

Consider a setting where data are recorded in different environments. The environments can have an
arbitrary and unknown intervention effect on all predictor variables
and the method exploits that the conditional distribution of the
target $Y$ of interest, given its causal parents, is invariant across
environments under arbitrary interventions on all variables
(excluding, just as in instrumental variable regression, direct
interventions on the response or target $Y$). While it was demonstrated in \citet{peters2016causal} that the method can infer a
full causal model, there are two major shortcomings:
\begin{enumerate}[(i)]
\item It is assumed for invariant causal prediction (ICP) \citep{peters2016causal} that there are no hidden variables that influence
  $Y$ and its parents simultaneously.
\item ICP scans all potential subsets of variables and tests
  whether the conditional distribution of $Y$ given a subset of
  variables is invariant across all environments. This makes the
  method computationally prohibitively expensive as soon as the number
  of predictor variables starts to exceed one or two dozens.
\end{enumerate}
We will show that both shortcomings can be addressed if we are willing
to make a more specific assumption about the type of interventions
that generate the different environments.

\subsection{Setting and notation}\label{settingandnotation}
Assume we have a $p+1$ variables $X_1,\ldots,X_{p+1}$
from a linear Structural Equation Model (SEM) \citep{Bollen1989,robins2000marginal,Pearl2009},
\begin{align}\label{SEM}
  X_{k} &\leftarrow \sum_{k' \neq k} A_{k,k'} X_{k'} + \eta_{k}, \qquad k=1,\ldots,p+1,
\end{align}
where $\pa(k) := \{ k' : A_{k,k'} \neq 0  \} \subseteq \{1,\ldots,p+1\} \setminus k$ is the set of parents
of variable $k$. For notational simplicity we set $A_{k,k} := 0$ for all $k$. Deviating from convention, we allow
dependence between the components of the noise contribution
$\eta=(\eta_1,\ldots,\eta_{p+1})$ which is equivalent to allowing
for hidden variables as parents of the observed variables $X_1,\ldots,X_{p+1}$, see Figure~\ref{fig:H} for an example. The variables form a directed graph $G=(V,E)$, where the nodes
$V=\{1,\ldots,p+1\}$ are given by the variables themselves and there
is an edge from variable $k$ to $k'$ if and only if $k\in \pa(k')$.
Furthermore, we allow the underlying graph to be cyclic. 
 The values $(A_{k,k'})$ for $k,k' \in \{1, \ldots,p+1 \}$ form a $(p+1) \times (p+1)$-dimensional matrix that we denote by $A$. We write $\text{Id}_{p+1}$ for the $(p+1) \times (p+1)$-dimensional identity matrix. To make the distribution of $X_{1},...,X_{p+1}$ well defined in the presence of cycles, we assume that $\text{Id}_{p+1} - A $ is invertible. Note that this is always the case if $G$ is acyclic.

We consider inferring the structural equation for just one of the variables and we take variable $X_{p+1}$ without
loss of generality and denote it by $Y$. Note that $Y$ can be in the parental set of some (or all) of the variables $X_{1},\ldots,X_{p}$, i.e. the matrix $A$ is not necessarily lower triangular. With slight abuse of notation we define $X := (X_{1},\ldots,X_{p})$, $\beta^{0} := A_{p+1,1:p}$ and $\varepsilon := \eta_{p+1}$  such that
\begin{equation} \label{eq:Y}
Y:= X_{p+1} = \sum_{k=1}^p  \beta^{0}_{k} X_{k}  + \varepsilon.
\end{equation}
Note that the vector $\beta^{0}$ has a causal interpretation as it is the coefficient vector $A_{p+1,1:p}$ in the structural equation model~\eqref{eq:Y}. The goal is to infer $\beta^{0}$.

\subsection{Relation to other work}

We have mentioned already major differences to invariant causal prediction \citep{peters2016causal} and the loose relation to the vast literature on instrumental variable regression \citep{didelez2010assumptions} which will be detailed in Section~\ref{sec:comparison-with-instrumental-variables}.
Another method that relies on shift interventions has been published recently \citep{rothenhausler2015backshift}. However, the authors exploit a different type of invariance  as \emph{inner-product invariance} does not hold in this setting. \citet{lewbel2012using} uses heteroscedasticity to infer structural equations. While \citet{lewbel2012using} uses cross-products between exogeneous variables and error terms to identify structural equations, we directly exploit the covariance structure of endogeneous variables and the error terms, resulting in a different method. The comparison in Figure~\ref{fig:kemmeren} about an application has been published in  \citet{meinshausen2016methods}. The concept of \emph{inner-product invariance}, the \emph{causal Dantzig} method and all its corresponding theory  are entirely novel.

\subsection{Overview}

In Section~\ref{sec:cond-inner-prod} we introduce the notion of \emph{inner-product invariance} and discuss under which assumptions this property is satisfied. In Section~\ref{sec:causal-dantzig} we leverage this property to define the unregularized \emph{causal Dantzig} and discuss identifiability, low-dimensional estimation and inference. Furthermore, in the case of non-identifiability we show that the solution set of \emph{causal Dantzig} has predictive guarantees under certain interventions. We conclude with a comparison to instrumental variable regression and a discussion of \emph{inner-product invariance} from the perspective of potential outcomes. In Section~\ref{sec:high-dim} we introduce the regularized \emph{causal Dantzig}, examine its performance in high-dimensional estimation and show how it can achieve consistency under relaxed identifiability assumptions. Practical considerations  for both the regularized and unregularized \emph{causal Dantzig} can be found in Section~\ref{sec:pract-cons}. Numerical examples can be found in Section~\ref{sec:numerical-examples}.

\section{Conditional and inner-product invariance}\label{sec:cond-inner-prod}

In analogy to the setting of \citet{peters2016causal} we assume that
the data are recorded under different discrete environments or experimental conditions $e \in \E$.
The random variable $X$ in environment $\e\in\E$ is denoted by $X^\e$ and the distribution of $\eta$ by $\eta^\e$.
We observe i.i.d. samples of $(X^{e}, Y^{e})$ from each environment $e \in \E$ and for each sample $i$ we observe from which environment $e_{i} \in \E$ it was drawn. This variable $e_{i}$ can be deterministic or random.\\
The distribution
of a variable can be different across environments due
to specific or non-specific interventions.
   A change in the
distribution of $X^\e,\eta^\e$ can be caused by different intervention
mechanisms such as do-interventions or noise-interventions, which can
be randomized or not and known or partially known or unknown.

The type of intervention that generates the environments is arbitrary
in \citet{peters2016causal} with the exception that interventions on
the target $Y$ itself are not allowed. The same requirement is also
necessary for the instrumental
variable approach and we will keep this requirement in the following.
For possible relaxations see~\citet{rothenhausler2015backshift}. Throughout the paper we assume
that the distributions $(X^{e},Y^{e})$ are non-degenerate and that the Gram
matrix of $(X^{e},Y^{e})$ is well-defined and positive definite for all $e \in \E$.

\subsection{Conditional invariance}
The conditional distribution of the target variable $Y$, given  its
parents $\pa(Y)=\pa(X_{p+1})$ is denoted by
\[
Y^\e | X^\e_{pa(Y)}=x.
\]
It was assumed in \citet{peters2016causal} that the conditional
distribution is  invariant for all $x\in \mathbb{R}^{|\pa(Y)|}$  where it is defined in the
absence of hidden confounding (where absence of hidden confounding is
fulfilled in \eqref{SEM} if  all components of
$\eta$ are independent). It then holds for all environments $e,f\in
\E$
and all $x\in \mathbb{R}^{|\pa(Y)|}$ for which the conditional
distributions are well defined that
\begin{equation}\label{eq:condinvariance}
Y^\e | X^\e_{pa(Y)}=x \qquad \stackrel{d}{=} \qquad Y^f | X^f_{pa(Y)}=x.
\end{equation}
 This conditional invariance under the true parental set $\pa(Y)$ is then exploited for inference by testing  for all subsets of
$\{1,\ldots,p\}$ whether the invariance of \eqref{eq:condinvariance} can be rejected. The
intersection of all subsets for which invariance cannot be rejected is
then automatically a subset of the true parental set with controllable probability.

There are two shortcomings of this invariance approach \citep{peters2016causal}  in certain contexts:
\begin{enumerate}[(i)]
\item
The invariance \eqref{eq:condinvariance} becomes invalid under hidden confounding
between $Y$ and the parents of $Y$
as the conditional invariance of \eqref{eq:condinvariance} can be
violated even for the true parental set \citep{peters2016causal}.
\item Testing each subset of $\{1,\ldots,p\}$ restricts the
number of variables to somewhere between $p\le 20$ in practice.
\end{enumerate}
Both of these shortcomings can be addressed when using a different
type of invariance.

\subsection{Inner-product invariance}

We show in the following that the invariance of the \emph{conditional
  distribution} \eqref{eq:condinvariance} can be replaced with an
\emph{inner-product invariance} under a more specific assumption on
the mechanism that generates the different environments.

\begin{definition}\label{def:inner-prod-invar}
\emph{Inner-product invariance} under $\beta^{0} \in \mathbb{R}^p$ is
fulfilled iff
\begin{equation*}
 \mathbb{E}\big[ X^\e_k (Y^\e- X^\e \beta^{0})\big] \quad = \quad \mathbb{E}\big[ X^f_k
 (Y^f- X^f \beta^{0})\big]
\end{equation*}
for all $ \e,f\in\E$  and $ k\in \{1,\ldots,p\}$.
\end{definition}

We will show that \emph{inner-product invariance} is true for the
causal vector $\beta^{0}$ under the assumption of additive interventions
made precise in the following. A derivation of this result from potential outcome assumptions is discussed in Section~\ref{sec:pot-out}. The concept of \emph{inner-product invariance} will then later be exploited for computationally fast causal inference for both  low- and high-dimensional data.

\subsection{Additive interventions}
We assume here that the structural equations \eqref{SEM} are
constant across all environments and that the change in the distribution of $X^\e$ between environments is caused by a shift in the
distribution of $\eta^\e$ between different environments.
\begin{assumption}\label{assum:additive}
Assume that the distributions of $(X_{1}^{e},...,X_{p+1}^{e})$, $e \in \E$, are generated by the linear SEM
\begin{align*}
  X_{k}^{e} &\leftarrow \sum_{k' \neq k} A_{k,k'} X_{k'}^{e} + \eta_{k}^{e}, \qquad \text{ for } k=1,\ldots,p+1 \text{ and } e \in \E.
\end{align*}
Assume that there exist random variables $\eta^0,
\delta^e\in \mathbb{R}^p$ with $\mathrm{Cov}(\eta^{0}, \delta^{e}) = 0$ for all $\e\in \E$ such that $\eta^\e$ can be written as
\begin{equation*}
\eta^\e  \stackrel{d}{=} \eta^0  + \delta^\e \qquad\mbox{for all  } \e\in \E.
\end{equation*}
We assume that $\delta^\e_{p+1}\equiv 0$  for all $\e\in\E$ and $\mathbb{E}[\eta^{0}] = 0$.
\end{assumption}
Note that the components of  $\eta^0$ and of each vector $\delta^e$,
$\e\in\E$ are allowed to  be dependent to allow for hidden confounding.
We call the random variables $\delta^{e}$, $e \in \E$, \emph{additive interventions} as they are additive and specific to the environment $e \in \mathcal{E}$.
$\delta^{e}$ can for example be an additive contribution if $\mathbb{E}(\delta^\e_k)\neq 0$ for some variable $k\in\{1,\ldots,p\}$ or a noise contribution if $\Var(\delta^\e_k)\neq 0$ or both. If $\delta_{k}^{e} \equiv 0$ for some $e \in \E$ and $k \in \{1,\ldots,p\}$ we say that there is \emph{no intervention on variable $k$ in environment $e \in \E$}. The last part of the assumption ensures that the noise part $\delta^e$ that is specific to environment $\e\in\E$ does not include an intervention on the target variable $Y$ itself and is a type of exclusion restriction \citep{Pearl2009}. Mathematically, the crucial property of Assumption~\ref{assum:additive} is that the covariance between the error of covariates and target variable is constant, i.e. that $\text{Cov}(\eta_{1:p}^{e}, \eta_{p+1}^{e})$ is constant across environments $e \in \E$. This allows us to obtain the following result.

\begin{proposition}\label{proposition:const}
Under Assumption~\ref{assum:additive}, we have inner-product invariance
under the true causal coefficients $\beta^{0} = (A_{p+1,k})_{k=1,\ldots,p}$:
\[ \mathbb{E}\big[ X^\e_k (Y^\e- X^\e\beta^{0})\big] \quad = \quad \mathbb{E}\big[ X^f_k
 (Y^f- X^f\beta^{0})\big]  \]
for all $ \e,f\in\E$  and $ k\in \{1,\ldots,p\}$.
\end{proposition}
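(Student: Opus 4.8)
The plan is to reduce the whole statement to the second-moment structure of the noise vector $\eta^\e$. First I would observe that, by the structural equation \eqref{eq:Y} defining $\beta^0$, the residual coincides exactly with the noise of the target: $Y^\e - X^\e\beta^0 = \varepsilon^\e = \eta_{p+1}^\e$. This is the only place where the identity $\beta^0 = (A_{p+1,k})_{k=1,\ldots,p}$ is used. Consequently the object to control is $\mathbb{E}\big[X_k^\e\,\eta_{p+1}^\e\big]$, and the goal reduces to showing that this quantity does not depend on $\e$.

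Next I would pass to the reduced form of the SEM. Writing the full $(p+1)$-dimensional system \eqref{SEM} in matrix form gives $(\text{Id}_{p+1}-A)(X_1^\e,\dots,X_{p+1}^\e)^{\top} = \eta^\e$, hence $(X_1^\e,\dots,X_{p+1}^\e)^{\top} = (\text{Id}_{p+1}-A)^{-1}\eta^\e =: B\,\eta^\e$, where $B$ is well defined by the invertibility assumption and, crucially, is the \emph{same} matrix in every environment because the structural coefficients $A$ are held fixed across $\e\in\E$. Expanding $X_k^\e = \sum_{j=1}^{p+1} B_{k,j}\,\eta_j^\e$ and using linearity of expectation turns the target inner product into $\mathbb{E}\big[X_k^\e\,\eta_{p+1}^\e\big] = \sum_{j=1}^{p+1} B_{k,j}\,\mathbb{E}\big[\eta_j^\e\,\eta_{p+1}^\e\big]$. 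Since $B$ carries no $\e$-dependence, it suffices to prove that each cross moment $\mathbb{E}\big[\eta_j^\e\,\eta_{p+1}^\e\big]$ is constant in $\e$.

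This last step is where Assumption~\ref{assum:additive} does the work, and it is the only delicate point. Because $\delta_{p+1}^\e\equiv 0$ and $\mathbb{E}[\eta^0]=0$, we get $\mathbb{E}[\eta_{p+1}^\e]=0$, so the cross moment equals the covariance $\mathrm{Cov}(\eta_j^\e,\eta_{p+1}^\e)$. Substituting $\eta^\e \stackrel{d}{=} \eta^0+\delta^\e$ and again using $\delta_{p+1}^\e\equiv 0$ yields $\mathrm{Cov}(\eta_j^\e,\eta_{p+1}^\e) = \mathrm{Cov}(\eta_j^0,\eta_{p+1}^0) + \mathrm{Cov}(\delta_j^\e,\eta_{p+1}^0)$, and the assumed vanishing cross-covariance $\mathrm{Cov}(\eta^0,\delta^\e)=0$ annihilates the second term, leaving $\mathrm{Cov}(\eta_j^0,\eta_{p+1}^0)$, which has no $\e$-dependence; this is exactly the ``crucial property'' flagged before the statement. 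The main thing to be careful about is that $\eta^\e\stackrel{d}{=}\eta^0+\delta^\e$ is only an equality in distribution and that the components of $\eta^0$ (and of each $\delta^\e$) may be mutually dependent, so as to model hidden confounding. Neither causes trouble, since only the joint second moments of $\eta^\e$ are used and the distributional equality already pins these down. Feeding the constant $\mathrm{Cov}(\eta_j^0,\eta_{p+1}^0)$ back into $\sum_{j=1}^{p+1} B_{k,j}\,\mathrm{Cov}(\eta_j^0,\eta_{p+1}^0)$ produces a value that is the same for every environment, and comparing environments $\e$ and $f$ for each $k\in\{1,\ldots,p\}$ completes the proof.
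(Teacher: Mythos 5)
Your proposal is correct and follows essentially the same route as the paper's own proof: reduce to the residual $\eta_{p+1}^\e$ via $\beta^0=A_{p+1,1:p}$, pass to the reduced form $X^\e=(\text{Id}-A)^{-1}\eta^\e$ with an environment-independent mixing matrix, and show the cross moments $\mathbb{E}[\eta_j^\e\eta_{p+1}^\e]$ are constant using $\delta_{p+1}^\e\equiv 0$, $\mathbb{E}[\eta^0]=0$, and $\mathrm{Cov}(\eta^0,\delta^\e)=0$. Your remark that the equality in distribution of the noise vector suffices because only joint second moments enter is a correct and worthwhile clarification of a point the paper leaves implicit.
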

 The proof of this result can be found in the Appendix. A derivation of this result from potential outcome assumptions is discussed in Section~\ref{sec:pot-out}. We will exploit inner-product invariance to infer the causal effects in linear SEMs in the following.
\begin{figure}
\begin{center}
\begin{tikzpicture}[scale=1.2, line width=0.5pt, minimum size=0.58cm, inner sep=0.3mm, shorten >=1pt, shorten <=1pt]
    \normalsize
    \draw (-1,1.6) node(h) [rectangle, rounded corners=2mm, inner
     sep=1.7mm, draw, dashed] {$H$};
    \draw (-0.8,0) node(y) [rectangle, rounded corners=2mm, inner
     sep=1.7mm, draw] {$\phantom{a} Y\phantom{a}$};
    \draw (1.0,-1.2) node(1) [rectangle, rounded corners=2mm, inner
     sep=1.7mm, draw] {$X_1 $};
    \draw (-2.8,-0.2) node(2) [rectangle, rounded corners=2mm, inner
     sep=1.7mm, draw] {$X_2$};
\draw (1,0.3) node(3) [rectangle, rounded corners=2mm, inner
     sep=1.7mm, draw] {$X_3$};
    \draw[-arcsq] (h) -- (3);
    \draw[-arcsq] (h) -- (y);
    \draw[-arcsq] (h) -- (2);
    \draw[-arcsq] (2) -- (y);
    \draw[-arcsq] (y) -- (1);
    \draw[-arcsq] (2) -- (1);
    \draw[-arcsq] (1) -- (3);
   \end{tikzpicture}
\caption{ \label{fig:H} {\it An example for $p=3$. The hidden variable $H$ is
    supressed notationally and instead the noise contributions at each
  variable $\{X_1,X_2,X_3,X_4=Y\}$ are not assumed to be independent.}}
\end{center}
\end{figure}
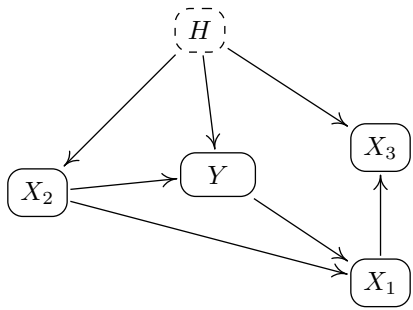

\subsection{Errors-in-variables}

In many real-world applications, we cannot directly observe $X_{1},\ldots,X_{p},Y$, but make a measurement error $\zeta$ when observing it. In other words, we measure
\begin{equation}\label{eq:21}
   \tilde Y^{e} = Y^{e} + \zeta_{y}^{e} \text{ and }  \tilde X_{k}^{e} = X_{k}^{e} + \zeta_{k}^{e}, \qquad e \in \E, k=1,\ldots,p,
\end{equation}
where $\zeta_{y}^{e}, \zeta_{k}^{e}, e \in \E, k=1,\ldots,p$ are centered, jointly independent and independent of $X^{e},Y^{e}, e \in \E$ with finite variance. Furthermore, we make the assumption that the distributions of $ \zeta_{k}^{e}, k = 1,\ldots,p$ are invariant for different settings $e \in \E$. Note that we do not assume that the distribution of $\zeta_{y}^{e}$ is invariant for different settings $e \in \E$. Errors-in-variables exhibit an effect called ``regression dilution'' or ``attenuation''. As an example consider a Structural Equation Model of the following form:
\begin{align*}
\begin{split}
 \text{ latent variables $X_{1}$ and  $Y$ with } Y &= 2 X_{1} + \varepsilon, \\  \text{ observed variables $\tilde X_{1}$ and $\tilde Y$ with }  \tilde X_{1} &= X_{1}+\zeta_{1}, \\ \text{ and } \tilde Y &= Y + \zeta_{y}.
\end{split}
\end{align*}
For now, let us assume that there is no confounding between $X_{1}$ and $Y$ and $X_{1}$. When regressing $\tilde Y$ on $\tilde X_{1}$ we obtain a smaller regression coefficient than when regressing $Y$ on $X_{1}$ due to higher variance of $\tilde X_{1}$. The smaller regression coefficient is by definition the best linear prediction of $\tilde Y$ given $\tilde X_{1}$. In this sense attenuation can be ignored if one wants to make predictions based on $\tilde X_{1}$. However, in causal inference we are interested in knowing what happens when \emph{intervening on $X_{1}$}, and this effect would be underestimated by the regressing $\tilde Y$ on $\tilde X_{1}$. The following proposition shows that if inner-product invariance holds for $X_{1},\ldots,X_{p},Y$ then it also holds for proxy variables $\tilde X_{1},\ldots,\tilde X_{p},\tilde Y$.

\begin{proposition}\label{proposition:errors-variables}
Assume inner-product invariance holds for $X_{1}^{e},\ldots,X_{p}^{e},Y^{e}$, $e \in \E$, under $\beta^{0}$. Assume we have an errors-in-variables model as defined in equation \eqref{eq:21}. Then  inner-product invariance holds for $\tilde X_{1}^{e},\ldots, \tilde X_{p}^{e},\tilde Y^{e}$, $ e \in \E$ under $\beta^{0}$:
\[ \mathbb{E}\big[ \tilde X^\e_k ( \tilde Y^\e- \tilde X^\e\beta^{0})\big] \quad = \quad \mathbb{E}\big[ \tilde X^f_k
 ( \tilde Y^f- \tilde X^f\beta^{0})\big]  \]
for all $ \e,f\in\E$  and  $k\in \{1,\ldots,p\}$.
\end{proposition}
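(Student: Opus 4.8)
The plan is to expand the proxy inner product $\mathbb{E}\big[\tilde X_k^e(\tilde Y^e - \tilde X^e\beta^{0})\big]$ in terms of the latent quantities and the measurement errors, and to show that it differs from the latent inner product $\mathbb{E}\big[X_k^e(Y^e - X^e\beta^{0})\big]$ only by a correction term that does not depend on the environment $e$. First I would substitute $\tilde X_k^e = X_k^e + \zeta_k^e$, $\tilde Y^e = Y^e + \zeta_y^e$, and $\tilde X^e\beta^{0} = X^e\beta^{0} + \sum_{j=1}^{p} \zeta_j^e\beta_j^{0}$, so that
\[
\tilde Y^e - \tilde X^e\beta^{0} = (Y^e - X^e\beta^{0}) + \zeta_y^e - \sum_{j=1}^{p} \zeta_j^e\beta_j^{0}.
\]
Multiplying by $\tilde X_k^e = X_k^e + \zeta_k^e$ and taking expectations produces six terms, the leading one being exactly $\mathbb{E}\big[X_k^e(Y^e - X^e\beta^{0})\big]$.

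Next I would eliminate the cross terms using the independence assumptions of \eqref{eq:21}. Because each $\zeta_y^e$ and each $\zeta_j^e$ is centered and independent of $(X^e,Y^e)$, every term in which a single measurement error multiplies a latent factor vanishes: $\mathbb{E}[X_k^e\zeta_y^e]=0$, $\mathbb{E}[X_k^e\zeta_j^e]=0$ for all $j$, and $\mathbb{E}[\zeta_k^e(Y^e - X^e\beta^{0})]=0$. Among the remaining purely-error terms, joint independence of the $\zeta$'s gives $\mathbb{E}[\zeta_k^e\zeta_y^e]=0$ and $\mathbb{E}[\zeta_k^e\zeta_j^e]=0$ for $j\neq k$, so the only surviving error contribution comes from the index $j=k$, namely $-\beta_k^{0}\,\mathbb{E}[(\zeta_k^e)^2] = -\beta_k^{0}\,\Var(\zeta_k^e)$. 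This yields the identity
\[
\mathbb{E}\big[\tilde X_k^e(\tilde Y^e - \tilde X^e\beta^{0})\big] = \mathbb{E}\big[X_k^e(Y^e - X^e\beta^{0})\big] - \beta_k^{0}\,\Var(\zeta_k^e).
\]

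Finally I would conclude by invariance. The first term on the right is independent of $e$ by the assumed inner-product invariance of the latent variables $X_{1}^{e},\ldots,X_{p}^{e},Y^{e}$, and the second term is independent of $e$ because the distribution of $\zeta_k^e$ (and hence its variance) was assumed invariant across environments for $k=1,\ldots,p$. Hence the whole right-hand side takes the same value for $e$ and $f$, which is the claim. The main point to watch is the bookkeeping in the middle step: one must verify that $\Var(\zeta_y^e)$ never appears, since $\zeta_y^e$ enters only through cross terms with $X_k^e$ or $\zeta_k^e$, both of which vanish. This is precisely why the proposition can permit the distribution of the response error $\zeta_y^e$ to vary across environments while still requiring invariance of the predictor errors $\zeta_k^e$.
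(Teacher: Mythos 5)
Your proof is correct and follows essentially the same route as the paper's: expand the proxy inner product, use centering and joint independence of the measurement errors to kill all cross terms, and observe that the surviving correction $-\beta_k^{0}\Var(\zeta_k^e)$ is environment-invariant because the distributions of $\zeta_k^e$, $k=1,\ldots,p$, are assumed invariant. (Your bookkeeping is in fact slightly cleaner than the paper's displayed computation, which retains the full sum $\sum_j \mathbb{E}[(\zeta_j^e)^2]\beta_j^0$ as the correction term rather than isolating the $j=k$ contribution, but both corrections are constant across environments, so the conclusion is the same.)
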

The proof of this result can be found in the Appendix. As a result, methods based on inner-product invariance will be robust with respect to errors-in-variables. Note that the analogous statement is true for instrumental variable regression. Now let us turn to the definition of the unregularized \emph{causal Dantzig}.

\section{Causal Dantzig without regularization}\label{sec:causal-dantzig}

In this section we introduce the unregularized \emph{causal Dantzig}, discuss its basic properties and an example.  We introduce the  unregularized \emph{causal Dantzig} in Section~\ref{sec:estimator-1}. Asymptotic confidence intervals for low-dimensional estimation are discussed in Section~\ref{sec:confidence-intervals}. Section~\ref{sec:impl-example} provides an example and explains basic usage of the method \texttt{causalDantzig} in the R-package \texttt{InvariantCausalPrediction} \citep{R}. Identifiability and consistency issues are discussed  in  Section~\ref{sec:ident-beta0-cons}. We conclude with a comparison to instrumental variable regression in Section~\ref{sec:comparison-with-instrumental-variables}.
\subsection{The estimator}\label{sec:estimator-1}
Assume that we observe i.i.d. samples of $(X^{e},Y^{e})$ in two environments $e \in  \E = \{1,2\}$ 
 with $n_1,n_2$ samples in each environment. Let $\Xone$ and $\Xtwo$ be the $n_1 \times p$ and $n_2 \times p$-dimensional matrices that contain the realized values of the random variables $X^{e}$ in environment $e=1$ and $e=2$ respectively and let $\Yone\in \mathbb{R}^{n_1}$ and $\Ytwo\in \mathbb{R}^{n_2}$ be the respective measurements of the response variables. Define the differences between the two environments in inner-product and Gram matrices, the so-called \emph{Gram-shift matrices}
\begin{align}\label{eq:3}
\begin{split}
\hat \Z &:= \frac{1}{n_{1}}(\Xone)^t \Yone  - \frac{1}{n_{2}}(\Xtwo)^t \Ytwo  \in \mathbb{R}^p\\
\hat \G &:= \frac{1}{n_{1}}(\Xone)^t \Xone  - \frac{1}{n_{2}}(\Xtwo)^t \Xtwo  \in \mathbb{R}^{p\times p}.
\end{split}
\end{align}
Assuming inner-product invariance holds under $\beta^{0}$,
\[ \mathbb{E}[ \hat \Z - \hat \G \beta^{0}] =0 .\]
A simple estimator of $\beta^{0}$ is the empirical minimizer of the
$\ell_{\infty}$-norm of the differences between  $\hat \Z$ and $\hat \G\beta$.
\begin{definition}[Unregularized causal Dantzig]\label{def:dotproduct} The causal Dantzig estimator $\hat \beta$ is defined as a solution to the optimization problem
\begin{equation}\label{eq:hatbeta} \min_{\beta\in \mathbb{R}^p} \| \hat \Z - \hat \G
\beta\|_{\infty} .\end{equation}
\end{definition}
The choice of how to center and scale variables deserves some attention. We will discuss this in Section~\ref{sec:scaling-variables}.
Causal Dantzig is uniquely defined if and only if $\hat \G$ is invertible and can in this
case be written as
\begin{equation}\label{eq:hatbetaInv} \hat{\beta} =  \hat \G^{-1}  \hat \Z .\end{equation}
Note that by equation \eqref{eq:3} this estimator is closely related to least squares in linear regression. Recall that for observations $\YY \in \mathbb{R}^{n}$ and design matrix $\XX \in \mathbb{R}^{n \times p}$, the least squares estimator is defined as
\begin{equation*}
  \hat \beta_{LS} = \left( \XX^{t} \XX \right)^{-1} \XX^{t} \YY.
\end{equation*}
Causal Dantzig is strikingly similar, with the Gram matrices replaced by
differences of Gram matrices in different settings. As such, it is
straightforward to derive asymptotic confidence intervals for this estimator. Many properties from linear regression do not carry over. For example, the causal Dantzig is only asymptotically unbiased.

\subsection{More than two environments}\label{sec:more-than-two}

There are two straightforward extensions to more than two environments $| \E | > 2$. Pooling data from different environments preserves inner-product invariance. If some of the environments are ``observational'' and  the others  are ``interventional'', one option for splitting the data into two environments $\E' = \{ 1,2\}$ is pooling all observational data ($e'=1$) and pooling all interventional data ($e'=2$). Instead of splitting the data into two environments one can change the definition of the estimator to accommodate for more than two settings, for example by defining $\hat \beta $ as a solution to the optimization problem
\begin{equation}\label{eq:23}
 \min_{\beta\in \mathbb{R}^p} \max_{e \in \E} \| \hat \Z^{e} - \hat \G^{e}
\beta\|_{\infty},\end{equation}
where
\begin{align}\label{eq:24}
\begin{split}
\hat \Z^{e} &:= \frac{1}{n_{e}}(\XE)^t \Ye  -  \frac{1}{|\E|-1} \sum_{\tilde e \neq e} \frac{1}{n_{\tilde e}} (\XX^{\tilde e})^t \YY^{\tilde e}  \in \mathbb{R}^p,\\
\hat \G^{e} &:= \frac{1}{n_{e}}(\XE)^t \XE  - \frac{1}{|\E|-1} \sum_{\tilde e \neq e} \frac{1}{n_{\tilde e}} (\XX^{\tilde e})^t \XX^{\tilde e}  \in \mathbb{R}^{p\times p}.
\end{split}
\end{align}
Note that for two environments, solutions of equation~\eqref{eq:23} coincide with equation~\eqref{eq:hatbeta}. It depends on the type of interventions and the signal strength which of the two options mentioned above is better. If the data can be split into two environments $\E'= \{1,2\}$ that are homogeneous, doing so is preferable as the estimators of $\G^{e'}$ and $\Z^{e'}, e' \in \E'$ have low variance. If the environments $\E$ have different (strong) interventions, solving equation~\eqref{eq:23} can be preferable as the effect of several strong interventions might get ``washed out'' when averaging over many environments. We will return later to the case of more than two
environments. For the following discussion we assume that there are two environments $\E = \{1,2\}$.

\subsection{Confidence intervals}\label{sec:confidence-intervals}

In the settings described above $\hat \beta$ is in general only asymptotically unbiased. This bias is unknown as it depends on the unknown amount of confounding between $X^{e}$ and $Y^{e}$. Hence we will only pursue asymptotic confidence intervals.  We will show that the estimator~\eqref{eq:hatbetaInv} is under certain
conditions asymptotically
normally distributed, that is for $n_{1},n_{2} \rightarrow \infty$,
\begin{equation}\label{eq:asymNorm}  \left( \frac{V^{1}}{n_{1}} + \frac{V^{2}}{n_{2}} \right)^{-\frac{1}{2}} \left(\hat{\beta}-\beta^{0} \right) \rightharpoonup \mathcal{N}_{p}(0,\mathrm{Id}_{p}). \end{equation}
The matrices $V^{1}$ and $ V^{2}$ are positive definite under suitable assumptions and can be consistently estimated from the data as $\hat V^{1}$ and $ \hat V^{2}$ as we will discuss later.
We can then define asymptotically valid
confidence intervals for $\beta^{0}_k$ as
\begin{equation}\label{eq:confIV} I_k = \left[\hat{\beta}_k - q \sqrt{\hat{V}_{kk}}, \hat{\beta}_k + q \sqrt{ \hat{V}_{kk}} \right] ,\end{equation}
where $\hat{V}_{kk}$ is the $k$-th diagonal element of $\hat{V} = \hat V^{1}/n_{1} + \hat V^{2} / n_{2}$ and
$q=\Phi^{-1}(1-\alpha/2)$. Here, $\Phi$ denotes the distribution function of a standard Gaussian random variable. The interval $I_k$ has asymptotic
coverage
\[ \mathbb{P}[\beta^{0}_k \in I_k] \rightarrow 1-\alpha \qquad \mbox{ for  }
n_{1},n_{2}\rightarrow \infty.\]
The conditions for asymptotic normality~\eqref{eq:asymNorm} are
fourth-moment conditions on the observed random variables as well as
conditions that guarantee that $V^{1}$ and $V^{2}$ are invertible and that \emph{causal Dantzig} is unique.

\begin{theorem}[Asymptotic normality]\label{theorem:confidence-intervals}  Let $(X^{1},Y^{1})$ and $(X^{2},Y^{2})$ have finite fourth moments and assume that inner product invariance holds under $\beta^{0}$.
Assume that  $(\Xone,\Yone) $ and $(\Xtwo,\Ytwo) $ are independent. Define $\G := \mathbb{E} [ \hat \G ]$ and $\Z := \mathbb{E} [ \hat \Z]$ and let  $\G$ and the covariance matrix of $X^{e} \eta_{p+1}^{e}$, $e \in \E$ be invertible. For $n_{1},n_{2}  \rightarrow \infty$,
\begin{equation*}
  \left( \frac{V^{1}}{n_{1}} + \frac{V^{2}}{n_{2}} \right)^{-\frac{1}{2}} \left( \hat \beta - \beta^{0} \right) \rightharpoonup \mathcal{N} \left(0,\mathrm{Id}_{p} \right),
\end{equation*}
where $V^{e} :=  \mathrm{Cov}(\G^{-1}   (X^{e})^{t} \eta_{p+1}^{e})$, $e \in \{1,2\}$ are invertible.
Note that we allow $n_{1}$ and $n_{2}$ to have different asymptotic growth rates.
\end{theorem}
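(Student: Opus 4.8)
\emph{Proof proposal.} The plan is to reduce the statement to a central-limit theorem for the score-type quantity $\hat\Z-\hat\G\beta^{0}$ and then to transfer this through the random matrix $\hat\G^{-1}$ by a Slutsky argument. Since $\G=\mathbb{E}[\hat\G]$ is invertible and $\hat\G\to\G$ in probability by the law of large numbers, $\hat\G$ is invertible on an event of probability tending to one, on which $\hat\beta=\hat\G^{-1}\hat\Z$ and hence
\[
\hat\beta-\beta^{0}=\hat\G^{-1}\big(\hat\Z-\hat\G\beta^{0}\big)=:\hat\G^{-1}W .
\]
Using the structural equation $Y^{e}-X^{e}\beta^{0}=\eta_{p+1}^{e}=:\varepsilon^{e}$, the vector $W$ decomposes as a difference of two independent, environment-wise i.i.d.\ averages, $W=\frac{1}{n_{1}}\sum_{i=1}^{n_{1}}X^{1}_{i}\varepsilon^{1}_{i}-\frac{1}{n_{2}}\sum_{j=1}^{n_{2}}X^{2}_{j}\varepsilon^{2}_{j}$. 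Inner-product invariance states precisely that $c:=\mathbb{E}[X^{e}\varepsilon^{e}]$ does not depend on $e$, so the two means cancel and $\mathbb{E}[W]=0$; the finite fourth moments of $(X^{e},Y^{e})$ guarantee that $\Sigma^{e}:=\mathrm{Cov}(X^{e}\varepsilon^{e})=\mathrm{Cov}(X^{e}\eta^{e}_{p+1})$ is finite, and by assumption it is invertible.

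Next I would prove a central-limit theorem for $W$ in the normalization used by the theorem. Set $\Lambda_{n}:=\Sigma^{1}/n_{1}+\Sigma^{2}/n_{2}=\mathrm{Cov}(W)$; since $V^{e}=\G^{-1}\Sigma^{e}(\G^{-1})^{t}$, the target normalization is $\tilde\Lambda_{n}:=V^{1}/n_{1}+V^{2}/n_{2}=\G^{-1}\Lambda_{n}(\G^{-1})^{t}$, which is symmetric positive definite. Because the paper explicitly allows $n_{1},n_{2}$ to grow at different rates, one cannot factor out a common $\sqrt{n}$ and invoke the classical CLT; instead I would apply the Lindeberg--Feller CLT to the triangular array obtained by stacking the centered summands $\pm\,\tilde\Lambda_{n}^{-1/2}\G^{-1}\tfrac{1}{n_{e}}(X^{e}_{i}\varepsilon^{e}_{i}-c)$ from both environments. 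These are independent, mean zero, and their covariances sum to $\mathrm{Id}_{p}$ by construction. Reducing the multivariate Lindeberg condition to the scalar case by the Cramér--Wold device, it suffices that the weights vanish uniformly: from $\tilde\Lambda_{n}\succeq V^{e}/n_{e}$ one gets $\|\tilde\Lambda_{n}^{-1/2}\|_{\mathrm{op}}=O(\sqrt{n_{\min}})$, hence
\[
\frac{\|\tilde\Lambda_{n}^{-1/2}\G^{-1}\|_{\mathrm{op}}}{n_{e}}\le\frac{\|\tilde\Lambda_{n}^{-1/2}\|_{\mathrm{op}}\,\|\G^{-1}\|_{\mathrm{op}}}{n_{e}}=O\!\big(n_{\min}^{-1/2}\big)\to0 ,
\]
since $n_{e}\ge n_{\min}$ for both $e$. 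Together with the finite second moments of $X^{e}\varepsilon^{e}$ this yields the Lindeberg bound by dominated convergence, so $\tilde\Lambda_{n}^{-1/2}\G^{-1}W\rightharpoonup\mathcal{N}(0,\mathrm{Id}_{p})$.

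It then remains to replace the population matrix $\G^{-1}$ by the estimated $\hat\G^{-1}$. Writing
\[
\tilde\Lambda_{n}^{-1/2}\big(\hat\beta-\beta^{0}\big)=\tilde\Lambda_{n}^{-1/2}\G^{-1}W+\tilde\Lambda_{n}^{-1/2}\big(\hat\G^{-1}-\G^{-1}\big)W ,
\]
I would bound the remainder by $\|\tilde\Lambda_{n}^{-1/2}\|_{\mathrm{op}}\,\|\hat\G^{-1}-\G^{-1}\|_{\mathrm{op}}\,\|W\|$. Here $\|\tilde\Lambda_{n}^{-1/2}\|_{\mathrm{op}}=O(\sqrt{n_{\min}})$ as above, while $\mathbb{E}\|W\|^{2}=\mathrm{tr}(\Lambda_{n})=O(n_{\min}^{-1})$ gives $\|W\|=O_{P}(n_{\min}^{-1/2})$ by Markov, and $\hat\G^{-1}-\G^{-1}=o_{P}(1)$ by continuity of matrix inversion at the invertible $\G$. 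The product is therefore $O(\sqrt{n_{\min}})\cdot o_{P}(1)\cdot O_{P}(n_{\min}^{-1/2})=o_{P}(1)$, and Slutsky's theorem yields the claimed convergence. (Invertibility of the $V^{e}$, asserted in the statement, follows from invertibility of $\G$ and of $\Sigma^{e}$.)

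I expect the main obstacle to be exactly the bookkeeping forced by the unequal sample sizes: the normalization $\tilde\Lambda_{n}$ does not scale like a single power of $n$, so the central-limit step must be run as a triangular-array Lindeberg--Feller argument with matrix normalization rather than a fixed $\sqrt{n}$, and the plug-in error from $\hat\G^{-1}$ must be controlled uniformly across the two rates. The invertibility of $\mathrm{Cov}(X^{e}\eta^{e}_{p+1})$ is precisely what keeps $\lambda_{\min}(\tilde\Lambda_{n})$ bounded below on the $1/n_{\min}$ scale, and this single fact drives both the vanishing-weight verification in the Lindeberg step and the negligibility of the remainder.
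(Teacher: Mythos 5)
Your proposal is correct and follows essentially the same route as the paper: the paper applies the delta method to $f(\tilde\G,\tilde\Z)=\tilde\G^{-1}\tilde\Z$, whose linearization $-\G^{-1}(\hat\G-\G)\G^{-1}\Z+\G^{-1}(\hat\Z-\Z)$ is exactly your leading term $\G^{-1}(\hat\Z-\hat\G\beta^{0})$ once one uses $\Z=\G\beta^{0}$, and then invokes the CLT with independence across environments. Your version merely makes explicit what the paper leaves terse --- the exact identity $\hat\beta-\beta^{0}=\hat\G^{-1}(\hat\Z-\hat\G\beta^{0})$ in place of a Taylor expansion, the triangular-array Lindeberg--Feller argument needed because $n_{1}$ and $n_{2}$ may grow at different rates, and the $O(\sqrt{n_{\min}})$ bound on $\|(V^{1}/n_{1}+V^{2}/n_{2})^{-1/2}\|$ that makes the remainder negligible --- so it is a more careful write-up of the same proof rather than a different one.
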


\begin{remark}[Estimation of $V^{1}$ and $V^{2}$]
The empirical covariance matrix of
\begin{equation*}
  - \hat \G^{-1} \left( \XX_{i\cdot}^{1} \right)^t \XX_{i\cdot}^{1}  \hat \G^{-1} \hat \Z+
  \hat \G^{-1} \left( \XX_{i\cdot}^{1} \right)^t \YY_i^{1} , \mbox{ } i=1,...,n_{1},
\end{equation*}
is a consistent estimator of $V^{1}$. $V^{2}$ can be estimated analogously.
\end{remark}

The proof of this result can be found in the Appendix.  The assumption that $\G$ is invertible will be discussed further in Section~\ref{sec:ident-beta0-cons}. In Section~\ref{sec:high-dim} we will discuss how the regularized \emph{causal Dantzig} can be consistent in some situations where population $\G$ is not invertible. Asymptotic efficiency  is discussed in Section~\ref{sec:asympt-effic} in the Appendix. 

\subsection{Implementation and example}\label{sec:impl-example}

\begin{figure}
\begin{center}
\includegraphics[width=0.4\textwidth]{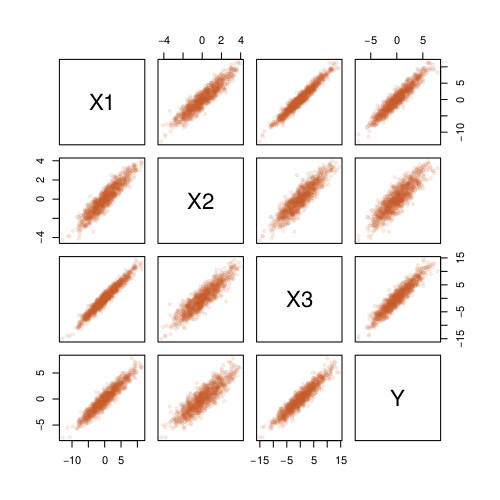}
\includegraphics[width=0.4\textwidth]{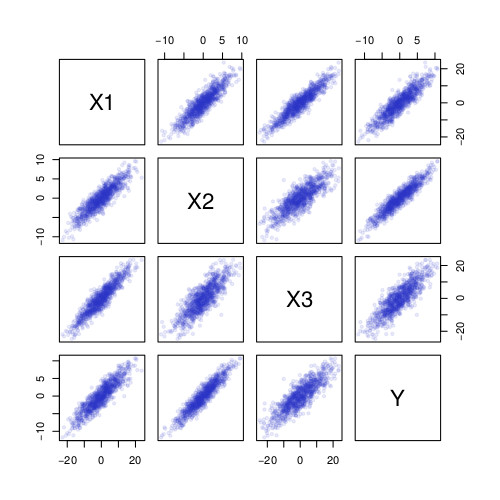}
\caption{ \label{fig:scatter} {\it The scatterplot of the variables in
  the graph of Figure~\ref{fig:H} and SEM~\eqref{eq:SEMex} for
  environment~$1$ (red, left panel) and environment~$2$ (blue,
  right panel). The estimate is based on the difference in the two
  Gram matrices.}}
\end{center}
\end{figure}

We use data generated according to a SEM with the structure given by Figure~\ref{fig:H} as an example.
Suppose the data are generated in two environments $\{1,2\}= \E$
according to
\begin{equation}\label{eq:SEMex}
 \left\{\begin{array}{rcrrrr} X_2^{e} & \leftarrow & &&\eta^{0} + &\sigma^{e} \eta_2 \\
Y^{e} &\leftarrow&  &X_2^{e} + &\eta^{0} + &\eta_y \\
 X_1^{e} & \leftarrow& Y^{e} +& X_2^{e} + &&\sigma^{e}\eta_1   \\
X_3^{e} & \leftarrow & &X_1^{e} + &\eta^{0} + &\sigma^{e}\eta_3
\end{array} \right. ,
\end{equation}
where $(\eta^{0},\eta_y,\eta_1,\eta_2,\eta_3)$ is assumed to be drawn
 from $\mathcal{N}_{5}(0,\mathrm{Id}_5)$ and the noise variances are
$\sigma^{e}=1$ for environment $e=1$ and $\sigma^{e}=4$ for environment
$e=2$. We draw $1000$ i.i.d. samples from each environment and the
corresponding pairwise scatterplots are shown in
Figure~\ref{fig:scatter}.
For one realization we obtain the estimate $\hat{\beta}$ via the
difference in Gram matrices $\G$ and inner products with the target
$\Z$ as
\begin{equation*}
\hat \G = \left( \begin{array}{ccc}
15.9 & 6.5 & 16.1 \\
6.5 & 3.2  &  6.5 \\
 16.1 & 6.5 & 19.1 \\
\end{array} \right), \hat \Z = \left( \begin{array}{c} 6.4 \\   3.2  \\ 6.5 \end{array} \right) \quad \Rightarrow \hat{\beta} = \hat \G^{-1} \hat \Z = \left( \begin{array}{c} -0.04 \\   1.00 \\ 0.03 \end{array} \right),
\end{equation*}
where the correct vector of causal coefficients in this problem is
\begin{equation*}
\beta^{0}= \left( \begin{array}{c} 0 \\   1  \\
 0 \end{array} \right).
\end{equation*}
Asymptotic confidence intervals can be computed via~\eqref{eq:confIV}.

The
procedure is implemented as method \texttt{causalDantzig} in the R-package
\texttt{InvariantCausalPrediction} \citep{R}. The output for the
example above is shown below, where $X$ is the matrix with predictor
variables, $Y$ the outcome of interest and $E$ is an $n$-dimensional
vector with entries $1$ for samples from environment $e=1$ and entries
$2$ for samples from environment $e=2$.

\begin{center}
\begin{small}
\begin{verbatim}

> fit <- causalDantzig(X,Y,E,regularization=FALSE)
> print(fit)
Unregularized causal Dantzig
Call:
causalDantzig(X = X, Y = Y, E = E, regularization = FALSE)

   Estimate StdErr p.value
X1   -0.042  0.059   0.481
X2    0.999  0.106  <2e-16 ***
X3    0.035  0.042   0.403
---
Signif. codes:  0 ‘***’ 0.001 ‘**’ 0.01 ‘*’ 0.05 ‘.’ 0.1 ‘ ’ 1

\end{verbatim}
\end{small}
\end{center}

Only the direct causal effect of the second variable turns out to be statistically significant.
Note that in this setting, instrumental variables regression would fail. One problem is that the number of covariates exceeds the number of ``instruments''. Additionally, the expectation of $X^{1}$ and $X^{2}$ are equal, implying that there is no mean shift due to the two environments. We will discuss these issues in more detail in Section~\ref{sec:comparison-with-instrumental-variables}.

\subsection{Identifiability of $\beta^{0}$ and practical implications}\label{sec:ident-beta0-cons}

In the simplest setting, the number of samples greatly exceeds the
number of parameters, and the interventions $\delta^{e}$, $e \in  \E$  are sufficiently different to make the parameter $\beta^{0}$ identifiable. Theorem~\ref{theorem:popul-g-invert} gives conditions under which this is the case.
\begin{theorem}\label{theorem:popul-g-invert}
Consider a SEM that satisfies Assumption~\ref{assum:additive}. Assume that there exists an ``observational'' environment, i.e. an environment $e \in \E$ with $\delta^{e} \equiv 0$. Furthermore assume that all interventions $\delta^{e}$ are full-rank on its support, i.e.\ that the Gram matrix of $\delta_{S^{e}}^{e}$ is positive definite for $S^{e} = \{ k :  \delta_{k}^{e} \not \equiv 0\}$.
\begin{enumerate}
  \item The causal coefficient is identifiable in the population case if and only if for each $k=1,\ldots,p$ there exists $e \in \E$ such that $\delta_{k}^{e} \not \equiv 0$.
  \item If the  condition in 1. holds then the solution of \emph{causal Dantzig} as defined in equation~\eqref{eq:23} is unique in the population case and equal to $\beta^{0}$.
\end{enumerate}
\end{theorem}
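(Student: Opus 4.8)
The plan is to reduce both statements to a single linear-algebraic fact about the population Gram-shift matrices $\G^{e} := \mathbb{E}[\hat\G^{e}]$ and $\Z^{e} := \mathbb{E}[\hat\Z^{e}]$, $e \in \E$, defined as in \eqref{eq:24}. By Proposition~\ref{proposition:const} applied to each environment, inner-product invariance gives $\Z^{e} = \G^{e}\beta^{0}$ for every $e \in \E$; hence the population version of the objective in \eqref{eq:23} attains its minimal value $0$ at $\beta^{0}$, and its entire solution set equals $\beta^{0} + \bigcap_{e \in \E}\ker \G^{e}$. The same affine set also describes exactly those $\beta$ under which inner-product invariance holds. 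Consequently $\beta^{0}$ is identifiable, and simultaneously the causal Dantzig solution is unique and equal to $\beta^{0}$, if and only if $\bigcap_{e \in \E}\ker \G^{e} = \{0\}$. It therefore suffices to characterize this common kernel in terms of the supports $S^{e}$.

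Next I would compute $\G^{e}$ from the reduced form $X^{e} = P B\,\eta^{e}$, where $B := (\mathrm{Id}_{p+1}-A)^{-1}$ and $P \in \mathbb{R}^{p \times (p+1)}$ projects onto the first $p$ coordinates. By Assumption~\ref{assum:additive}, $\eta^{e} \stackrel{d}{=} \eta^{0} + \delta^{e}$ with $\mathrm{Cov}(\eta^{0},\delta^{e}) = 0$ and $\mathbb{E}[\eta^{0}] = 0$, so the cross terms vanish and $\mathbb{E}[X^{e}(X^{e})^{t}] = P B(\Sigma^{0} + D^{e})B^{t}P^{t}$ with $\Sigma^{0} := \mathbb{E}[\eta^{0}(\eta^{0})^{t}]$ and $D^{e} := \mathbb{E}[\delta^{e}(\delta^{e})^{t}]$. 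The contribution $P B \Sigma^{0} B^{t} P^{t}$ is identical across environments and therefore cancels in every Gram-shift. Since $\delta^{e}_{p+1} \equiv 0$, writing $\tilde D^{e}$ for the $p \times p$ second-moment matrix of $\delta^{e}_{1:p}$ and $M := P B P^{t}$ yields $\G^{e} = \tfrac{|\E|}{|\E|-1}\,M(\tilde D^{e} - \bar{\tilde D})M^{t}$, where $\bar{\tilde D} := |\E|^{-1}\sum_{e}\tilde D^{e}$. The step I expect to be the main obstacle is proving that $M$ is invertible. Because $A_{p+1,p+1}=0$, the $(p+1,p+1)$ entry of $\mathrm{Id}_{p+1}-A$ equals $1$, and the Schur-complement formula for the top-left block of $(\mathrm{Id}_{p+1}-A)^{-1}$ gives $M^{-1} = \mathrm{Id}_{p} - A_{1:p,1:p} - A_{1:p,p+1}(\beta^{0})^{t}$; this encodes the exclusion restriction that $Y$ carries no environment-specific intervention and lets me pass between $\beta$ and $M^{t}\beta$ freely.

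Finally, invertibility of $M$ reduces $\bigcap_{e}\ker \G^{e} = \{0\}$ to $\bigcap_{e}\ker(\tilde D^{e} - \bar{\tilde D}) = \{0\}$ via the substitution $\gamma = M^{t}\beta$. Here I would invoke the assumed observational environment $e_{0}$ with $\tilde D^{e_{0}} = 0$: if $\gamma$ lies in the common kernel then $\tilde D^{e}\gamma = \bar{\tilde D}\gamma$ is independent of $e$ and vanishes at $e_{0}$, so $\bar{\tilde D}\gamma = 0$ and hence $\tilde D^{e}\gamma = 0$ for all $e$; the converse is immediate, giving $\bigcap_{e}\ker(\tilde D^{e}-\bar{\tilde D}) = \bigcap_{e}\ker \tilde D^{e}$. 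Each $\tilde D^{e}$ is positive semidefinite and, by the full-rank assumption, positive definite on $S^{e}$, so $\ker \tilde D^{e} = \{\gamma : \gamma_{S^{e}} = 0\}$ and $\bigcap_{e}\ker \tilde D^{e} = \{\gamma : \gamma_{k} = 0 \text{ for all } k \in \bigcup_{e} S^{e}\}$. This is trivial precisely when $\bigcup_{e} S^{e} = \{1,\ldots,p\}$, i.e.\ when for each $k$ there is an $e$ with $\delta^{e}_{k} \not\equiv 0$; this proves part~1, and under the same condition the solution set $\beta^{0} + \bigcap_{e}\ker \G^{e}$ collapses to $\{\beta^{0}\}$, proving part~2. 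The only remaining routine work is the positive-semidefinite kernel computations and the bookkeeping behind the $\Sigma^{0}$-cancellation.
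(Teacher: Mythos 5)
Your reduction of part~2 and of the ``if'' direction of part~1 to the statement $\bigcap_{e\in\E}\ker \G^{e}=\{0\}$ is sound, and your computation of $\G^{e}$, the cancellation of the $\eta^{0}$-contribution, the identification of the common kernel with $\{\gamma:\gamma_{k}=0 \text{ for } k\in\cup_{e}S^{e}\}$ via the observational environment and the full-rank-on-support assumption, all match the paper's Part~A. Your Schur-complement argument for the invertibility of $M=((\mathrm{Id}-A)^{-1})_{1:p,1:p}$ (using $A_{p+1,p+1}=0$) is in fact cleaner than the paper's, which argues by contradiction from positive definiteness of the Gram matrix of $(X^{e},Y^{e})$; either works.

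The genuine gap is in the ``only if'' direction of part~1. You assert that $\beta^{0}$ is identifiable if and only if the set of vectors satisfying inner-product invariance is a singleton, i.e.\ you \emph{define} non-identifiability as non-uniqueness of the solution to the population moment equations $\Z^{e}=\G^{e}\beta$. But identifiability here means (as the discussion following the theorem makes explicit) that every SEM satisfying Assumption~\ref{assum:additive}, possessing an observational environment and full-rank interventions, and generating the observed distributions of $(X^{e},Y^{e})$, $e\in\E$, has the same coefficient $A_{p+1,1:p}$. Non-uniqueness of the inner-product-invariant vector does not by itself produce a second such SEM: a vector $\beta^{0}+\gamma\Delta$ with $\Delta\in\bigcap_{e}\ker\G^{e}$ is merely another solution of the estimating equations, and you must still exhibit structural coefficients, noise and interventions realizing it. This is the content of the paper's Part~B, which is most of the work: one sets $\tilde A_{p+1,\bullet}=(\beta^{0}+\gamma\Delta,\,0)^{t}$ with $\gamma$ small enough that $\mathrm{Id}-\tilde A$ is invertible, keeps $\tilde\delta^{e}=\delta^{e}$, defines $\tilde\eta^{0}=(\mathrm{Id}-\tilde A)(\mathrm{Id}-A)^{-1}(\eta^{0})^{t}$, and then verifies both that the new SEM reproduces the distributions --- the key identity being $(\mathrm{Id}-\tilde A)(\mathrm{Id}-A)^{-1}(\delta^{e})^{t}=(\delta^{e})^{t}$, which follows from $\Delta$ annihilating the reduced-form interventions --- and that it still satisfies all hypotheses of the theorem. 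Without this construction the necessity claim is unproven.
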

The proof of this result can be found in the Appendix. Usually, there are many different SEMs satisfying Assumption~\ref{assum:additive} that can generate a given observed distribution of $(X^{e},Y^{e}),e \in \E$. Theorem~\ref{theorem:popul-g-invert} gives a condition under which these SEMs all share the same direct causal effect $\beta^{0}$ from $X_{1},\ldots,X_{p}$ to $Y$. If said condition is satisfied, the \emph{causal Dantzig} has a unique solution in the population case that is equal to $\beta^{0}$. Furthermore, it tells us that if this condition is not satisfied, there exist at least two SEMs satisfying Assumption~\ref{assum:additive} with different direct causal effects from $X_{1},\ldots,X_{p}$ to $Y$ that generate the given distribution. Without further assumptions it is then not possible to consistently estimate the direct causal effects, but only a set of potential causal effects. We will characterize this set later.

Note that  Theorem~\ref{theorem:popul-g-invert} describes a rather strong condition for identifiability. Especially if $p$ is large it might be unrealistic to have nonzero interventions $\delta_{k}^{e}$ on each of the variables $X_{k}, k=1,\ldots,p$. However, making additional assumptions can help resolve these identifiability issues.
If the interventions $\delta_{k}^{e}$ only act on a subset of the variables $X_{1},\ldots,X_{p}$ or when the number of covariates exceeds the sample size $p > n$, the  regularized \emph{causal Dantzig} can be consistent under the additional assumption of sparsity. We discuss consistency of the regularized \emph{causal Dantzig} in such scenarios in Section~\ref{sec:finite-sample-bound} and Section~\ref{sec:behaviourccif}.
Alternatively, it can be advisable to first run LASSO on the pooled dataset to select a subset of the variables. Under the assumption of faithfulness, it is sufficient to have nonzero interventions on the selected subset. Some justification for this approach can be found in Section~\ref{sec:mark-blank-estim-1}.

If the assumptions for identifiability of $\beta^{0}$ are not fulfilled it should still be possible to guarantee predictive performance under certain new  environments. The following theorem makes this intuition more precise.  The proof can be found in the Appendix.
\begin{theorem}\label{thm:partiali}
Consider a SEM that satisfies Assumption~\ref{assum:additive}. Assume that there exists an ``observational'' environment, i.e. an environment $e \in \E$ with $\delta^{e} \equiv 0$. Furthermore assume that all interventions $\delta^{e}$ are full-rank on its support, i.e.\ that the Gram matrix of $\delta_{S^{e}}^{e}$ is positive definite for $S^{e} = \{ k :  \delta_{k}^{e} \not \equiv 0\}$. Let $\beta$ be a solution of causal Dantzig as defined in equation~\eqref{eq:23} in the population case.
\begin{enumerate}
  \item Then the distribution of the residuals is invariant, i.e.
\begin{equation*}
  Y^{e} - X^{e} \beta  \stackrel{d}{=} Y^{f} - X^{f} \beta \text{ for all } e,f \in \E.
\end{equation*}
\item For a new  environment $\tilde e \not \in \E$ that satisfies Assumption~\ref{assum:additive} for $(X^{e},Y^{e})$, $e \in \E \cup \{ \tilde e \}$ with $\{k : \delta_{k}^{\tilde e} \not \equiv 0 \} \subset \cup_{e \in \E} S^{e}$, we have
\begin{equation*}
  Y^{e} - X^{e} \beta  \stackrel{d}{=} Y^{\tilde e} - X^{\tilde e} \beta \text{ for all } e \in \E.
\end{equation*}

\end{enumerate}

\end{theorem}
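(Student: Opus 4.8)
The plan is to reduce the distributional statement to a single algebraic condition on one vector, and then use the full-rank-on-support hypothesis to upgrade a second-moment identity to an almost-sure one, which is what turns ``equal inner products'' into ``equal residual distributions.''

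First I would fix coordinates adapted to the SEM. Write $B := (\mathrm{Id}_{p+1}-A)^{-1}$, so that the full vector $(X_1^e,\dots,X_{p+1}^e)^t = B\eta^e$, and set $c := (-\beta_1,\dots,-\beta_p,1)^t\in\mathbb{R}^{p+1}$ and $v := B^t c$. Then the residual is a fixed linear functional of the noise, $R^e := Y^e - X^e\beta = c^t B\eta^e = v^t\eta^e$, with the \emph{same} $v$ in every environment. The vector $\beta^0$ is distinguished by the identity $B^t(-\beta^0,1)^t = \mathbf{e}_{p+1}$ (the $(p+1)$-th canonical basis vector), which holds because the $(p+1)$-th row of $\mathrm{Id}_{p+1}-A$ equals $(-\beta^0,1)$; hence, writing $\gamma:=\beta-\beta^0$ and $\tilde\gamma:=(\gamma,0)^t\in\mathbb{R}^{p+1}$, one has $v = \mathbf{e}_{p+1}-B^t\tilde\gamma$. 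I would also record that, since $\beta^0$ satisfies inner-product invariance (Proposition~\ref{proposition:const}), the objective in~\eqref{eq:23} attains the value $0$ at $\beta^0$; therefore any population solution $\beta$ also attains $0$, which forces $\Z^e=\G^e\beta$ for every $e\in\E$, and a short averaging argument shows this is equivalent to inner-product invariance under $\beta$, i.e.\ to constancy in $e$ of $m_e:=\mathbb{E}[X^e(Y^e-X^e\beta)]$.

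Next I would convert inner-product invariance into an algebraic identity. Setting $\Sigma^0:=\mathbb{E}[\eta^0(\eta^0)^t]$ and $M^e:=\mathbb{E}[\delta^e(\delta^e)^t]$, the hypotheses $\mathbb{E}[\eta^0]=0$ and $\mathrm{Cov}(\eta^0,\delta^e)=0$ give $\mathbb{E}[\eta^e(\eta^e)^t]=\Sigma^0+M^e$, so $m_e$ equals the first $p$ coordinates of $B(\Sigma^0+M^e)v$. The observational environment supplies one index with $M^e\equiv 0$, so constancy of $m_e$ is equivalent to the statement that the first $p$ coordinates of $BM^e v$ vanish for every $e\in\E$. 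The crucial step is then to deduce $v^t M^e v=0$. Writing $u:=M^e v$, the exclusion restriction $\delta^e_{p+1}\equiv 0$ makes the last row of $M^e$ zero, so $u_{p+1}=0$; the previous sentence gives $(Bu)_k=0$ for $k\le p$; and combining these with $v=\mathbf{e}_{p+1}-B^t\tilde\gamma$ yields
\[
v^t u = \mathbf{e}_{p+1}^t u - \tilde\gamma^t(Bu) = u_{p+1} - \sum_{k=1}^p \gamma_k (Bu)_k = 0 .
\]
I expect this to be the main obstacle: it is exactly the point where the exclusion restriction on $Y$ (through $u_{p+1}=0$) and the vanishing of the first $p$ coordinates of $Bu$ conspire, so that the possibly nonzero $(p+1)$-th coordinate of $Bu$ never contributes.

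Finally I would harvest the consequences. Since $M^e$ is supported on $S^e\times S^e$ and is positive definite there, $0=v^t M^e v = v_{S^e}^t M^e_{S^e S^e} v_{S^e}$ forces $v_{S^e}=0$; in particular $v$ vanishes on $S:=\cup_{e\in\E}S^e$. Moreover $\mathbb{E}[(v^t\delta^e)^2]=v^t M^e v=0$, so $v^t\delta^e=0$ almost surely, whence $R^e=v^t\eta^e\stackrel{d}{=}v^t\eta^0+v^t\delta^e=v^t\eta^0$; as the right-hand side does not depend on $e$, this proves the residual invariance of part~1. For part~2, the new environment has $R^{\tilde e}=v^t\eta^{\tilde e}$ with the same $v$ (the structural matrix $A$, hence $B$, is unchanged by Assumption~\ref{assum:additive}), and its intervention support satisfies $S^{\tilde e}\subseteq S$; since $v_S=0$ we get $v^t M^{\tilde e} v = v_{S^{\tilde e}}^t M^{\tilde e}_{S^{\tilde e}S^{\tilde e}} v_{S^{\tilde e}}=0$, hence $v^t\delta^{\tilde e}=0$ almost surely and $R^{\tilde e}\stackrel{d}{=}v^t\eta^0\stackrel{d}{=}R^e$ for all $e\in\E$. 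The only genuinely delicate point is the identity $v^t M^e v=0$; everything else is bookkeeping with the SEM representation and the positive-definiteness of the interventions on their supports.
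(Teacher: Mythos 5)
Your proof is correct, and its overall strategy is the same as the paper's: write the residual as a fixed linear functional of the noise, use inner-product invariance at the solution $\beta$ together with the observational environment to show that this functional annihilates the interventions, and then invoke positive definiteness of the Gram matrices on the supports to upgrade $\mathbb{E}[(v^t\delta^e)^2]=0$ to almost-sure vanishing. Where you genuinely deviate is in the key annihilation step. The paper works with the $p$-dimensional vector $\tilde\beta=(\mathrm{Id}-A)^{-t}_{1:p,p+1}-(\mathrm{Id}-A)^{-t}_{1:p,1:p}\beta$, reduces the observational environment's estimating equation to $(\mathrm{Id}-A)^{-1}_{1:p,1:p}\sum_{e\neq e'}\mathbb{E}[(\delta^e_{1:p})^t\delta^e_{1:p}]\,\tilde\beta=0$, and must then invoke the invertibility of the principal submatrix $(\mathrm{Id}-A)^{-1}_{1:p,1:p}$ — a nontrivial fact established in the proof of Theorem~\ref{theorem:popul-g-invert} via the positive definiteness of the Gram matrix of $(X^e,Y^e)$ — before it can strip off the leading matrix. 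Your computation $v^t M^e v = u_{p+1}-\tilde\gamma^t(Bu)=0$, resting on the identity $v=\mathbf{e}_{p+1}-B^t\tilde\gamma$ and the exclusion restriction, bypasses that invertibility lemma entirely, which is a genuine simplification. Working with the full $(p+1)$-dimensional vector $v=B^tc$ also keeps the $\eta^0_{p+1}$ contribution to the residual explicit (the paper's displayed identity for $Y^e-X^e\beta$ omits it, though this does not affect the invariance conclusion since that term is environment-independent). Your per-environment derivation of $B_{1:p,\bullet}M^ev=0$ from constancy of $m_e$, rather than from the single observational-environment equation involving the sum over interventional environments, is an immaterial difference: both yield $v^tM^ev=0$ for each $e$ after the quadratic-form step.
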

In words, solutions of \emph{causal Dantzig} guarantee that the residuals have the same distribution across all environments $e \in \E$. Perhaps more importantly, solutions of  \emph{causal Dantzig} are guaranteed to have the same predictive performance on new environments $\tilde e \not \in \E$ with arbitrary large additive perturbations $\delta_{k}^{\tilde e}$ as long as these perturbations act on a subset of the variables $\cup_{e \in \E} S^{e}$.

\subsection{Comparison with instrumental variables}\label{sec:comparison-with-instrumental-variables}

Consider a setting where the underlying DAG takes the following form:
\begin{center}
\begin{tikzpicture}[->,>=latex,shorten >=1pt,auto,node distance=1.2cm,
                    thick]
  \tikzstyle{every state}=[draw=black,text=black, inner sep=0.4pt, minimum size=17pt]

  \node[state] (Y) {$Y$};
  \node[state] (H) [above left of=Y] {$H$};
  \node[state] (X) [below left of=H] {$X$};
  \node[state] (E) [left of=X] {$e$};

\draw  (X)  edge  (Y);
\draw  (H)   edge  (Y);
\draw  (H)  edge (X);
\draw  (E)  edge (X);

\end{tikzpicture}
\end{center}
We assume that $H$ is not observed and that $e$ takes values  in $ \{1,2\}$. To be able to use the \emph{causal Dantzig}, we have to define settings $\E$. It is rather straightforwards to write $(X^{1},Y^{1})$ for the variables $(X,Y)$ conditioned on $e=1$ and $(X^{2},Y^{2})$ for the variables $(X,Y)$ conditioned on $e=2$. As $e$ is binary, the method of instrumental variables (IV) coincides with the Wald estimator  \citep{wald1940fitting}. In the population case it can be written as
\begin{equation}\label{eq:4}
 \lim_{n \rightarrow \infty}  \hat \beta_{\text{IV}} =  \frac{\mathbb{E}[Y | e = 1] - \mathbb{E}[Y| e = 2]}{\mathbb{E}[X | e = 1] - \mathbb{E}[X | e = 2]}  = \frac{\mathbb{E}[Y^{1}] - \mathbb{E}[Y^{2}]}{\mathbb{E}[X^{1}] - \mathbb{E}[X^{2}]}.
\end{equation}
Causal Dantzig leads to
\begin{equation}\label{eq:5}
    \lim_{n \rightarrow \infty }\hat \beta =  \frac{\mathbb{E}[X^{1} \cdot Y^{1}] - \mathbb{E}[X^{2} \cdot Y^{2}]}{\mathbb{E}[(X^{1})^{2}] - \mathbb{E}[(X^{2})^{2}]}.
\end{equation}
\begin{center}
\begin{table}[!b]
\begin{tabular}{ l || c | c }
  Consistency & $X = \alpha e + H + \eta_{x}$  &  $X = H + (1+\alpha e ) \eta_{x}$ \\
 & (mean-shift) & (change in error distribution) \\
\hline
  Instrumental variable regression& yes & no \\
  Unregularized causal Dantzig & yes & yes \\
\end{tabular}
\caption{Consistency of the \emph{causal Dantzig} and the instrumental variables approach. Consider a model $Y=  \beta X + H + \eta_{y}$ and a structural equation model for $X$ as depicted in the table. The case on the left is a mean-shift, whereas on the right hand side the error variance changes between setting $e=1$ and $e=2$. We assume $\alpha \neq 0$, and that the random variables $e,\eta_{y},\eta_{x},H$ are independent and non-degenerate with $\mathbb{E}[H]=0$.} \label{table:ivvscd}
\end{table}
\end{center}
Both the IV approach and the \emph{causal Dantzig} have different strengths
and weaknesses  in this setting. For example, equation~\eqref{eq:4} is
based on  means, whereas equation~\eqref{eq:5} is based on
 covariances. If, say, $X= e \cdot \eta_{x} + H$, $Y = \beta X + H
+ \eta_{y}$, with centered noise $\eta_{x},\eta_{y} $ independent of the centered confounder $ H$, then $\mathbb{E}[X | e = 1] = \mathbb{E}[X | e = 2]$.   Hence the IV estimator is not well-defined in the population case and one should use the causal Dantzig. If the instrument is weak, causal Dantzig can exhibit efficiency gains. An example of this can be found in Section~\ref{sec:hidd-instr-vari}. A more general comparison  can be found in Table~\ref{table:ivvscd}. It is also possible to construct examples where equation~\eqref{eq:5} is not
well-defined. For this to happen, the second moments of $X^{1}$ and
$X^{2}$ have to be equal.  \\
A drawback of the IV approach is that the number
of instruments has to equal or exceed the number of endogenous
variables. However, this is not necessary for the \emph{causal Dantzig}. Two
settings  $|\E|  = 2 $ in our framework  correspond to a single
binary exogenous variable. In that case the number of endogenous
variables $p$ can be arbitrarily large as long as $ \G$, the difference of Gram matrices, is invertible. On the other hand, for $p > 2$ the number of endogenous variables exceeds the number of exogenous variables and the IV approach is bound to fail. We compare the performance of the IV approach and \emph{causal Dantzig} on simulated datasets in Section~\ref{sec:hidd-instr-vari}.

\subsection{Inner-product invariance in the potential outcome framework}\label{sec:pot-out}
In this section we will investigate the notion of \emph{inner-product invariance} under potential outcome assumptions \citep{neyman1923application,rubin1974estimating}. Note that here, as in the rest of the paper, we consider a continuous exposure $X \in \mathbb{R}^{p}$. In the following, we use a slightly different notation compared to the rest of the paper. We write $X(e) \in \mathbb{R}^{p}$ for the potential outcome of a continuous exposure if the environment $E$ takes value $e \in \E$. Equivalently we write $Y(x,e) \in \mathbb{R}$ for the potential outcome of the response of a unit if the exposure takes level $X=x$ and environment $E$ takes value $e \in \E$. We assume that these quantities are well-defined. We make the following additional assumptions:
\begin{enumerate}
\item[A1.] Exclusion restriction: 
\begin{align*}
Y(x,e) = Y(x) \text{ holds for all } x \in \text{range}(X) \text{  and } e \in \E
\end{align*}
\item[A2.] Independence:
\begin{align*}
   (X(e),Y(0)) \indep E \text{ for all } e \in \E
\end{align*}
\item[A3.] Constant confounding across environments $\E$:
\begin{equation*}
\text{Cov}(X(e),Y(0)) = \text{Cov}(X(f),Y(0))  \text{ for all } e,f \in \E
\end{equation*}
\item[A4.] Treatment effect homogeneity and linearity:
\begin{align*}
  \mathbb{E}[ Y(x) - Y(0) | X=x ,E=e] &=  \mathbb{E}[ Y(x) - Y(0) ] \\
&=  x \beta^{0} \text{ for all } x \in \text{range}(X) \text{ and } e \in \E
\end{align*}
\item[A5.] The variables are normalized:
\begin{equation*}
  \mathbb{E}[X] = 0 \qquad \text{and} \qquad \mathbb{E}[Y] = 0
\end{equation*}
\end{enumerate}
Note that we did not make any cross-world assumptions \citep{richardson2013single}, i.e. we made no assumptions on the joint distribution of $Y(x)$, $x \in \text{range}(X)$ or on the joint distribution of $X(e)$, $e \in \E$. Condition (A2) can be relaxed to an assumption on the cross-product between $X(e)$ and $Y(0)$. Details can be found in the Appendix in the proof of Proposition~\ref{prop:inner-prod-invarpot}. Condition (A3) is crucial: we allow for confounding (nonzero covariance of $X(e)$ and $Y(0)$), but we assume that the covariance is constant across environments. Loosely speaking, this can be seen as a non-interaction-assumption of environment and confounding. Condition (A4) ensures that the average treatment effect is the same within strata defined by $X$ and $E$ and allows the usage of a linear model. For a discussion of similar assumptions in the context of the IV framework, see \cite{wang2016bounded}.

If these assumptions are fulfilled, then we have \emph{inner-product invariance} under the average treatment effect $\beta^{0}$.
\begin{proposition}\label{prop:inner-prod-invarpot}
  Under assumptions (A1) - (A5) we have \emph{inner-product invariance} under the vector $\beta^{0} \in \mathbb{R}^{p}$ which satisfies $\mathbb{E}[Y(x) - Y(0)] = x \beta^{0}$, i.e.
\begin{equation*}
  \mathbb{E}[ X^{t} (Y - X \beta^{0}) | E=e] =  \mathbb{E}[ X^{t} (Y - X \beta^{0}) | E=f] \text{ for all } e,f \in \E.
\end{equation*}
\end{proposition}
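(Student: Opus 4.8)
The plan is to reduce the conditional inner product $\mathbb{E}[X^t(Y-X\beta^{0})\mid E=e]$ to the confounding covariance $\text{Cov}(X(e),Y(0))$, which is already assumed constant across environments by (A3). First I would record what is actually observed: conditional on $E=e$ the realized exposure equals the potential outcome, $X=X(e)$, and by the exclusion restriction (A1) the realized response equals $Y=Y(X)$, i.e.\ it depends only on the realized exposure level and not on the environment label. This lets me treat $Y$ as $Y(0)$ plus a treatment-effect contrast along the observed $X$.

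The central algebraic step is to compute $\mathbb{E}[Y\mid X=x,E=e]$ by peeling off the baseline potential outcome. Writing $Y(x)=Y(0)+(Y(x)-Y(0))$ and invoking the conditional form of (A4), I obtain $\mathbb{E}[Y\mid X=x,E=e]=\mathbb{E}[Y(0)\mid X=x,E=e]+x\beta^{0}$. Substituting this into $\mathbb{E}[X^t Y\mid E=e]$ via the tower property (conditioning on $X$ inside $E=e$) produces a term $\mathbb{E}[X^t X\beta^{0}\mid E=e]$ that exactly cancels the corresponding term coming from the $-X\beta^{0}$ part of the inner product. What remains is the clean identity $\mathbb{E}[X^t(Y-X\beta^{0})\mid E=e]=\mathbb{E}[X^t Y(0)\mid E=e]=\mathbb{E}[X(e)^t Y(0)\mid E=e]$.

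Next I would discharge the conditioning using the independence assumption (A2): since $(X(e),Y(0))\indep E$, the conditional cross-product equals the unconditional one, $\mathbb{E}[X(e)^t Y(0)]$. Decomposing this as $\text{Cov}(X(e),Y(0))+\mathbb{E}[X(e)]^t\,\mathbb{E}[Y(0)]$, the covariance term is environment-invariant by (A3), so it only remains to show that the product of means vanishes. For this I would verify $\mathbb{E}[Y(0)]=0$: combining (A2) and (A4) gives $\mathbb{E}[Y\mid E=e]=\mathbb{E}[Y(0)]+\mathbb{E}[X\mid E=e]\beta^{0}$, and averaging over $E$ together with the normalization (A5), namely $\mathbb{E}[X]=0$ and $\mathbb{E}[Y]=0$, forces $\mathbb{E}[Y(0)]=0$. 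This kills the mean term and leaves $\mathbb{E}[X^t(Y-X\beta^{0})\mid E=e]=\text{Cov}(X(e),Y(0))$, which is the same for all $e$ by (A3).

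The main obstacle is the bookkeeping of the nested conditional expectations in the potential-outcome calculus: one must be scrupulous that conditioning on $X=x$ jointly with $E=e$ is consistent with the observed identity $Y=Y(X)$, and that (A4) is applied to the conditional treatment-effect contrast (not merely the marginal one) so that the $x\beta^{0}$ contribution genuinely cancels the $-X\beta^{0}$ term. A secondary subtlety is that $\mathbb{E}[Y(0)]=0$ is not assumed directly and must be chained out of (A2), (A4) and (A5); isolating this normalization is precisely what renders the final covariance identity environment-free. Throughout, the relaxation mentioned after the assumptions is visible, since (A2) enters only through the cross-product $\mathbb{E}[X(e)^t Y(0)]$ and the mean of $Y(0)$.
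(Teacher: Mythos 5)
Your proposal is correct and follows essentially the same route as the paper's proof: condition on $X$ and $E$, use (A1) and (A4) to cancel the $X\beta^{0}$ contribution and reduce the conditional inner product to $\mathbb{E}[X(e)^{t}Y(0)]$ via (A2), then verify $\mathbb{E}[Y(0)]=0$ from (A4) and (A5) so that the expression equals $\mathrm{Cov}(X(e),Y(0))$, which is environment-invariant by (A3). The only cosmetic difference is your bookkeeping for $\mathbb{E}[Y(0)]=0$ (averaging the conditional mean equation over $E$ rather than writing $\mathbb{E}[Y(0)]=\mathbb{E}[Y(0)-Y(X)]$ directly), which is mathematically equivalent.
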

The proof of this result can be found in the Appendix.
Using inner-product invariance for estimating the average treatment effect $\beta^{0}$, it is possible to consistently estimate the average treatment effect in cases in which two-stage least squares (or the Wald estimand) is degenerate. For example, in settings where the dimension of exposure variables $X$ exceeds the number of environments $|E|$ or when $\mathbb{E}[Y-Xb | E = 1] = \mathbb{E}[Y-Xb | E=0]$ for $\E = \{0,1 \}$. In the presence of weak instruments, \emph{causal Dantzig} can exhibit efficiency gains compared to estimators based on conditional means of $X$ and $Y$. This is investigated further in Section~\ref{sec:comparison-with-instrumental-variables} and Section~\ref{sec:numerical-examples}.

\section{Causal Dantzig with regularization}\label{sec:high-dim}

In this section we introduce the  regularized \emph{causal Dantzig}, and discuss its theoretical properties. The estimator is motivated and introduced in Section~\ref{sec:estimator}. Section~\ref{sec:finite-sample-bound} contains  finite sample bounds. The bounds presented in this section involve a quantity that we call the ``causal cone invertibility factor''. The behavior of this quantity is discussed in Section~\ref{sec:behaviourccif}.

\subsection{The estimator}\label{sec:estimator}

Weak interventions on some of the variables (i.e. $\mathbb{E}[( \delta_{k}^{e})^{2}] $ small) may lead to coefficient estimates with high variance in equation~\eqref{eq:hatbetaInv}. Furthermore, if the number of predictors $p$ exceeds the total sample size $n$, the matrix $\hat \G$ is not invertible and the solution to equation~\eqref{eq:hatbeta} is not unique.  In such settings, regularization and shrinkage is desirable and can outperform unpenalized estimation procedures, see e.g. \cite{buhlmann2011statistics}. In particular, $\ell_{1}$-penalized estimation procedures have attracted much interest in high-dimensional models. For linear models, \cite{candes2007dantzig} proposed an $\ell_{1}$-minimization method called the Dantzig selector.
Consider  $\YY = \XX \beta^{*} + \epsilon$ with $\XX \in \mathbb{R}^{n \times p}$, $\YY \in \mathbb{R}^{n}$, $\beta^{*}  \in \mathbb{R}^{p} $. For a tuning parameter $\lambda \ge 0$, the Dantzig selector is defined as a solution to the regularization problem
\begin{align}\label{eq:25}
\begin{split}
  &\min   \| \beta \|_{1} \quad \text{subject to} \quad  \| \tilde \Z - \tilde \G
\beta\|_{\infty} \le \lambda, \\
&\text{where } \tilde \Z =  \XX^{t} \YY /n \text{ and } \tilde \G = \XX^{t} \XX /n.
\end{split}
\end{align}
\begin{figure}
\begin{center}
\includegraphics[width=0.4\textwidth]{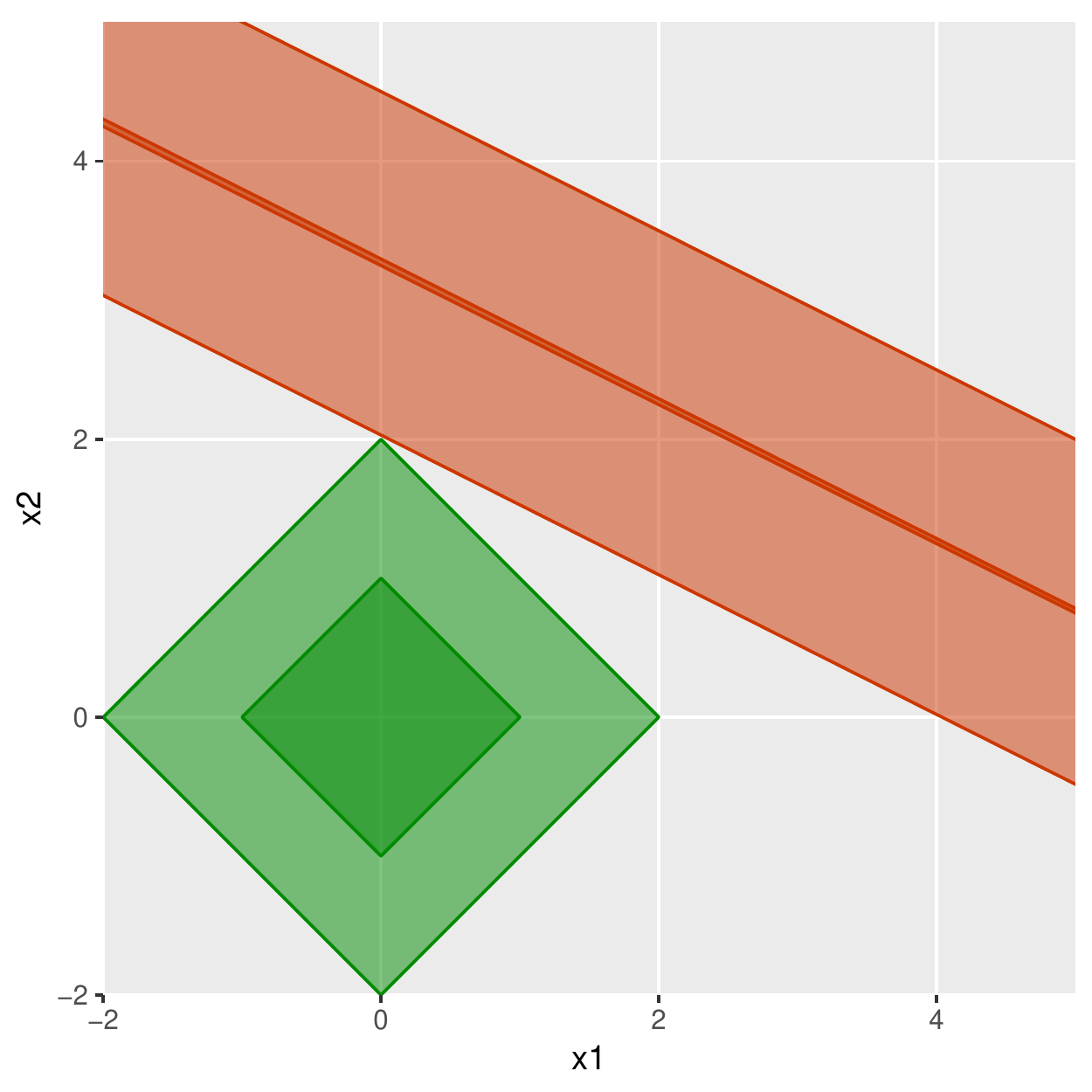}
\caption{ \label{fig:dantzig} {\it A visualization of the Dantzig selector. The red region is the feasible set $ \{ \beta: \| \tilde \Z - \tilde \G \beta ) \|_{\infty} \le \lambda \}$ for $\lambda = \sqrt{1.25}$. The two green regions are the level sets $\{ \beta : \| \beta \|_{1} \le 1 \}$ and $\{ \beta : \| \beta \|_{1} \le 2 \}$. The Dantzig selector for $\lambda = \sqrt{1.25}$ is at the intersection of the light green and the red region.}}
\end{center}
\end{figure}
The geometry of the Dantzig selector is depicted in Figure~\ref{fig:dantzig}. The $\ell_{1}$-minimization favors sparse solutions, i.e. vectors in which many coefficients are exactly zero.  This facilitates interpretation. Furthermore, if $\lambda$ gets larger, the Dantzig selector shrinks towards the zero vector. Choosing $\lambda$ is a trade off: small values will generally result in larger variance of the estimator, but smaller bias.
We propose the regularized \emph{causal Dantzig} $\hat \beta^{\lambda}$, which in analogy to equation~\eqref{eq:hatbeta}  is defined as a solution to
\begin{align}\label{eq:7}
\begin{split}
  \min\| \beta \|_{1}& \quad \text{subject to} \quad  \| \hat \Z - \hat \G
\beta\|_{\infty} \le \lambda, \\
\text{where } \hat \Z &=  \frac{(\Xone)^{t} \Yone}{n_{1}}   - \frac{(\Xtwo)^{t} \Ytwo}{n_{2}}, \\   \text{ and } \hat \G &= \frac{(\Xone)^{t} \Xone}{n_{1}}   -  \frac{(\Xtwo)^{t} \Xtwo}{n_{2}}.
\end{split}
\end{align}
On a superficial level, the difference to the Dantzig selector is merely that $\XX^{t} \YY/n$ is replaced by $ (\Xone)^{t} \Yone/n_{1} - (\Xtwo)^{t} \Ytwo/n_{2}$  and $\XX^{t} \XX /n$ is replaced by $ (\Xone)^{t} \Xone/n_{1} - (\Xtwo)^{t} \Xtwo/n_{2}$. Hence the geometry of the optimization problem is akin to the Dantzig selector and the \emph{causal Dantzig} inherits its variable selection, shrinkage and regularization properties. Furthermore,  the \emph{causal Dantzig} can be cast as a linear program for fixed $\lambda$. Details can be found in the Appendix, Section~\ref{sec:causal-dantzig-as}.

\subsection{Finite-sample bound}\label{sec:finite-sample-bound}

The  regularized \emph{causal Dantzig}  is related to the Dantzig selector and enjoys similar
 properties. Notably, it attains  the same rates of convergence under comparable regularity conditions. To this end, we introduce the quantity  ``causal cone  invertibility factor'', similar to the ``cone invertibility factor''
 for the Dantzig selector as defined in \cite{ye2010rate}. For ease of exposition we will first treat the case $\E = \{1,2\}$. The treatment of the general case is sketched in Remark~\ref{remark:extension}.

\subsubsection{Causal Cone Invertibility Factor}\label{sec:caus-cone-invert}
Let $\hat \SI$ denote the empirical covariance matrix of $\XX$ and consider a set
$S \subset \{1,\ldots,p\}$. Later we will mainly be interested in the case where $S$ is the active set of $\beta^{0}$.
\cite{ye2010rate} proved bounds for the Dantzig selector
that involve the so-called cone
invertibility factor (CIF).
For the upper bound, the relevant quantity in \cite{ye2010rate}  is  $\mathrm{CIF}_{q}(S) $.
Roughly speaking, the cone invertibility factor is a
lower bound on the $\ell_{\infty}$-norm of $\hat \SI u$, given that $u$ lies in
the cone $\{ u : \| u_{S^{c}} \|_{1} \le \| u_{S} \|_{1} \}$ and has
unit norm $\| u \|_{q} =1$. To make the quantity comparable across
different norms, it is scaled by a factor $| S |^{1/q}$. To be more precise,
\begin{equation*}
  \mathrm{CIF}_{q}(S) = \inf_{u} \left\{ \frac{|S|^{1/q}\| \hat \SI u \|_{\infty}}{\| u \|_{q}} : \| u_{S^{c}} \|_{1} \le \| u_{S} \|_{1} \right\}.
\end{equation*}
 Now we are ready to define the \emph{causal cone invertibility factor} $\mathrm{CCIF}_{q}(S,\hat \G)$:
\begin{equation}\label{eq:14}
  \mathrm{CCIF}_{q}(S,\hat \G) := \inf_{u} \left\{ \frac{|S|^{1/q}\| \hat \G u \|_{\infty}}{\| u \|_{q}} : \| u_{S^{c}} \|_{1} \le \| u_{S} \|_{1} \right\}.
\end{equation}
Analogously define $\mathrm{CCIF}_{q}(S,\G)$ for $\G := \mathbb{E}[\hat \G]$. Here and in the following, notationally we do not treat the case $q=\infty$ separately. Instead, with small abuse of notation we set $|S|^{1/q}:=1$ for $q=\infty$.  In the new definition, the positive semi-definite matrix $\hat \SI$ is replaced by the symmetric matrix $\hat \G$. As $\hat \SI$, the matrix $\hat \G$ is not positive
definite in high-dimensional settings and even indefinite in general. However, it can be shown that the CCIF behaves similarly to
the CIF in several ways.  This is further discussed in Section~\ref{sec:behaviourccif}. For now, let us turn to the
finite-sample bound of the \emph{causal Dantzig}.

\subsubsection{Finite sample bound}\label{sec:finite-sample-bound-1}
The finite-sample results of the \emph{causal Dantzig} are  analogous to
the Dantzig selector while the issue of identifiability is now addressed by the causal cone invertibility factor $\mathrm{CCIF}_{q}(S, \hat \G)$. Similarly as in \cite{ye2010rate}, define $  z^{*} := \| \hat \Z - \hat \G \beta^{0}
\|_{\infty}$ and let $S$ denote the active set of $\beta^{0}$. The first result is purely algebraic and follows from the definitions of
$\mathrm{CCIF}_{q}(S,\hat \G)$ and  the \emph{causal Dantzig}.
\begin{lemma}\label{lemma:finite-sample-bound}
On the event $z^{*} \le \lambda$ we have
\begin{equation}\label{eq:35}
  \| \hat \beta^{\lambda} - \beta^{0} \|_{q} \le \frac{|S|^{1/q} (\lambda + z^{*})}{\mathrm{CCIF}_{q}(S, \hat \G)}
 \le \frac{2|S|^{1/q} \lambda}{\mathrm{CCIF}_{q}(S,\hat \G)} \qquad \text{ for all } q \ge 1.
\end{equation}
\end{lemma}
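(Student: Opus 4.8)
The plan is to adapt the classical analysis of the Dantzig selector in \cite{ye2010rate} almost verbatim, with the positive semi-definite Gram matrix replaced by the symmetric (possibly indefinite) matrix $\hat \G$ and with $\mathrm{CCIF}_{q}(S,\hat \G)$ playing the role of the cone invertibility factor. Throughout I work on the event $z^{*} \le \lambda$ and write $h := \hat \beta^{\lambda} - \beta^{0}$; if $h = 0$ the bound holds trivially, so I assume $h \neq 0$.

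First I would note that $\beta^{0}$ is feasible for the program~\eqref{eq:7}: on the event $z^{*} = \| \hat \Z - \hat \G \beta^{0} \|_{\infty} \le \lambda$ it satisfies the constraint. Since $\hat \beta^{\lambda}$ minimizes $\| \cdot \|_{1}$ over the feasible set, this yields the basic inequality $\| \hat \beta^{\lambda} \|_{1} \le \| \beta^{0} \|_{1}$.

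The crux is then to show that $h$ lies in the cone $\{ u : \| u_{S^{c}} \|_{1} \le \| u_{S} \|_{1} \}$ underlying the definition of $\mathrm{CCIF}_{q}(S,\hat \G)$. Using that $\beta^{0}$ is supported on $S$, so that $h_{S^{c}} = \hat \beta^{\lambda}_{S^{c}}$, and splitting the basic inequality into its $S$ and $S^{c}$ blocks yields
\[
\| \beta^{0}_{S} \|_{1} - \| h_{S} \|_{1} + \| h_{S^{c}} \|_{1} \;\le\; \| \hat \beta^{\lambda} \|_{1} \;\le\; \| \beta^{0} \|_{1} \;=\; \| \beta^{0}_{S} \|_{1},
\]
and cancelling $\| \beta^{0}_{S} \|_{1}$ leaves $\| h_{S^{c}} \|_{1} \le \| h_{S} \|_{1}$, the desired cone membership.

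Finally I would control $\hat \G h$ by the triangle inequality,
\[
\| \hat \G h \|_{\infty} = \| \hat \G \hat \beta^{\lambda} - \hat \G \beta^{0} \|_{\infty} \le \| \hat \Z - \hat \G \hat \beta^{\lambda} \|_{\infty} + \| \hat \Z - \hat \G \beta^{0} \|_{\infty} \le \lambda + z^{*},
\]
invoking feasibility of $\hat \beta^{\lambda}$ for the first summand and the definition of $z^{*}$ for the second. Since $h$ is in the cone, substituting $u = h$ into the infimum~\eqref{eq:14} defining $\mathrm{CCIF}_{q}(S,\hat \G)$ gives $\mathrm{CCIF}_{q}(S,\hat \G) \le |S|^{1/q} \| \hat \G h \|_{\infty} / \| h \|_{q}$, which rearranges to the first inequality in~\eqref{eq:35}; the second follows from $z^{*} \le \lambda$. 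I do not expect a genuine obstacle, as the statement is purely algebraic: the only point requiring care is the cone-membership step, which is exactly where the $\ell_{1}$-optimality of $\hat \beta^{\lambda}$ and the support of $\beta^{0}$ enter, and one should observe that indefiniteness of $\hat \G$ is irrelevant here since $\mathrm{CCIF}_{q}$ is merely an infimum of a quotient.
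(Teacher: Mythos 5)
Your proposal is correct and follows essentially the same route as the paper's proof: feasibility of $\beta^{0}$ on the event $z^{*}\le\lambda$ gives $\|\hat\beta^{\lambda}\|_{1}\le\|\beta^{0}\|_{1}$, which yields the cone membership $\|h_{S^{c}}\|_{1}\le\|h_{S}\|_{1}$, and the triangle inequality bounds $\|\hat\G h\|_{\infty}$ by $\lambda+z^{*}$ before invoking the definition of $\mathrm{CCIF}_{q}(S,\hat\G)$. No gaps.
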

The proof can be found in the Appendix. There are two terms on the right-hand side in equation~\eqref{eq:35} that deserve further attention. First, $\mathrm{CCIF}_{q}(S,\hat \G)$ is bounded away from zero under
certain assumptions, as discussed in Section~\ref{sec:behaviourccif}. Secondly, it is crucial to
understand the behavior of $  z^{*} := \| \hat \Z -  \hat \G \beta^{0} \|_{\infty}$. Using a union bound over the $p$ entries, it can be shown that with high probability, $z^{*}$ is of the order $\max_{e \in \E} \max(\log(p)/n_{e}, \sqrt{\log(p)/n_{e}})$:
\begin{lemma}\label{lemma:finite-sample-bound-1}
Assume that inner-product invariance holds for $(X^{e},Y^{e}), e \in \{1,2\}$ under $\beta^{0}$.  Assume $X^{1},X^{2},  \eta_{p+1}^{1}, \eta_{p+1}^{2}$ are centered and multivariate Gaussian.  Let $t \ge 0$. Then, with probability  exceeding $1- 4 \exp(-t)$,
\begin{align*}
  z^{*} &\le    \sigma_{\varepsilon} \sum_{e \in \{1,2\}}  \sigma_{\text{max}}^{e}\left( \sqrt{\frac{4t +  4\log(p)}{n_{e}}} +   \frac{4t + 4 \log(p)}{n_{e}} \right), \\
  \text{where }  \sigma_{\varepsilon}  &:=\sqrt{ \Var (\eta_{p+1}^{e})} \quad \text{ and } \quad  \sigma_{\text{max}}^{e} :=   \max_{k} \sqrt{ \Var(X_{k}^{e}) }.
\end{align*}
\end{lemma}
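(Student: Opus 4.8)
The plan is to recognize that $z^{*}$ is a maximum over $p$ coordinates of a difference of two centered empirical averages, each of which concentrates by a Bernstein-type inequality for products of Gaussians. First I would use the structural equation~\eqref{eq:Y} to note that $Y^{e} - X^{e}\beta^{0} = \eta_{p+1}^{e}$, so that the $k$-th coordinate of $\hat\Z - \hat\G\beta^{0}$ equals
\[
\frac{1}{n_{1}}\sum_{i=1}^{n_{1}} X_{ik}^{1}\eta_{p+1,i}^{1} - \frac{1}{n_{2}}\sum_{i=1}^{n_{2}} X_{ik}^{2}\eta_{p+1,i}^{2}.
\]
Inner-product invariance under $\beta^{0}$ guarantees $\mathbb{E}[X_{k}^{1}\eta_{p+1}^{1}] = \mathbb{E}[X_{k}^{2}\eta_{p+1}^{2}]$, so subtracting these identical means yields a difference of two centered averages of the i.i.d.\ summands $W_{ik}^{e} := X_{ik}^{e}\eta_{p+1,i}^{e} - \mathbb{E}[X_{k}^{e}\eta_{p+1}^{e}]$.

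The core of the argument is a tail bound for $\tfrac{1}{n_{e}}\sum_{i} W_{ik}^{e}$. Each summand is a centered product of the jointly centered Gaussian pair $(X_{k}^{e}, \eta_{p+1}^{e})$, hence sub-exponential rather than sub-Gaussian. I would control it through the exact moment generating function of a bivariate-Gaussian product: writing $c = \mathrm{Cov}(X_{k}^{e},\eta_{p+1}^{e})$, $\sigma_{U}^{2} = \Var(X_{k}^{e})$ and $\sigma_{V}^{2} = \Var(\eta_{p+1}^{e})$, a direct Gaussian integral gives
\[
\mathbb{E}\big[e^{\lambda X_{k}^{e}\eta_{p+1}^{e}}\big] = \big(1 - 2\lambda c - \lambda^{2}(\sigma_{U}^{2}\sigma_{V}^{2} - c^{2})\big)^{-1/2},
\]
valid for $|\lambda|$ below the blow-up threshold. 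Centering and bounding the log-MGF by a Bernstein form $\lambda^{2}\nu^{2}/(2(1-|\lambda| b))$, the variance proxy comes out as $\nu^{2} = \Var(X_{k}^{e}\eta_{p+1}^{e}) = \sigma_{U}^{2}\sigma_{V}^{2} + c^{2} \le 2\sigma_{U}^{2}\sigma_{V}^{2}$ (Isserlis' theorem, using $|c|\le\sigma_{U}\sigma_{V}$) and the scale as $b \asymp \sigma_{U}\sigma_{V}$; bounding $\sigma_{U} \le \sigma_{\text{max}}^{e}$ and $\sigma_{V} = \sigma_{\varepsilon}$ produces exactly the coefficient $\sigma_{\varepsilon}\sigma_{\text{max}}^{e}$ and the factor $4$ appearing in the statement.

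Then I would apply the resulting two-sided Bernstein inequality coordinate-wise: for fixed $e$ and $k$, with probability $1 - 2e^{-u}$,
\[
\Big|\tfrac{1}{n_{e}}\textstyle\sum_{i} W_{ik}^{e}\Big| \le \sigma_{\varepsilon}\sigma_{\text{max}}^{e}\Big(\sqrt{4u/n_{e}} + 4u/n_{e}\Big).
\]
A union bound over $k = 1,\ldots,p$ with the choice $u = t + \log p$ reduces the per-environment failure probability to $2p \cdot e^{-(t+\log p)} = 2e^{-t}$ and controls $\max_{k}$ of the environment-$e$ term. Finally the triangle inequality $z^{*} = \max_{k}|\,\cdot\,| \le \max_{k}|\text{env }1| + \max_{k}|\text{env }2|$ splits the bound into the sum over $e$, while the two failure probabilities add to $4e^{-t}$, giving the claim.

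The main obstacle is the second step: because hidden confounding makes $X_{k}^{e}$ and $\eta_{p+1}^{e}$ correlated ($c \neq 0$), the summand is not a product of independent Gaussians, and a generic sub-exponential Orlicz-norm bound would only yield unspecified absolute constants. Extracting the precise variance proxy and scale parameter — and thereby the explicit coefficient $\sigma_{\varepsilon}\sigma_{\text{max}}^{e}$ and the constant $4$ — requires the exact MGF computation above together with a careful, quantitative bound on its logarithm near the singularity.
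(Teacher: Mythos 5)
Your proposal follows essentially the same route as the paper: both use inner-product invariance and the identity $Y^{e}-X^{e}\beta^{0}=\eta_{p+1}^{e}$ to split $z^{*}$ by the triangle inequality into two centered empirical averages of the i.i.d.\ products $X_{k}^{e}\eta_{p+1}^{e}$, apply a Bernstein-type tail bound to each coordinate, and finish with a union bound over $k$ at level $u=t+\log p$ and addition of the failure probabilities. The only difference is that the paper simply cites van de Geer (2009) for the sub-exponential concentration of products of correlated Gaussians, whereas you re-derive that inequality from the exact bivariate-Gaussian MGF and Isserlis' theorem; this is a correct and self-contained way to obtain the same constants.
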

The proof can be found in the Appendix. This result can be extended to situations where $(X^{1}, \eta_{p+1}^{1})$ and $(X^{2}, \eta_{p+1}^{2})$ have subgaussian tails, see e.g. exercise~14.3 in \cite{buhlmann2011statistics}.
By combining Lemma~\ref{lemma:finite-sample-bound} and Lemma~\ref{lemma:finite-sample-bound-1} we obtain the following theorem. The proof can be found in the Appendix.

\begin{theorem}\label{theorem:finite-sample-bound-2}
Let $\lambda \asymp 5 C \sqrt{\log(p)/ \min_{e \in  \{1,2\}}n_{e}} \rightarrow 0$ for a constant $C >0$ that satisfies $ \sigma_{\varepsilon} \cdot \sigma_{\text{max}}^{e} \le C < \infty$ for $ e \in \{1,2\}$. Under the assumptions mentioned in Lemma~\ref{lemma:finite-sample-bound-1},
\begin{equation*}
  \| \hat \beta^{\lambda} - \beta^{0} \|_{q}   \le \frac{10 C}{\mathrm{CCIF}_{q}(S,\hat \G)}  |S|^{1/q} \sqrt{\frac{  \log(p)}{ \min_{e \in  \{1,2\}}n_{e} }}
\end{equation*}
 with $\mathbb{P} \rightarrow 1$ for $n_{1},n_{2},p \rightarrow \infty$.
\end{theorem}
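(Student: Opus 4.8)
The plan is to read off the result by combining the deterministic Lemma~\ref{lemma:finite-sample-bound} with the tail bound of Lemma~\ref{lemma:finite-sample-bound-1}; the only real work is to verify that the stochastic threshold $z^{*} := \| \hat \Z - \hat \G \beta^{0}\|_{\infty}$ falls below the tuning parameter $\lambda$ on an event whose probability tends to one. Since Lemma~\ref{lemma:finite-sample-bound} applies verbatim on the event $\{z^{*} \le \lambda\}$ and yields $\| \hat \beta^{\lambda} - \beta^{0}\|_{q} \le 2|S|^{1/q}\lambda / \mathrm{CCIF}_{q}(S,\hat \G)$, substituting $\lambda \asymp 5C\sqrt{\log(p)/\min_{e}n_{e}}$ immediately produces the factor $10C$ in the claimed bound, with the random quantity $\mathrm{CCIF}_{q}(S,\hat \G)$ simply carried along in the denominator (its behaviour is analysed separately in Section~\ref{sec:behaviourccif}).

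First I would fix a deterministic sequence $t = t_{n,p} \to \infty$ that still grows slower than the logarithm, $t = o(\log p)$ (for instance $t = \log\log p$), and apply Lemma~\ref{lemma:finite-sample-bound-1} with this $t$. This controls the complementary probability, since $4\exp(-t) \to 0$, and gives, on the corresponding event,
\[
  z^{*} \le C \sum_{e \in \{1,2\}} \left( \sqrt{\frac{4t + 4\log(p)}{n_{e}}} + \frac{4t + 4\log(p)}{n_{e}} \right),
\]
using the hypothesis $\sigma_{\varepsilon}\,\sigma_{\text{max}}^{e} \le C$ to bound the prefactor in each summand.

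Next I would show that this right-hand side is asymptotically dominated by $\lambda$. Because $t = o(\log p)$ we have $4t + 4\log(p) = 4\log(p)(1 + o(1))$, so each square-root term equals $2\sqrt{\log(p)/n_{e}}\,(1+o(1))$ and the two-term sum is at most $4\sqrt{\log(p)/\min_{e}n_{e}}\,(1+o(1))$. The linear-in-$1/n_{e}$ remainder is negligible relative to this: the hypothesis $\lambda \to 0$ forces $\log(p)/\min_{e}n_{e} \to 0$, whence $(4t+4\log(p))/n_{e} = o(\sqrt{\log(p)/\min_{e}n_{e}})$. Collecting the two contributions gives $z^{*} \le 4C\sqrt{\log(p)/\min_{e}n_{e}}\,(1 + o(1))$, and since the leading constant $4$ is strictly below $5$, for all $n_{1},n_{2},p$ large enough this is at most $5C\sqrt{\log(p)/\min_{e}n_{e}} \asymp \lambda$. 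Hence $\mathbb{P}[z^{*} \le \lambda] \to 1$, and on that event Lemma~\ref{lemma:finite-sample-bound} finishes the argument.

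The main obstacle is purely one of constant bookkeeping: one must choose $t$ large enough that $4\exp(-t) \to 0$ yet small enough ($t = o(\log p)$) that the dominant $\sqrt{\cdot}$ term in Lemma~\ref{lemma:finite-sample-bound-1} keeps its multiplicative constant below the $5$ budgeted in $\lambda$, while simultaneously confirming that the slower-decaying linear term is absorbed into the $o(1)$ slack. The gap between the $4$ appearing here and the $5$ inside $\lambda$, together with $\lambda \to 0$, is exactly what provides that slack; everything else is a direct appeal to the two preceding lemmas.
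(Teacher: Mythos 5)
Your proposal is correct and follows essentially the same route as the paper's proof: combine Lemma~\ref{lemma:finite-sample-bound-1} (with a suitable choice of $t$) to show $\mathbb{P}[z^{*}\le\lambda]\to 1$, then invoke the deterministic bound of Lemma~\ref{lemma:finite-sample-bound}. The only cosmetic difference is the choice of $t$ — the paper takes $t=0.2\log p$ and tracks the constant as $4.7C<5C$, whereas you take $t\to\infty$ with $t=o(\log p)$ and land on the asymptotic constant $4C<5C$; both fit within the budget set by $\lambda$.
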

Another consequence of these two Lemmata is the screening property of the \emph{causal Dantzig} under a so-called betamin-condition. The short proof can be found in the Appendix.
\begin{proposition}\label{prop:screening}
Let $\hat S$ denote the active set of $\hat \beta^{\lambda}$. Using the notation of Theorem~~\ref{theorem:finite-sample-bound-2}, assume that
\begin{equation*}
  \min_{k \in S} | \beta_{k}^{0} | > \frac{10 C}{\mathrm{CCIF}_{\infty}(S,\hat \G)} \sqrt{\frac{\log(p)}{\min_{e \in \{1,2\}} n_{e}}}.
\end{equation*}
Then under the assumptions mentioned in Theorem~\ref{theorem:finite-sample-bound-2} for $q= \infty$, we have
\begin{equation*}
  \mathbb{P}[\hat S \supseteq S] \rightarrow 1 \qquad \text{ for $n_{1},n_{2},p \rightarrow \infty$}.
\end{equation*}
\end{proposition}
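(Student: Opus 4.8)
The plan is to derive the screening property directly from the finite-sample $\ell_{\infty}$-bound of Theorem~\ref{theorem:finite-sample-bound-2}, combined with the betamin assumption. Recall that $\hat S = \{ k : \hat \beta_{k}^{\lambda} \neq 0 \}$, so proving $\mathbb{P}[\hat S \supseteq S] \to 1$ amounts to showing that, with probability tending to one, none of the truly active coefficients $\beta_{k}^{0}$, $k \in S$, is estimated to be exactly zero. The whole argument will take place on the event $B$ on which the conclusion of Theorem~\ref{theorem:finite-sample-bound-2} holds; by that theorem $\mathbb{P}[B] \to 1$ as $n_{1},n_{2},p \to \infty$.

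First I would instantiate Theorem~\ref{theorem:finite-sample-bound-2} at $q = \infty$, using the convention $|S|^{1/q} := 1$ for $q = \infty$ adopted in Section~\ref{sec:caus-cone-invert}. On $B$ this yields
\begin{equation*}
  \| \hat \beta^{\lambda} - \beta^{0} \|_{\infty} \le \frac{10 C}{\mathrm{CCIF}_{\infty}(S, \hat \G)} \sqrt{ \frac{\log(p)}{\min_{e \in \{1,2\}} n_{e}} }.
\end{equation*}
The key observation is that the right-hand side is exactly the quantity the betamin condition dominates, so that on $B$ we obtain $\| \hat \beta^{\lambda} - \beta^{0} \|_{\infty} < \min_{k \in S} |\beta_{k}^{0}|$.

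Next I would argue coordinatewise. Fix any $k \in S$. From the displayed bound, $|\hat \beta_{k}^{\lambda} - \beta_{k}^{0}| \le \| \hat \beta^{\lambda} - \beta^{0} \|_{\infty} < \min_{j \in S} |\beta_{j}^{0}| \le |\beta_{k}^{0}|$. If $k \notin \hat S$, then $\hat \beta_{k}^{\lambda} = 0$ and hence $|\hat \beta_{k}^{\lambda} - \beta_{k}^{0}| = |\beta_{k}^{0}|$, contradicting the strict inequality just derived. Therefore $\hat \beta_{k}^{\lambda} \neq 0$, i.e.\ $k \in \hat S$; equivalently, via the reverse triangle inequality, $|\hat \beta_{k}^{\lambda}| \ge |\beta_{k}^{0}| - |\hat \beta_{k}^{\lambda} - \beta_{k}^{0}| > 0$. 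Since $k \in S$ was arbitrary, $S \subseteq \hat S$ holds on $B$, and consequently $\mathbb{P}[\hat S \supseteq S] \ge \mathbb{P}[B] \to 1$.

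The argument is almost entirely algebraic once the bound of Theorem~\ref{theorem:finite-sample-bound-2} is in hand, so the only point requiring care is the status of $\mathrm{CCIF}_{\infty}(S, \hat \G)$, which is random because it depends on $\hat \G$. Since the same realized $\mathrm{CCIF}_{\infty}(S, \hat \G)$ appears both in the finite-sample bound and in the betamin condition, the chain of inequalities above should be read conditionally on $\hat \G$ (equivalently, pointwise on $B$), so that no separate control of the fluctuations of the causal cone invertibility factor is needed beyond what already enters Theorem~\ref{theorem:finite-sample-bound-2}. This matching of the random normalizing factor on both sides is what makes the otherwise standard screening argument go through cleanly, and is the one bookkeeping detail I would be careful to state explicitly.
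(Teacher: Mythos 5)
Your argument is correct and is essentially the paper's own proof: both instantiate Theorem~\ref{theorem:finite-sample-bound-2} at $q=\infty$ and combine the resulting $\ell_{\infty}$-bound with the betamin condition via the reverse triangle inequality to conclude $\min_{k\in S}|\hat\beta^{\lambda}_{k}|>0$ with probability tending to one. Your extra remark about reading the chain of inequalities pointwise on the high-probability event, with the same realized $\mathrm{CCIF}_{\infty}(S,\hat\G)$ on both sides, is a sensible clarification but does not change the argument.
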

Note that the convergence rate in Theorem~\ref{theorem:finite-sample-bound-2} coincides with the usual  rate of  convergence in high-dimensional linear regression (\cite{ye2010rate}) under comparable assumptions. For consistency in the $\ell_{2}$ norm in the regression setting it is required that  $ |S|  \log(p)/n \rightarrow 0$, that $\lambda \asymp C \sqrt{\log(p)/n}$ for constant $C>0$ large enough and that the population quantity $\mathrm{CIF}_{2}(S)$ is bounded away from zero. In our framework, if $n_{1}\asymp n_{2}$, the assumptions on the asymptotic behavior of $n=n_{1}+n_{2},p,|S|$ and $\lambda$ stay essentially the same, but $\mathrm{CCIF}_{2}(S, \hat \G)$ plays the role of $\mathrm{CIF}_{2}(S)$.
The next section aims to shed some light on the behavior of this quantity.
\begin{remark}\label{remark:extension}
The results of this section can be extended to more than two settings $|\E| > 2$. To be more precise, in the general case one can define the regularized \emph{causal Dantzig} as a solution to
\begin{equation*}
 \min_{\beta\in \mathbb{R}^p}  \| \beta \|_{1} \text{ subject to }\max_{e \in \E} \| \hat \Z^{e} - \hat \G^{e}
\beta\|_{\infty} \le \lambda,
\end{equation*}
where $\hat \Z^{e}, \hat \G^{e}, e \in \E$ are defined as in equation~\eqref{eq:24}. The causal cone invertibility factor is then defined as
  \begin{equation*}
\mathrm{CCIF}_{q}(S, \{ \hat \G^{e}, e \in \E \}) := \inf_{u} \max_{e \in \E} \left\{ \frac{|S|^{1/q}\left\| \hat \G^{e} u \right\|_{\infty}}{\| u \|_{q}} : \| u_{S^{c}} \|_{1} \le \| u_{S} \|_{1} \right\}.
\end{equation*}
With this notation, it is straightforward to obtain analogous results to Lemma~\ref{lemma:finite-sample-bound}-\ref{lemma:CCIF-bound}, Theorem~\ref{theorem:finite-sample-bound-2} and Proposition~\ref{prop:screening}.
\end{remark}

\subsection{Behavior of the causal cone invertibility factor}\label{sec:behaviourccif}

In the preceding section we showed that the causal cone invertibility factor $\mathrm{CCIF}_{q}(S, \hat \G)$ is a crucial quantity to understand the behavior of the regularized \emph{causal Dantzig}. How do we guarantee that this quantity is bounded away from zero? There are two issues that we will treat separately. First, for $p>n= n_{1}+n_{2}$,
 $ \hat \G$ is not invertible. Secondly, the environments might not be
sufficiently different to make population version $\G$ invertible. In Section~\ref{sec:general-properties} we will discuss how to relate the empirical causal cone invertibility factor to the population causal cone invertibility factor. In Section~\ref{sec:p--n} we consider the case where the environments are sufficiently different to make the population version $\G$ invertible. In Section~\ref{sec:population-g-not} we examine a setting where the environments are not sufficiently different, i.e. where $\G$ is not invertible.

\subsubsection{General properties}\label{sec:general-properties}

In this section we discuss how to relate the empirical causal cone invertibility factor $\mathrm{CCIF}_{q}(S, \hat \G)$ to the population quantity  $\mathrm{CCIF}_{q}(S,  \G)$. The following Lemma gives a deterministic bound for these quantities. The proof can be found in the Appendix.
\begin{lemma}\label{lemma:CCIF-bound} Let $ q \ge 1$. Then,
\begin{equation*}
|\mathrm{CCIF}_{q}(S,\hat \G) - \mathrm{CCIF}_{q}(S,\G)  | \le 2 |S | \| \hat \G - \G \|_{\infty},
\end{equation*}
where $\| A \|_{\infty} := \max_{i,j} |A_{i,j}|$ denotes the matrix max norm.
\end{lemma}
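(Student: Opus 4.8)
The plan is to bound the difference of the two infima directly, using the fact that each $\mathrm{CCIF}_q$ is an infimum of a ratio over the same cone $\mathcal{C} := \{ u : \| u_{S^c} \|_1 \le \| u_S \|_1 \}$. Since infima over a common domain satisfy $| \inf_u f(u) - \inf_u g(u) | \le \sup_u | f(u) - g(u) |$, it suffices to control, for each $u \in \mathcal{C}$, the quantity
\[
\left| \frac{ |S|^{1/q} \| \hat \G u \|_\infty }{ \| u \|_q } - \frac{ |S|^{1/q} \| \G u \|_\infty }{ \| u \|_q } \right| = \frac{ |S|^{1/q} }{ \| u \|_q } \bigl| \, \| \hat \G u \|_\infty - \| \G u \|_\infty \, \bigr|.
\]
By the reverse triangle inequality this is at most $|S|^{1/q} \| (\hat \G - \G) u \|_\infty / \| u \|_q$, so the whole problem reduces to bounding $\| (\hat \G - \G) u \|_\infty$ for $u$ in the cone.

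Next I would estimate $\| (\hat \G - \G) u \|_\infty$ entrywise. Writing $D := \hat \G - \G$, each coordinate of $D u$ is $\sum_j D_{ij} u_j$, so $\| D u \|_\infty \le \max_i \sum_j |D_{ij}| |u_j| \le \| D \|_\infty \| u \|_1$, where $\| D \|_\infty = \max_{i,j} | D_{ij} |$ is the matrix max norm as in the statement. Here the cone restriction enters: for $u \in \mathcal{C}$ we have $\| u_{S^c} \|_1 \le \| u_S \|_1$, hence $\| u \|_1 = \| u_S \|_1 + \| u_{S^c} \|_1 \le 2 \| u_S \|_1$. Finally I relate $\| u_S \|_1$ back to $\| u \|_q$. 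For $q \ge 1$ the standard norm inequality on the $|S|$ coordinates of $S$ gives $\| u_S \|_1 \le |S|^{1-1/q} \| u_S \|_q \le |S|^{1-1/q} \| u \|_q$. Combining these yields
\[
\| D u \|_\infty \le \| D \|_\infty \cdot 2 |S|^{1-1/q} \| u \|_q,
\]
so that $|S|^{1/q} \| D u \|_\infty / \| u \|_q \le 2 |S| \, \| \hat \G - \G \|_\infty$, which is exactly the claimed bound (uniformly in $u$, hence for the supremum).

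The only mild subtlety, which I would handle up front, is the degenerate case $q = \infty$: following the paper's convention $|S|^{1/q} := 1$, the factor $|S|^{1-1/q}$ becomes $|S|$ and the argument $\| u_S \|_1 \le |S| \| u \|_\infty \le |S| \| u \|_q$ gives the same final constant $2|S|$, so no separate treatment is needed. The main (and essentially only) obstacle is organizing the chain of norm inequalities so that the cone condition $u \in \mathcal{C}$ is used exactly once to produce the factor $2$ and the relation $\| u_S \|_1 \le |S|^{1-1/q} \| u \|_q$ is applied correctly across all $q \ge 1$ including the limiting case; everything else is routine. Because the bound on $\| (\hat \G - \G) u \|_\infty / \| u \|_q$ holds uniformly over the cone, it transfers to the difference of infima, completing the argument.
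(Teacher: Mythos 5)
Your proposal is correct and follows essentially the same route as the paper: bound the pointwise difference of the two ratios via the reverse triangle inequality and $\| (\hat \G - \G) u \|_{\infty} \le \| \hat \G - \G \|_{\infty} \| u \|_{1}$, use the cone condition once to get $\| u \|_{1} \le 2 \| u_{S} \|_{1}$, and finish with $\| u_{S} \|_{1} \le |S|^{1-1/q} \| u_{S} \|_{q} \le |S|^{1-1/q} \| u \|_{q}$. The only cosmetic difference is that you make explicit the step $| \inf f - \inf g | \le \sup | f - g |$ and the $q=\infty$ convention, both of which the paper leaves implicit.
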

Hence the problem is reduced to understanding the behavior of $\| \hat \G - \G \|_{\infty}$. Let the rows of $\XX^{e}$ consist of i.i.d. centered multivariate Gaussian random variables for $e \in \{1,2\}$.
It can be shown that with probability at least $ 1-4 \exp(-t)$,
\begin{equation}\label{eq:20}
  \| \hat \G - \G \|_{\infty} \le    \sum_{e \in \{1,2\}} ( \sigma_{\text{max}}^{e})^{2}  \left( \sqrt{\frac{4t +  8\log(p)}{n_{e}}} +   \frac{4t + 8 \log(p)}{n_{e}} \right).
\end{equation}
This result can be extended to situations where $X^{1}$ and $X^{2}$ have subgaussian tails, see e.g. exercise~14.3 in \cite{buhlmann2011statistics}. Hence by Lemma~\ref{lemma:CCIF-bound}, even if $\hat \G$ is not
invertible, the quantity in equation~\eqref{eq:14} is well behaved for
$ \sqrt{\min(n_{1},n_{2})} \gg |S| \sqrt{\log(p)}$, in the sense that it is strictly bounded away from
zero if the same is true for the population quantity. The latter assumption
is nontrivial and depends on the distribution of the interventions
$\delta^{e},e \in \{1,2\}$.

\subsubsection{Population $\G$ invertible}\label{sec:p--n}

Under the assumptions discussed in Section~\ref{sec:general-properties}, $\mathrm{CCIF}_{q}(S,\hat \G)$ is bounded away from zero if $\mathrm{CCIF}_{q}(S,\G)$ is bounded away from zero. Hence, the problem is reduced to understanding the population quantity $\mathrm{CCIF}_{q}(S,\G)$. If $\G$ is invertible, then
\begin{align}\label{eq:1}
\begin{split}
  \mathrm{CCIF}_{q}(S,\G) &\ge \min_{u} \frac{|S|^{1/q} \| \G u \|_{\infty}}{\|u \|_{q}}  \\
&= \min_{ \| u \|_{q} = |S|^{1/q}} \| \G u \|_{\infty} >0.
\end{split}
\end{align}
As
\begin{align*}
\G &= \mathbb{E} \left[ \left(X_{1:p}^{1} \right)^{t}  X_{1:p}^{1} - \left(X_{1:p}^{2} \right)^{t} X_{1:p}^{2} \right] \\
&= ((\mathrm{Id}-A)^{-1})_{1:p,1:p} \mathbb{E} [ (\delta_{1:p}^{1})^{t} \delta_{1:p}^{1} - (\delta_{1:p}^{2})^{t} \delta_{1:p}^{2} ] ((\mathrm{Id} - A)^{-t})_{1:p,1:p},
\end{align*}
this is a measure of the difference in the intervention strength $\delta^{e}$ between the two settings $e=1$ and $e=2$. In this sense, this bound is similar to the discussion in Section~\ref{sec:population-g-not}. However, the bound fails to capture appropriately what happens if the interventions only act on a subset of the variables $X_{i}, i=1,\ldots,p$. In that case the bound in equation~\eqref{eq:1} is not useful as $\G$ is not invertible. The next section shows that in some of these settings it is still true that $\mathrm{CCIF}_{q}(S,\G)>0$.

\subsubsection{Population $\G$ not invertible}\label{sec:population-g-not}

The setting of Section~\ref{sec:p--n} and the bound in equation~\eqref{eq:1} are rather restrictive. Consider a situation with a block structure in the Gram matrix, i.e. where $\mathbb{E}[X_{k}^{e}X_{k'}^{e}] = 0$ for all $k \le k_{0} < k'$ and $e \in \E$.  In this case, there might be no interventions on the variables $\{X_{k'}, k' > k_{0}\}$, i.e. $\delta_{k'}^{e} \equiv 0$ for all $k' > k_{0}$. As a result, $\G$ might not be invertible. However, if $\G_{1:k_{0},1:k_{0}}$ is invertible and $S \subset \{1,\ldots k_{0}\}$, then
\begin{align*}
\begin{split}
  \mathrm{CCIF}_{q}(S,\G) &\ge  \inf \left\{ \frac{|S|^{1/q}\| \G_{1:k_{0},1:k_{0}} u_{1:k_{0}} \|_{\infty}}{\| u \|_{q}} : \| u_{S^{c}} \|_{1} \le \| u_{S} \|_{1}  \right\} \\
&\ge \inf \left\{ \frac{|S|^{1/q}\| \G_{1:k_{0},1:k_{0}} u_{1:k_{0}} \|_{\infty}}{2 \| u_{1:k_{0}} \|_{q}}\right\} > 0.
\end{split}
\end{align*}
Hence, under the assumptions discussed in Section~\ref{sec:finite-sample-bound-1}, the \emph{causal Dantzig} is a consistent estimator for $\beta^{0}$. Generally speaking, the \emph{causal Dantzig} tends to screen out variables that have not been affected by the intervention. In this light it is crucial that the interventions act on the variables in the active set of $\beta^{0}$ directly or indirectly.

\section{Practical considerations}\label{sec:pract-cons}

In this section we discuss practical considerations for the \emph{causal Dantzig}. Recommendations are given for centering and scaling of the variables, choice of the regularization parameter $\lambda$ and a procedure for preselection.

\subsection{Centering and scaling}\label{sec:scaling-variables}

Centering and scaling in the \emph{causal Dantzig} setting is a bit more intricate than in a regression setting. Let $ \hat \mu^{e} \in \mathbb{R}^{p+1}$ denote the empirical mean of $(\XX^{e},\YY^{e})$. For centering, we recommend substracting $\frac{1}{|\E|} \sum_{e \in \E} \hat \mu^{e}$ from each sample. By mean-centering globally (and not with an environment-specific intercept), the estimator is able to leverage changes in mean between environments. For scaling, define
\begin{equation}\label{eq:22}
  c_{k,e} = \frac{\mathbb{E} \left[(X_{k}^{e})^{2} \right]}{n_{e}} + \frac{1}{(|\E|-1)^{2}} \sum_{e' \neq e} \frac{\mathbb{E} \left[(X_{k}^{e'})^{2}\right]}{n_{e'}} \text{ for $e \in \E$ and } k=1,\ldots,p.
\end{equation}
We recommend to scale the $k$-th row of $\hat \Z^{e}$ and $ \hat \G^{e}$ by approximately $1/\sqrt{c_{k,e}}$ for all $k=1,\ldots,p$ and $e \in \E$.
What is the motivation behind this scaling? In the following we will discuss the special case $\E = \{1,2\}$. In absence of noise in equation~\eqref{eq:7}, $ \| \Z - \G \beta^{0} \|_{\infty} = 0$. By allowing for $ \| \hat \Z -  \hat \G \beta^{0} \|_{\infty} \le \lambda$, we account for the variance of $\hat \Z - \hat \G \beta^{0}$. Since we work with a supremum bound and the same $\lambda$ for all components, we want all scaled components to have roughly the same variance.
To be more precise, we want
\begin{equation}\label{eq:31}
\text{Var} \left( \frac{( \hat \Z - \hat \G \beta^{0})_{k} }{\sqrt{c_{k,1}}}\right)  = \text{Var} \left( \frac{( \hat \Z -\hat \G \beta^{0})_{l} }{\sqrt{c_{l,1}}}\right) \text{ for all } k,l=1,\ldots,p.
\end{equation}
It can be challenging to scale according to equation~\eqref{eq:31} as the correlation between $X_{k}^{e}$ and $\eta_{p+1}^{e} = Y^{e} - X^{e} \beta^{0}$ is unknown and changes for different $k$.
In the absence of confounding however and if $X_{k}$ and $X_{l}$ are not descendants of $Y$ in the graph $G$, $\varepsilon=\eta_{p+1}^{e}$ is independent of $X_{k}^{e}$ and $X_{l}^{e}$ and the scaling of equation~\eqref{eq:22} implies
\begin{align*}
\text{Var} \left( \frac{(\hat \Z - \hat \G \beta^{0})_{k} }{\sqrt{c_{k,1}}}\right) =                                                                                      \sigma_{\varepsilon}^{2}  = \text{Var} \left( \frac{(\hat \Z - \hat \G \beta^{0})_{l} }{\sqrt{c_{l,1}}}\right),
\end{align*}
where $\sigma_{\varepsilon}$ denotes the standard deviation of $\varepsilon=\eta_{p+1}^{e}$. The scaling of equation~\eqref{eq:22} still has some theoretical justification in more general cases. In the presence of confounding and for general $k,l$ it depends on the joint distribution of $X_{k}^{e}, X_{l}^{e}$ and $\varepsilon = \eta_{p+1}^{e}$ whether $\text{Var} \left( \frac{( \hat \Z - \hat \G \beta^{0})_{k} }{\sqrt{c_{k,1}}}\right) $ and  $\text{Var} \left( \frac{( \hat \Z - \hat \G \beta^{0})_{l} }{\sqrt{c_{l,1}}}\right) $ are of the same order. Notably, if equation~\eqref{eq:22} holds with equality and  if the variables $X_{k}^{e}, k=1,\ldots,p$ and $\varepsilon = \eta_{p+1}^{e}, e \in \{1,2\}$ are centered multivariate Gaussian, using  moment inequalities,
\begin{equation*}
 \mathbb{E} \left[ \left(X_{k}^{e} \right)^{2} \right] \sigma_{\varepsilon}^{2} \le   \Var(X_{k}^{e} \eta_{p+1}^{e}) \le 2  \mathbb{E} \left[\left(X_{k}^{e} \right)^{2} \right]  \sigma_{\varepsilon}^{2}
\end{equation*}
 for $e \in \{1,2\}, k \in \{1,\ldots,p\}$. Using independence of samples from different environments $e \in \{1,2\}$,
\begin{equation*}
 \sum_{e \in \{1,2\}}  \frac{ \mathbb{E} \left[\left(X_{k}^{e} \right)^{2} \right] }{n_{e}}  \sigma_{\varepsilon}^{2}   \le   \text{Var} \left(( \hat \Z -\hat \G \beta^{0})_{k}\right)  \le 2 \sum_{e \in \{1,2\}} \frac{ \mathbb{E} \left[\left(X_{k}^{e} \right)^{2} \right] }{n_{e}} \sigma_{\varepsilon}^{2}
\end{equation*}
for all $k=1,\ldots,p$. Using equation~\eqref{eq:22},
\begin{equation*}
 \sigma_{\varepsilon}^{2} \le \text{Var} \left( \frac{( \hat \Z -\hat \G \beta^{0})_{l} }{\sqrt{c_{l,1}}}\right) \le 2 \sigma_{\varepsilon}^{2} \text{ for all } k=1,\ldots,p.
\end{equation*}
Hence $\text{Var} \left( \frac{( \hat \Z - \hat \G \beta^{0})_{l} }{\sqrt{c_{l,1}}}\right) $ and $\text{Var} \left( \frac{( \hat \Z - \hat \G \beta^{0})_{k} }{\sqrt{c_{k,1}}}\right) $ are of the same order for all $k,l =1,\ldots,p$.

\subsection{Choosing $\lambda$}\label{sec:choosing-lambda}
Large segments of the regularization path of the \emph{causal Dantzig} are usually poor estimates of $\beta^{0}$. Hence it is crucial to use an appropriate value of the  regularization parameter $\lambda$. From a theoretical perspective one would choose $\lambda$ as in Theorem~\ref{theorem:finite-sample-bound-2}. However, $\sigma_{\varepsilon}$ and $\sigma_{max}^{e}$ are usually unknown in real-world datasets. Hence, in practice we propose to choose $\lambda$ by $k$-fold cross-validation. Concretely, in each environment $e \in \E$ the samples are split into $k$ groups of approximately equal size. Denote $\hat \beta^{\lambda,-i}$ the \emph{causal Dantzig} estimator that is calculated on all samples except the samples from group $i$. Let $\hat \Z^{i}$ and $ \hat \G^{i}$ be defined as in equation~\eqref{eq:3}, using the samples from group $i$. Then we can choose $\hat \lambda^{\text{cv}}$ as a solution to
\begin{equation*}
  \hat \lambda^{\text{cv}} = \argmin_{\lambda} \frac{1}{k} \sum_{i=1}^{k} \| \hat \Z^{i}  - \hat \G^{i} \hat \beta^{\lambda,-i} \|_{\infty}.
\end{equation*}
We define the cross-validated \emph{causal Dantzig} as $\hat \beta^{\text{cv}} := \hat \beta^{\hat \lambda^{\text{cv}}}$. Two exemplary regularization paths and the solution chosen by cross-validation are depicted in Figure~\ref{fig:cool}. \begin{figure}
\begin{center}
\includegraphics[width=0.4\textwidth]{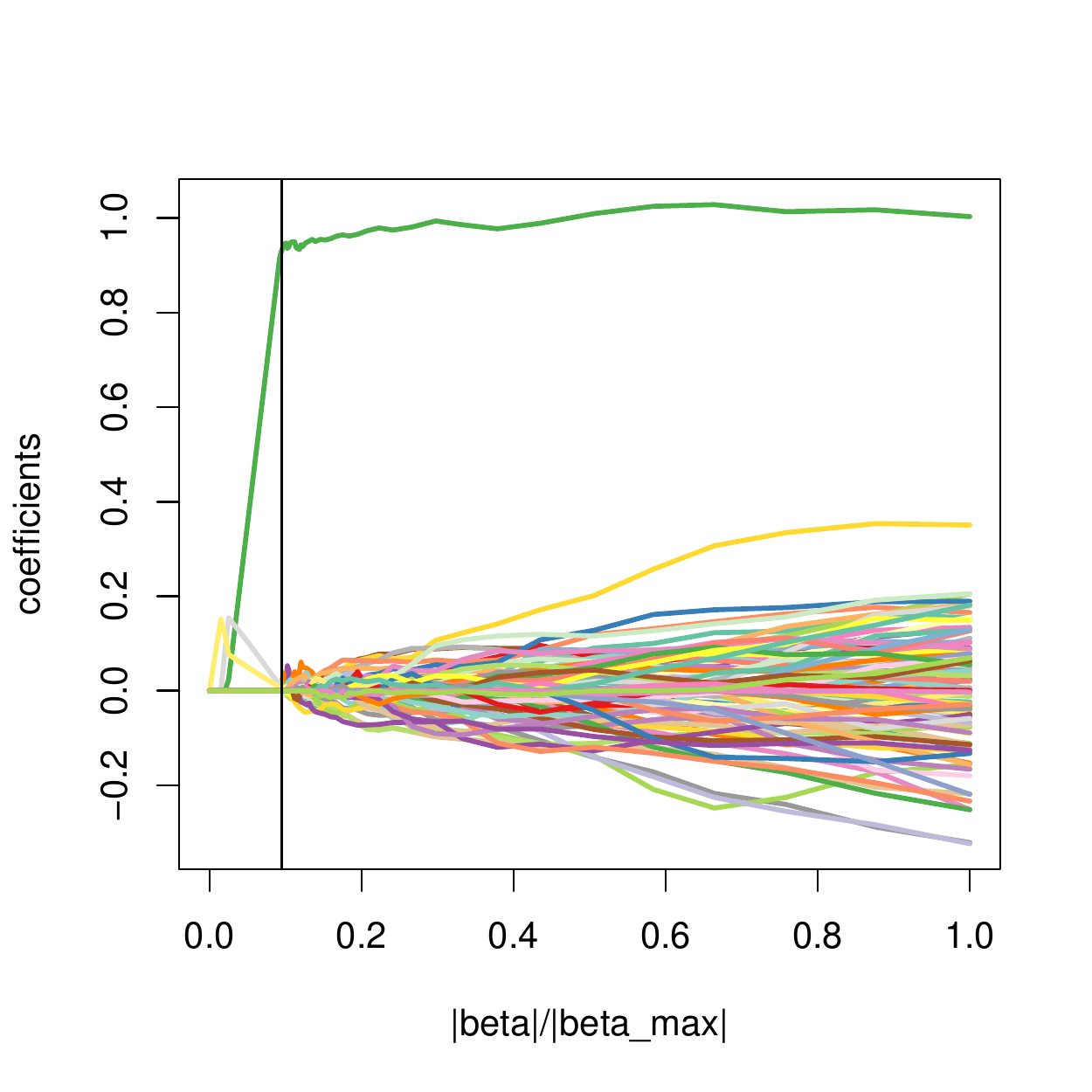}
\includegraphics[width=0.4\textwidth]{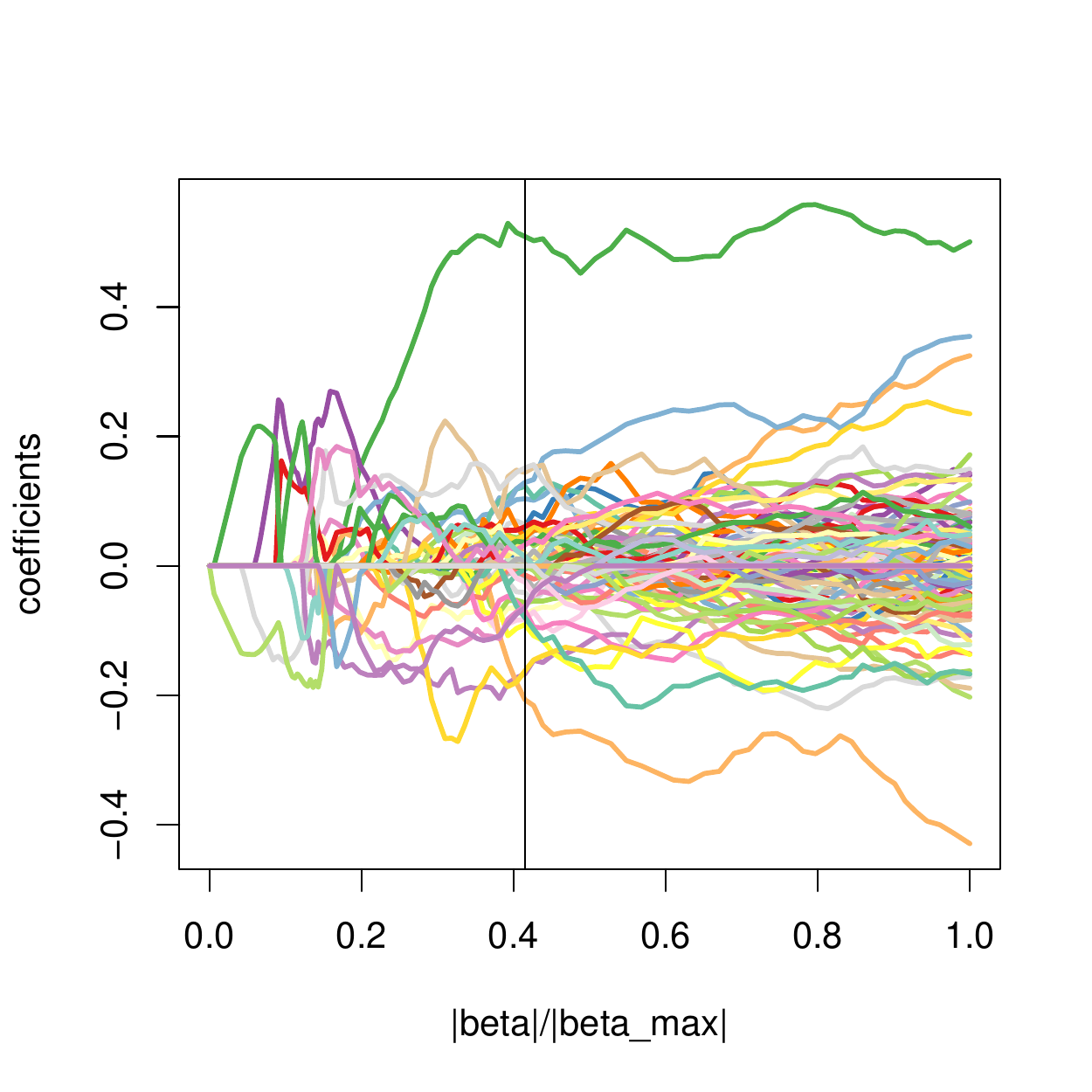}
\caption{ \label{fig:cool} {\it  Two typical regularization paths for the
     causal Dantzig.  The black vertical line specifies the
    solution chosen by $10$-fold cross-validation. On the left-hand
    side $p=100$, $n=200$. On the right-hand side $p=200$, $n=60$. In
    both cases the standard deviation of the interventions is $2.5$ and the
    variance of the errors is $1$. One component of $\beta^{0}$ is equal to one (upper
    green line), all others are zero.}}
\end{center}
\end{figure}

\subsection{Preselection with hidden variables}\label{sec:mark-blank-estim-1}

An alternative of running the \emph{causal Dantzig} directly on a high-dimensional dataset is doing preselection. In the first stage we recommend to run Lasso on observational data, if available. If observational data is not available, one could run Lasso on the pooled dataset. In the second stage, one would run the \emph{causal Dantzig} with or without regularization on the active set of the first stage. Ideally, the first stage would screen out as many variables as possible, except for the parental set of the target variable $Y$. Quite often this will result in a set that contains a superset of the parental set implying a very useful dimensionality reduction. The following Lemma provides some justification for this approach.
\begin{lemma}\label{lemma:hidden-markov-blanket}
Assume that the distribution $X_{1},...,X_{p},Y$ is generated by a linear acyclic Gaussian structural equation model with directed acyclic graph $D_{total}$ that consists of both the observed variables $X_{1}, \ldots, X_{p}, Y$ and (potentially) hidden confounders $H_{1},....,H_{q}$. Assume that the joint distribution of the variables $X_{1},...,X_{p},Y,H_{1},...,H_{q}$ is faithful \citep{Pearl2009} to $D_{total}$. Let $S$ denote the active set of regressing $Y$ on $X_{1},\ldots,X_{p}$ in the population case. Then,
\begin{align*}
    \{ k : X_{k}  \text{ is a parent or a child of $Y$ in $D_{total}$ }  \} \subset S.
\end{align*}
\end{lemma}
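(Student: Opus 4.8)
The plan is to reduce the inclusion to a purely graphical statement about $D_{total}$ and then verify that statement by the triviality that a direct edge can never be broken by d-separation. The three ingredients are: (i) a characterization of the active set $S$ in terms of conditional independence, valid under Gaussianity; (ii) the transfer of such conditional-independence statements between the joint law over observed and hidden nodes and its observed marginal; and (iii) faithfulness, which turns conditional independence into d-separation.

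First I would recall that the population active set $S$ is the support of the population least-squares vector $\gamma := \Sigma_{XX}^{-1}\Sigma_{XY}$, where $\Sigma$ is the covariance of the \emph{observed} variables $(X_1,\ldots,X_p,Y)$. Writing $X_{-k} := (X_j)_{j\in\{1,\ldots,p\}\setminus\{k\}}$, the Frisch--Waugh--Lovell identity gives $\gamma_k = \mathrm{Cov}(Y,X_k\mid X_{-k})/\mathrm{Var}(X_k\mid X_{-k})$, and since $\mathrm{Var}(X_k\mid X_{-k})>0$ by non-degeneracy, $\gamma_k\neq 0$ iff the partial correlation $\rho(Y,X_k\mid X_{-k})$ is nonzero. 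The observed distribution is Gaussian, being a marginal of the jointly Gaussian law of $(X_1,\ldots,X_p,Y,H_1,\ldots,H_q)$, so a vanishing partial correlation is equivalent to $Y \indep X_k \mid X_{-k}$. Hence $k\in S$ iff $Y \not\indep X_k \mid X_{-k}$ in the observed distribution. Next I would lift this to the full joint law: because $Y$, $X_k$ and every coordinate of $X_{-k}$ are observed, and a conditional-independence statement among a fixed set of variables is unchanged by marginalizing out the remaining (hidden) variables, $Y \indep X_k \mid X_{-k}$ holds in the observed marginal iff it holds in the joint law over all of $X_1,\ldots,X_p,Y,H_1,\ldots,H_q$. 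Faithfulness to $D_{total}$ then yields the graphical equivalent: $k\notin S$ iff $X_{-k}$ d-separates $Y$ and $X_k$ in $D_{total}$.

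It remains to show that a parent or child $X_k$ of $Y$ is never d-separated from $Y$ by $X_{-k}$. If $X_k$ is a parent or child of $Y$, then $D_{total}$ contains the edge $X_k\to Y$ or $Y\to X_k$. The corresponding single-edge path between $X_k$ and $Y$ has no intermediate node, hence contains neither a collider nor a non-collider, and is therefore active given any conditioning set that excludes $X_k$ and $Y$. Since $X_{-k}\subseteq\{X_1,\ldots,X_p\}\setminus\{X_k\}$ contains neither $X_k$ nor $Y=X_{p+1}$, the path is active, so $Y$ and $X_k$ are d-connected given $X_{-k}$. By the previous paragraph this forces $k\in S$, which is exactly the claimed inclusion.

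I expect the only real care to lie in steps (i) and (ii); both are standard, but step (ii) is the conceptually essential one, as it is what legitimizes applying faithfulness — which is stated for the full graph $D_{total}$ — even though the regression conditions only on the observed predictors $X_{-k}$ and never on the hidden $H_j$. The final graphical step is effortless precisely because parents and children of $Y$ are joined to it by an edge, and edges are immune to d-separation.
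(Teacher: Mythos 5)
Your proof is correct and follows essentially the same route as the paper's: a zero population regression coefficient plus Gaussianity yields a conditional independence, faithfulness upgrades it to d-separation in $D_{total}$, and the single-edge path between $Y$ and a parent or child can never be blocked. The only (immaterial) difference is that you condition on $X_{-k}$ via Frisch--Waugh--Lovell, whereas the paper conditions on $X_{S}$; both conditioning sets exclude $X_k$ and $Y$, so the direct edge remains active either way.
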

The proof can be found in the Appendix. We test this two-step procedure on real world data in Section \ref{sec:gene-knock-exper}.
However, note that for valid $p$-values (with the unregularized \emph{causal Dantzig}) we would have a post-selection problem due to the screening step.

\section{Numerical examples}\label{sec:numerical-examples}

Section~\ref{sec:hidd-low-dimens} explores actual coverage and length of the asymptotic confidence intervals as defined in Section~\ref{sec:confidence-intervals}. In Section~\ref{sec:hidd-instr-vari} we compare the \emph{causal Dantzig} to instrumental variable regression for $p=1$ under different types of interventions. In Section~\ref{sec:hidd-high-dimens} we evaluate the performance of parameter selection by cross-validation as defined in Section~\ref{sec:choosing-lambda}. Finally, in Section~\ref{sec:gene-knock-exper} we discuss an application to real-world data that has been published in \citet{meinshausen2016methods}.

\subsection{Causal Dantzig in low dimensions: confidence intervals}\label{sec:hidd-low-dimens}

In this section we explore the actual coverage and average length
of the asymptotic confidence intervals constructed according to
Theorem~\ref{theorem:confidence-intervals}.
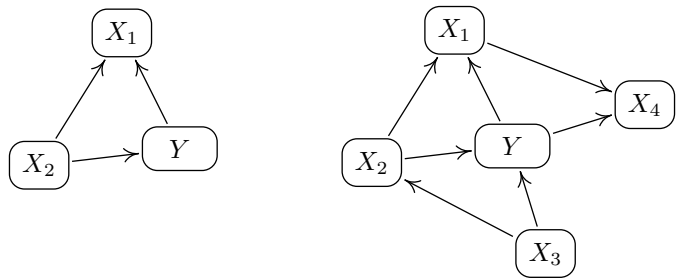
\begin{figure}
\begin{center}
\begin{tikzpicture}[scale=1.1, line width=0.5pt, minimum size=0.58cm, inner sep=0.3mm, shorten >=1pt, shorten <=1pt]
    \normalsize
    \draw (-1,1.4) node(1) [rectangle, rounded corners=2mm, inner
     sep=1.7mm, draw] {$X_{1}$};
    \draw (-0.3,0) node(y) [rectangle, rounded corners=2mm, inner
     sep=1.7mm, draw] {$\phantom{a} Y\phantom{a}$};
    \draw  (-2,-0.2) node(2) [rectangle, rounded corners=2mm, inner
     sep=1.7mm, draw] {$X_2$};
        \draw (0.1,-1.3) node(3) [rectangle, rounded corners=2mm, inner
     sep=1.7mm] {$ $};
    \draw (1.3,0.5) node(4) [circle] {$ $};
    \draw[-arcsq] (2) -- (y);
    \draw[-arcsq] (2) -- (1);
    \draw[-arcsq] (y) -- (1);
  \end{tikzpicture}
\begin{tikzpicture}[scale=1.1, line width=0.5pt, minimum size=0.58cm, inner sep=0.3mm, shorten >=1pt, shorten <=1pt]
    \normalsize
    \draw (-1,1.4) node(1) [rectangle, rounded corners=2mm, inner
     sep=1.7mm, draw] {$X_{1}$};
    \draw (-0.3,0) node(y) [rectangle, rounded corners=2mm, inner
     sep=1.7mm, draw] {$\phantom{a} Y\phantom{a}$};
    \draw  (-2,-0.2) node(2) [rectangle, rounded corners=2mm, inner
     sep=1.7mm, draw] {$X_2$};
    \draw (1.3,0.5) node(4) [rectangle, rounded corners=2mm, inner
     sep=1.7mm, draw] {$X_4$};
    \draw (0.1,-1.3) node(3) [rectangle, rounded corners=2mm, inner  sep=1.7mm, draw] {$X_3$};
    \draw[-arcsq] (3) -- (y);
    \draw[-arcsq] (2) -- (y);
    \draw[-arcsq] (3) -- (2);
    \draw[-arcsq] (2) -- (1);
    \draw[-arcsq] (y) -- (1);
    \draw[-arcsq] (y) -- (4);
    \draw[-arcsq] (1) -- (4);
   \end{tikzpicture}
\caption{ \label{fig:graphs} {\it The  graphs (A) and (B) used in the
    simulations. The noise distributions at all variables follow a
    factor model which allows for hidden confounding. }}
\end{center}
\end{figure}
We simulate data from two linear SEMs shown in
Figure~\ref{fig:graphs}. Specifically, the data are generated according
to the equations
\begin{align}\label{eq:simulatins}
  (A): \left\{\begin{array}{rcrrr}
X_2 & \leftarrow &  && \eta_2 \\
Y & \leftarrow&  &X_2 +& \eta_y \\
 X_1 & \leftarrow& Y - &X_2 + &\eta_1
\end{array} \right. , \quad
(B): \left\{\begin{array}{rcrrr}
X_3 & \leftarrow &  && \eta_3 \\
X_2 & \leftarrow &  &X_3+& \eta_2 \\
Y & \leftarrow &  -X_3+&X_2+& \eta_y \\
 X_1 & \leftarrow & -X_2+ & Y + &\eta_1 \\
 X_4 & \leftarrow & - Y +& X_1+ &\eta_4
\end{array} \right. ,
\end{align}
where the noise distributions of $(\eta_1,\eta_2,\eta_y)$ and
$(\eta_1,\eta_2,\eta_3,\eta_4,\eta_y)$  respectively depend on the
environment. Specifically, for SEM (A), we assume a factor model for the noise
\[ (\eta_1,\eta_2,\eta_y)^t = A H +
\sigma_j (\varepsilon_1,\varepsilon_2,\varepsilon_y)^t ,\]
where $(\varepsilon_1,\varepsilon_2,\varepsilon_y)^t \sim
\mathcal{N}(0,1_3)$, and the entries in both the factor loading matrix  $A\in
\mathbb{R}^{3\times 5}$ and the factor values $H \in \mathbb{R}^{5}$ are chosen i.i.d.\ standard normal. The  5-dimensional variable $H$
act as hidden confounders between the observed variables. The noise contribution
$\sigma_j$ is chosen as 1 in environment $e = 1$ and as $1+\kappa$ in
environment $e =2$. We call $\kappa=\sigma_1-\sigma_2$ the
intervention strength as it measures the variance of the additional
noise input in environment $e=1$ over environment $e=2$. In our simulations it is chosen as $8$.
For SEM (B) we generate the data analogously with the dimension of
the hidden variable $H$ being five.

We draw $n \in \{50,100,500,1000\}$ samples in total (across both environments) and compute the confidence intervals for the causal
coefficients $\beta^{0}$ of $Y$ with the unregularized \emph{causal Dantzig}.
For SEM (A), the true causal coefficients for $Y$ are given by
$\beta^{0}=(0,1)$ and the actual coverage and average length of the constructed intervals at confidence level 0.05 with the unregularized \emph{causal Dantzig} is  shown in the two upper rows of Table~\ref{simulations1length} for
variable $X_1$. The
bottom row show the coverage of the confidence
intervals for invariant causal prediction (ICP). For large $n$, ICP often (rightfully) rejects
all models and outputs neither coefficient estimates nor confidence
intervals.  These cases were ignored in the table. ICP is not consistent
and hence has incorrect coverage for growing sample size, as clearly
visible in the table.
\begin{table}[ht]
\centering
\begin{tabular}{rllll}
  \hline
 & $n = 50 $ & $ 100 $ & $ 500 $ & $ 1000 $ \\ 
  \hline
Coverage & 0.93$\pm$0.01 & 0.95$\pm$0.01 & 0.96$\pm$0.01 & 0.96$\pm$0.01 \\ 
  Average length & 65.79$\pm$2918.53 & 4.11$\pm$602.53 & 0.27$\pm$0.62 & 0.18$\pm$0.01 \\ 
  Coverage ICP & 0.92$\pm$0.01 & 0.84$\pm$0.01 & 0.42$\pm$0.02 & 0.3$\pm$0.03 \\ 
   \hline
\end{tabular}
\caption{The first two rows contain actual coverage and average length of confidence intervals of \emph{causal Dantzig} for the first variable in SEM (A) of equation~\eqref{eq:simulatins}. The last row contains the actual coverage of ICP in these settings. The nominal coverage is $0.95$ for causal Dantzig and at least $0.95$ for ICP. For small sample sizes, the variance is relatively large. As discussed in Section~\ref{sec:high-dim}, regularization can be helpful in these settings.} 
\label{simulations1length}
\end{table}
 The \emph{causal Dantzig} has approximately correct coverage for all sample sizes in this example. For small
sample sizes, the variance of the \emph{causal Dantzig} is large and consequently the average length of the confidence intervals of
the \emph{causal Dantzig} is large, too. In such regimes, regularization
is recommended, as discussed in Section~\ref{sec:high-dim}. For larger
sample sizes, the
confidence intervals are shrinking considerably with the
$\sqrt{n}$-rate. For SEM (A), this effect is depicted  in
Table~\ref{simulations1length}.
Table~\ref{simulations2} shows these effects for SEM
(B). Note that also in this case the actual coverage of the \emph{causal Dantzig} is approximately correct.
 \begin{table}[ht]
\centering
\begin{tabular}{rllll}
  \hline
 & $n = 50 $ & $ 100 $ & $ 500 $ & $ 1000 $ \\ 
  \hline
Coverage & 0.95$\pm$0.01 & 0.95$\pm$0.01 & 0.96$\pm$0.01 & 0.96$\pm$0.01 \\ 
  Average length & 11354.75$\pm$2776.95 & 57.27$\pm$28842.69 & 0.69$\pm$7.28 & 0.39$\pm$3.73 \\ 
   \hline
\end{tabular}
\caption{Actual coverage and average length of confidence intervals for first variable in SEM (B) of equation~\eqref{eq:simulatins} with causal Dantzig. The nominal coverage is 0.95.} 
\label{simulations2}
\end{table}
 \subsection{Causal Dantzig and the instrumental variable approach}\label{sec:hidd-instr-vari}
To compare the \emph{causal Dantzig} to instrumental variables, consider a binary
instrument $e \in \{0,1\}$. To be more precise, we consider the model
\begin{align}\label{eq:6}
\begin{split}
  H,\epsilon_{1},\epsilon_{2} &\sim \mathcal{N}(0,1), e \in \{0,1\} \\
  X &= H + 2 e + \epsilon_{1} \\
  Y &= 2X + H + 2 \epsilon_{2}.
\end{split}
\end{align}
The corresponding DAG is depicted in Figure~\ref{fig:ivdag}. In words,
$X$ is a direct cause of $Y$, there is a hidden confounder $H$ that causes both $X$ and
$Y$, and $e$ is an instrument for $X$, meaning that $e$ is a root node
and a direct cause of $X$, but not of $H$ or $Y$. Note that the conditional mean differs between settings, i.e. $\mathbb{E}[X | e=1] \neq\mathbb{E}[X | e=0]$. Hence the IV approach is consistent for
the true causal effect from $X$ to $Y$, as discussed in Section~\ref{sec:comparison-with-instrumental-variables}.
 \begin{center}
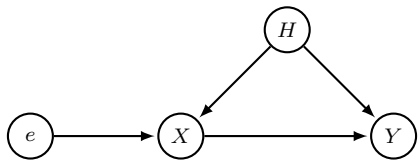
\begin{figure}
\begin{center}
\begin{tikzpicture}[->,>=latex,shorten >=1pt,auto,node distance=2cm,
                    thick]
  \tikzstyle{every state}=[draw=black,text=black, inner sep=0.4pt, minimum size=17pt]

  \node[state] (H) {$H$};
  \node[state] (Y) [below right of=H] {$Y$};
  \node[state] (X) [below left of=H] {$X$};
  \node[state] (e) [left of=X] {$e$};

 \draw (H) --  (Y); \draw (H) -- (X);  \draw (e) --  (X);  \draw (X) -- (Y);

\end{tikzpicture}
\caption{\label{fig:ivdag} The DAG corresponding to the model of equation~\eqref{eq:6}}
\end{center}
\end{figure}
\end{center}
For each environment $e \in \{0,1\}$ we generate $n$ samples and estimate
the direct causal effect via \emph{causal Dantzig} and instrumental variables regression using the function \texttt{ivreg} in the R-package \texttt{AER}. Table~\ref{crvsiv1} shows the mean square error for
both methods. For few observations, the \emph{causal Dantzig} is relatively unstable.
\begin{table}[ht]
\centering
\begin{tabular}{rlll}
  \hline
 & $n = 20 $ & $ 50 $ & $ 100 $ \\ 
  \hline
causal Dantzig & 0.46$\pm$0.41 & 0.03$\pm$0.01 & 0.01$\pm$0 \\ 
  ivreg & 0.07$\pm$0.01 & 0.02$\pm$0 & 0.01$\pm$0 \\ 
   \hline
\end{tabular}
\caption{Mean square error for varying $n$. Instrument is not weak.} 
\label{crvsiv1}
\end{table}
 For larger values of $n$, this is not
the case and  the mean square error shrinks at the $\sqrt{n}$-rate for both
estimators. The instrumental variables (IV) approach outperforms the \emph{causal Dantzig} in this example. This
is due to the fact that IV is a fraction of conditional means, whereas the
\emph{causal Dantzig} is a fraction of conditional covariances. Estimating
conditional means is statistically easier, but it comes at a certain price as we
will see below.\\
For the second model, we change the edge function between $e$ and $X$. Notably,
\begin{align}\label{eq:10}
\begin{split}
  H,\epsilon_{1},\epsilon_{2}, \epsilon_{3} &\sim \mathcal{N}(0,1), e
 \in \{0,1\} \\
  X &= H + 2 e \cdot (0.25 +  \epsilon_{3}) + \epsilon_{1} \\
  Y &= 2X + H + 2 \epsilon_{2}.
\end{split}
\end{align}
Both the conditional variance $\text{Var}(X|e=\bullet), \bullet \in \{0,1\}$ and the conditional mean $\mathbb{E}[X | e=\bullet], \bullet \in \{0,1\}$ change between the environments. However, the conditional mean changes only slightly, imposing difficulties for the IV approach. Again, for each environment $e \in \{0,1\}$ we
generate $n$ samples  and estimate the
direct causal effect via \emph{causal Dantzig} and  \texttt{ivreg}. As seen in Table~\ref{crvsiv2}, for very few observations, both \texttt{ivreg} and \emph{causal Dantzig}
 are comparatively far from the target quantity. For larger values of $n$, the \emph{causal Dantzig} converges with
 the $\sqrt{n}$-rate. The instrumental variables approach is consistent but unstable for these small sample sizes as the instrument is weak. It exhibits large MSE as it does not use the changing variance for inference.

\begin{table}[ht]
\centering
\begin{tabular}{rlll}
  \hline
 & $n = 20 $ & $ 50 $ & $ 100 $ \\ 
  \hline
causal Dantzig & 24.05$\pm$90.75 & 0.03$\pm$0 & 0.01$\pm$0 \\ 
  ivreg & 36634.21$\pm$161096.94 & 4244.29$\pm$15557.82 & 1862.7$\pm$8171.14 \\ 
   \hline
\end{tabular}
\caption{Mean square error for varying $n$. The instrument is weak, but \emph{causal Dantzig} can leverage changes in variance.} 
\label{crvsiv2}
\end{table}

\subsection{Causal Dantzig in high dimensions}\label{sec:hidd-high-dimens}

We consider a structural equation model, where the variables
$X_{1},\ldots, X_{p},Y$ form a chain and the distribution of the
unobserved confounder $\eta$ changes between the
environments. The corresponding directed acylic graph is depicted in Figure~\ref{fig:high-dim}.
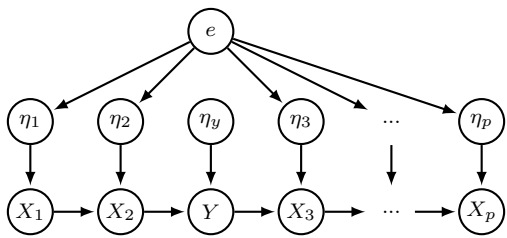
\begin{figure}
\begin{center}
\begin{tikzpicture}[->,>=latex,shorten >=1pt,auto,node distance=1.2cm,
                    thick]
  \tikzstyle{every state}=[draw=black,text=black, inner sep=0.4pt, minimum size=17pt]

  \node[state] (Y) {$Y$};
  \node[state] (etay) [above of=Y] {$\eta_{y}$};
\node[state] (e) [above of=etay] {$e$};
  \node[state] (X2) [left of=Y] {$X_{2}$};
  \node[state] (eta2) [above of=X2] {$\eta_{2}$};
  \node[state] (X1) [left of=X2] {$X_{1}$};
  \node[state] (eta1) [above of=X1] {$\eta_{1}$};
  \node[state] (X3) [right of=Y] {$X_{3}$};
  \node[state] (eta3) [above of=X3] {$\eta_{3}$};
  \node[state,draw=none] (X4) [right of=X3] {$...$};
  \node[state,draw=none] (eta4) [above of=X4] {$...$};

  \node[state] (Xp) [right of=X4] {$X_{p}$};
  \node[state] (etap) [above of=Xp] {$\eta_{p}$};

\draw (e) edge (eta1);
\draw (e) edge (eta2);
\draw (e) edge (eta3);
\draw (e) edge (etap);
\draw (e) edge (eta4);

\draw  (X1)  edge  (X2);
\draw  (X2)   edge  (Y);
\draw  (eta4) edge (X4);
\draw  (eta2)  edge (X2);
\draw  (eta3)  edge (X3);
\draw  (etap)  edge (Xp);
\draw  (eta1)  edge (X1);
\draw (etay) edge (Y);
\draw  (X3)  edge (X4);
\draw  (X4)  edge (Xp);
\draw  (Y)  edge (X3);

\end{tikzpicture}
\caption{\label{fig:high-dim} The directed acylic graph corresponding
  to SEM (C). }
\end{center}
\end{figure}
To be more precise, the distribution of the observed variables $e,X$ and $Y$
is generated according to the following structural equation model:

\begin{align}
 (C):\left\{\begin{array}{rcrrr}
X_1 & \leftarrow  &&& \eta_1 \\
X_2 & \leftarrow   &X_1&+& \eta_2 \\
X_{p+1} = Y & \leftarrow &  X_2&+& \eta_y \\
 X_3 & \leftarrow  & Y &+ &\eta_3 \\
 X_4 & \leftarrow & X_{3}&+ &\eta_4 \\
\vdots && \vdots  &&   \vdots \\
X_{p} & \leftarrow & X_{p-1} &+ &  \eta_{p}
\end{array} \right. \text{, with} \qquad
\begin{split}
  \eta_{k} &= \eta_{k}^{0} +  \delta_{k}^{e} \\
   \delta_{k}^{e} &= \begin{cases}
   0 & e=0 \text{ or } \\ & k=p+1,  \\
   z_{k}  & e=1,
 \end{cases} \\
z_{k} &\sim \mathcal{N}(0,\sigma^{2}) \text{ i.i.d.}, \\
\eta_{k}^{0} &\sim \mathcal{N}(0,1) \text{ i.i.d.}, \\
e &\in \{0,1\}, \\
\end{split}
\end{align}
We assume that $z_{k}$ and $\eta_{k}$ are jointly independent.
The
regularization parameter $\lambda$ is chosen by $10$-fold
cross-validation. Figure~\ref{fig:cool2} shows the regularization path
for two different values of $p$. Figure~\ref{fig:cool3} shows the
regularization path for varying intervention strength $\sigma$. Finally, in
Figure~\ref{fig:cool4} the number of samples collected from each environment $n :=n_{0}=n_{1}$ is varied. In a nutshell, cross-validation seems to select a
reasonable regularization parameter in most cases, estimation
performance deteriorates with increasing $p$, but improves with
increasing $n$ and drastically so with increasing intervention strength $\sigma$.
 \begin{figure}
\begin{center}
\includegraphics[width=0.4\textwidth]{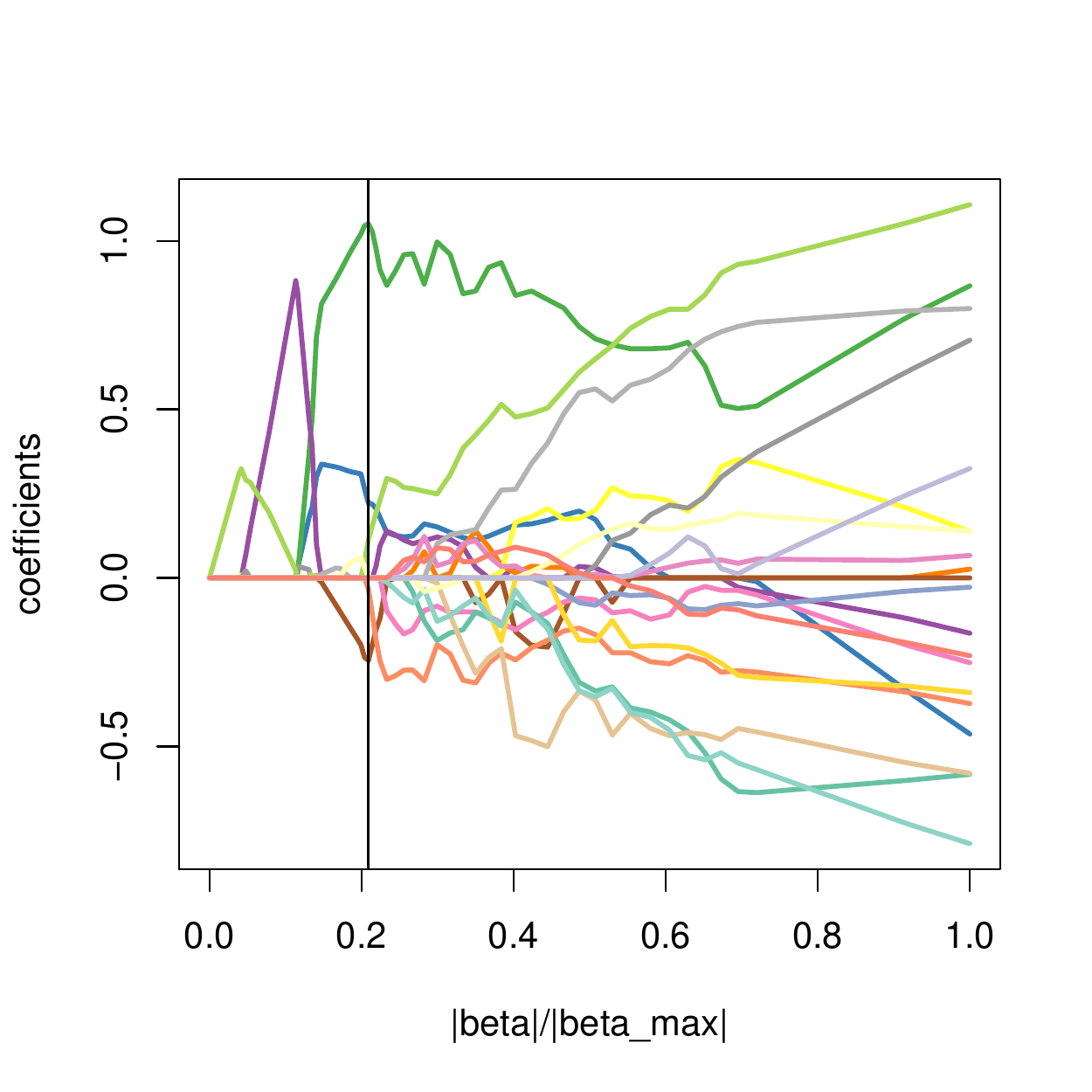}
\includegraphics[width=0.4\textwidth]{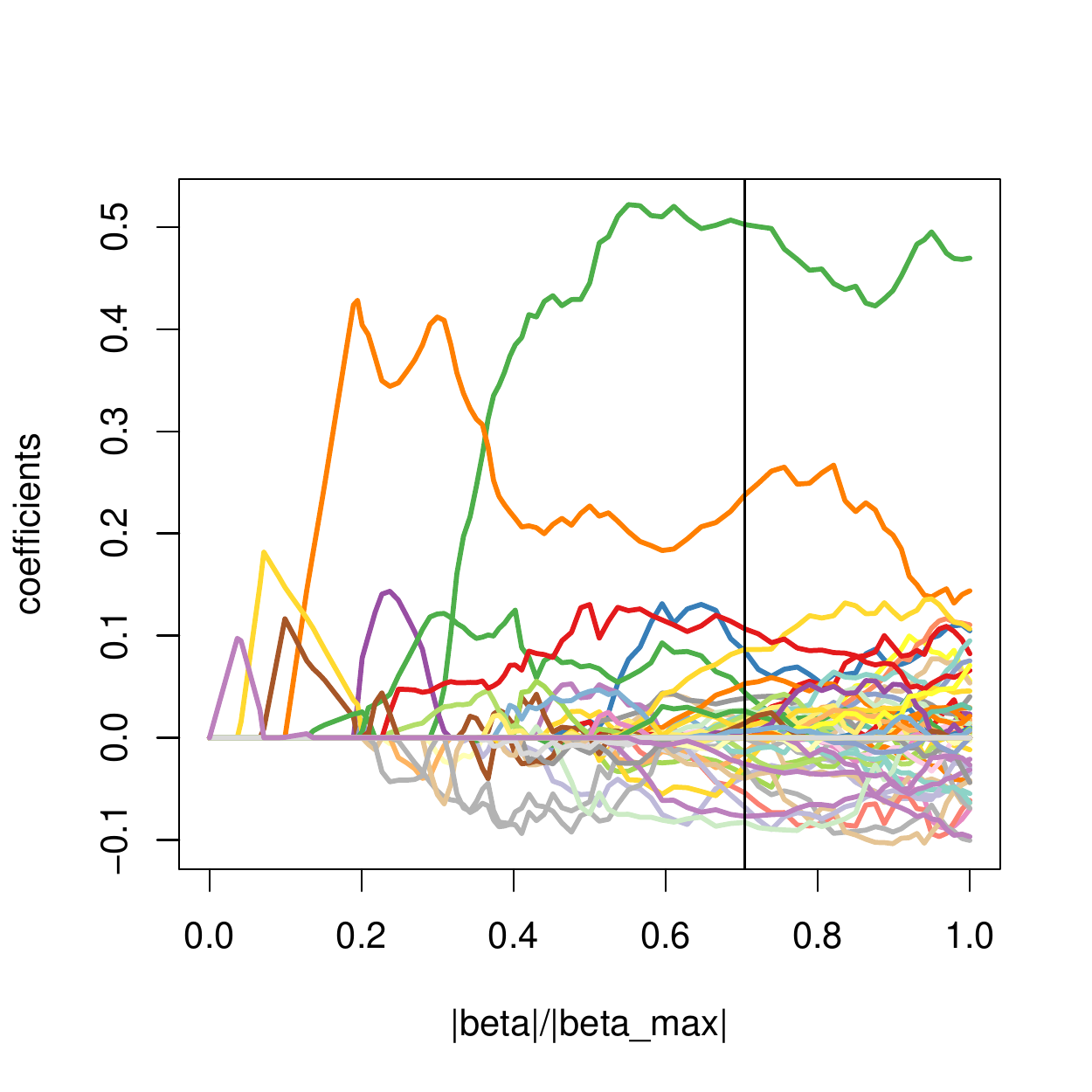}
\caption{ \label{fig:cool2} {\it  Two regularization paths for
    the \emph{causal Dantzig} with $n=30$ and $\sigma=2.5$.  The black vertical line specifies the
    solution chosen by $10$-fold cross-validation. On the left
    $p=20$, on the right $p=200$. The true underlying coefficient is equal to one for one variable (upper green line), and equal to zero for all other variables.
    Though not flawless, cross-validation
    chooses a reasonable regularization parameter in both
    cases.
  }}
\end{center}
\end{figure}

\begin{figure}
\begin{center}
\includegraphics[width=0.4\textwidth]{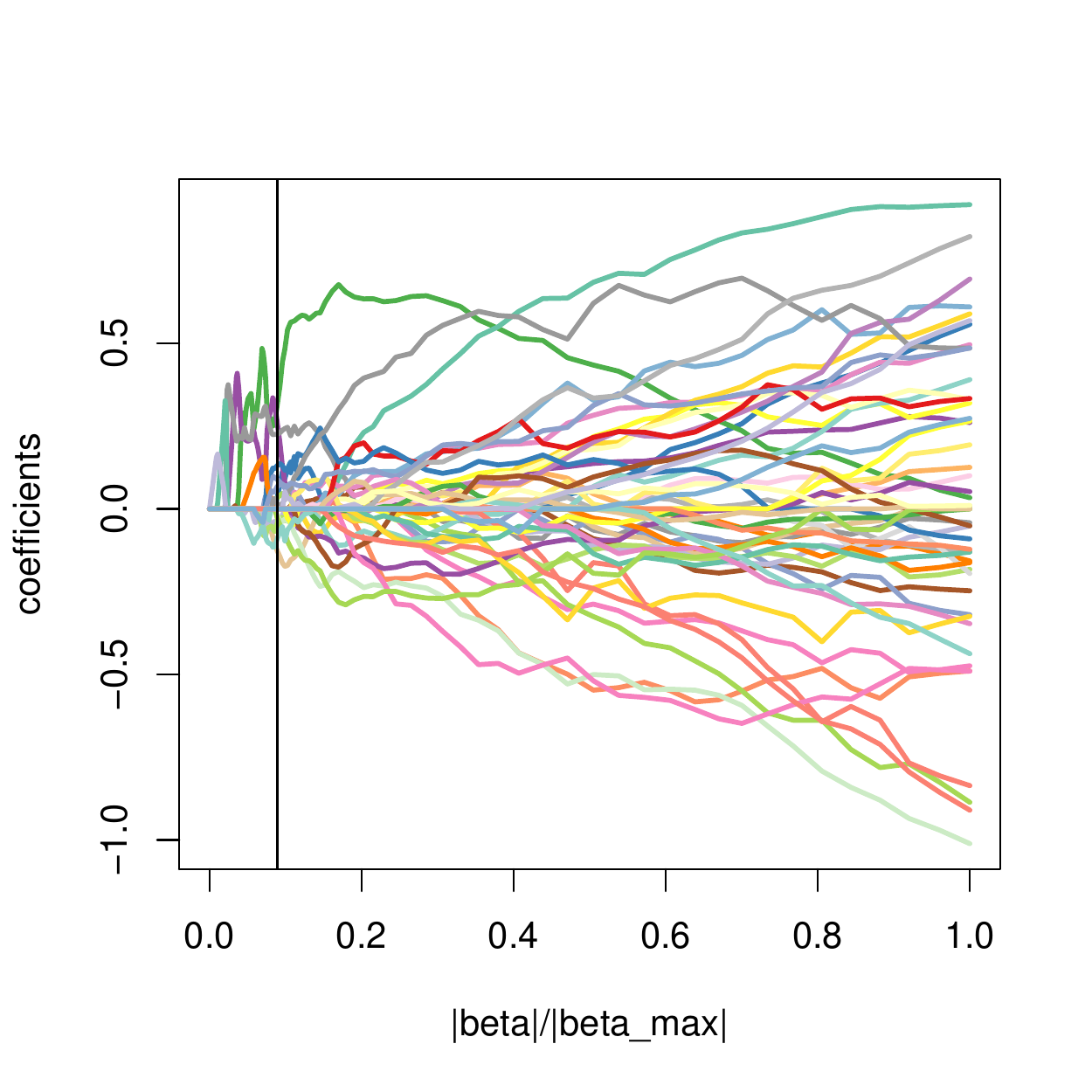}
\includegraphics[width=0.4\textwidth]{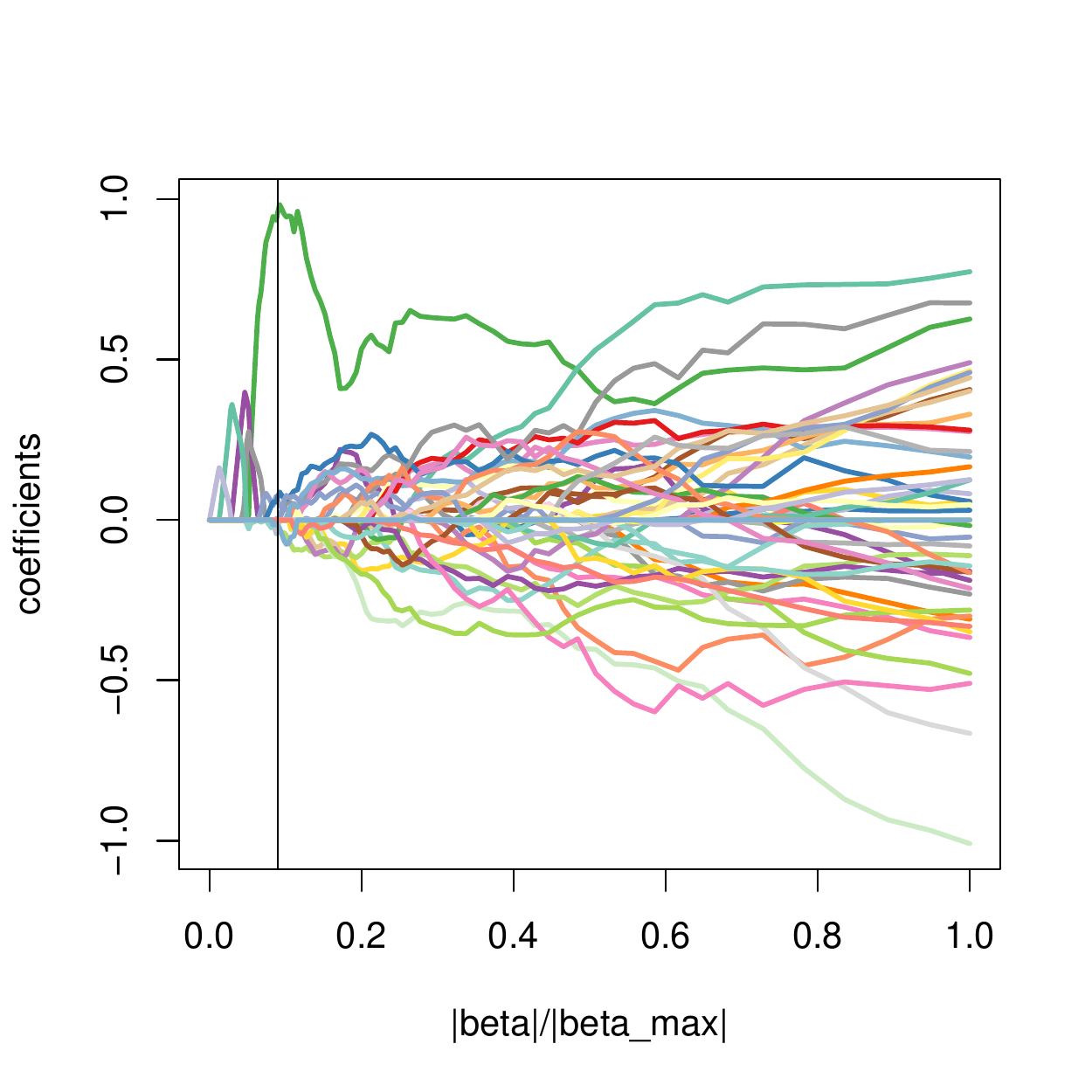}
\caption{ \label{fig:cool3} {\it  Two regularization paths for the
     \emph{causal Dantzig} with $p=30$ and $n=30$.  The black vertical line specifies the
    solution chosen by $10$-fold cross-validation. On the left the
    intervention strength is   $\sigma=2.5$, on the right it is
    $\sigma=3.5$. The true underlying coefficient is equal to one for one variable (upper green line), and equal to zero for all other variables.
Clearly, strong interventions improve estimation performance.}}
\end{center}
\end{figure}

\begin{figure}
\begin{center}
\includegraphics[width=0.4\textwidth]{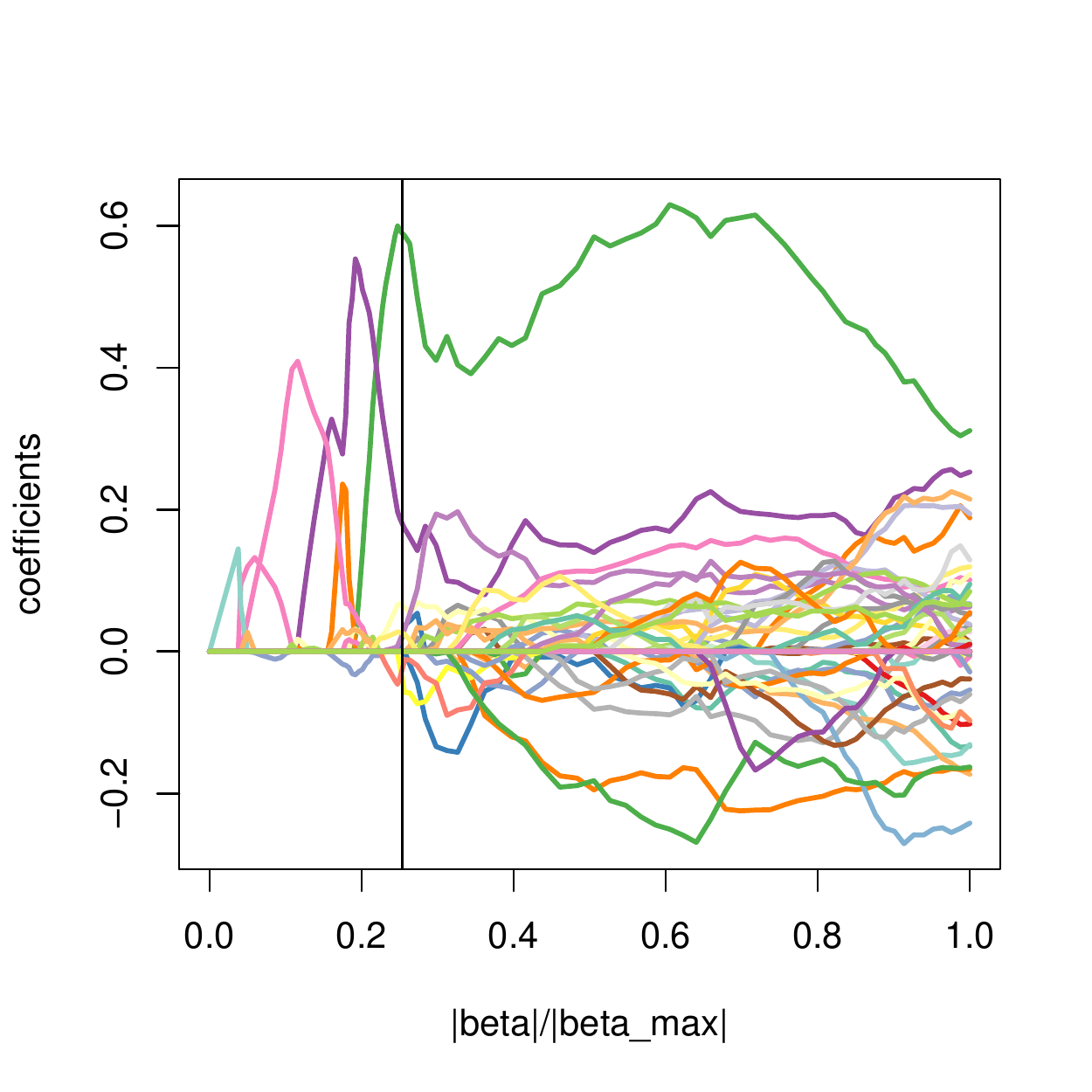}
\includegraphics[width=0.4\textwidth]{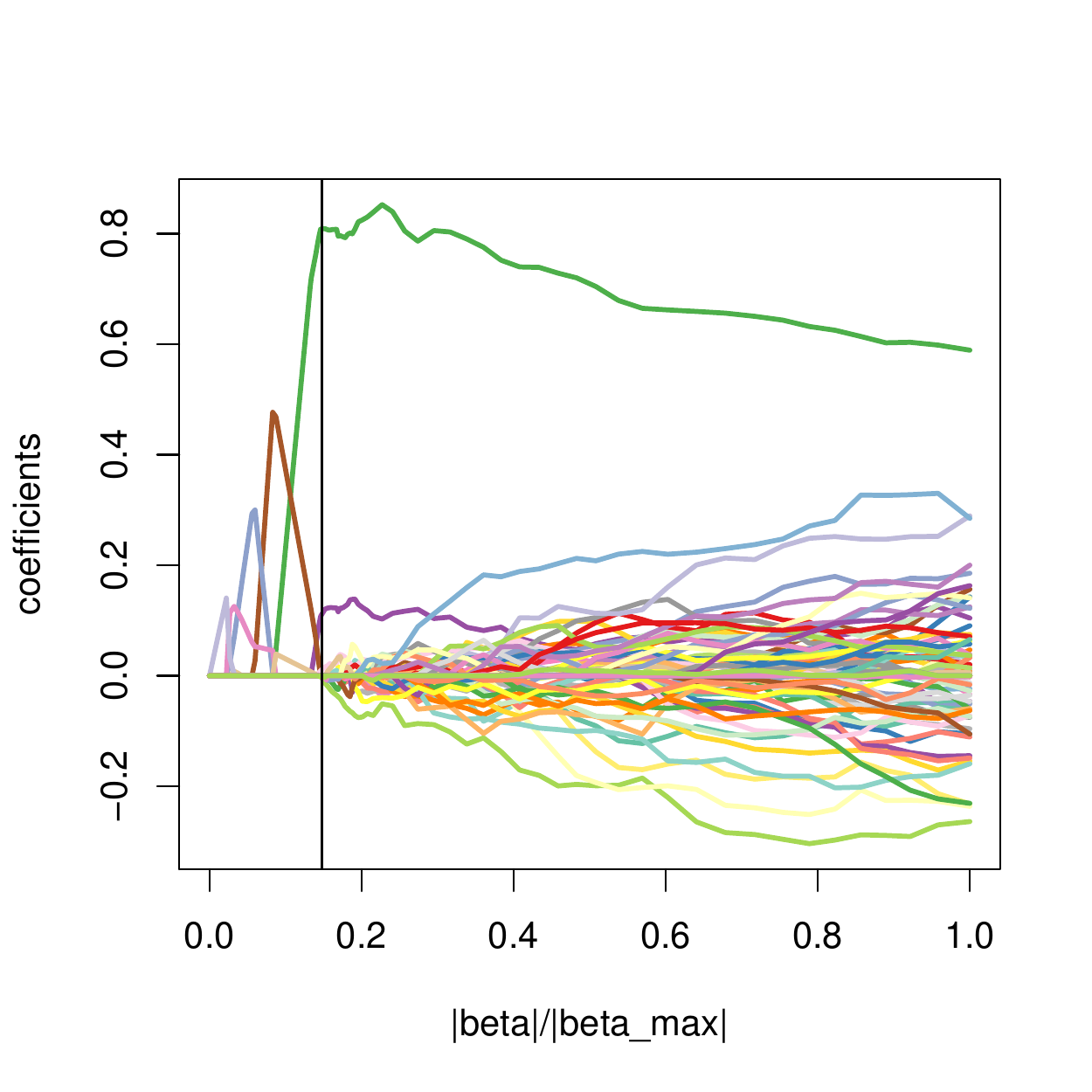}
\caption{ \label{fig:cool4} {\it  Two regularization paths for
    the \emph{causal Dantzig} with $p=100$ and $\sigma=3.5$.  The black vertical line specifies the
    solution chosen by $10$-fold cross-validation. On the left the
    sample size is $n=30$, on the right it is
    $n=60$. The true underlying coefficient is equal to one for one variable (upper
    green line), and equal to zero for all other variables. Estimation performance is clearly better on the right.}}
\end{center}
\end{figure}

\subsection{Gene knockout experiments}\label{sec:gene-knock-exper}

We outline here an application which has  appeared in \citet{meinshausen2016methods}.
The authors consider gene expression  in yeast (Saccharomyces cerevisiae) under deletion of single genes \citep{Kemmeren2014}:  $160$ samples are wild-type (observational); and $1,479$ samples are measured under the deletion of a single gene (intervention). For each of those observations, genome-wide mRNA expression levels  were measured. We denote these measurements by $X_{1},\ldots,X_{p+1}$, where $p+1=6170$. The goal is to predict whether mRNA expression level $Y=X_{p+1}$ changes significantly under a new and unobserved gene-deletion $X_{j}$, $j \neq p+1$. Knocking out a gene is not always successful, and the measured activity of a gene is not constant (or zero) after knocking it out, i.e. the intervention is ``noisy''. Overall, knockouts decrease the activity, which can be interpreted as a negative shift in the measured log-activity of a gene.
\\
The data is split into training and validation data. To this end, the $1,479$ interventional samples are divided into five sets $B_{1},\ldots,B_{5}$. For some $v \in \{1,\ldots,5\}$, the training data consists of the four sets $\{B_{i}\}_{i \in \{1,2,3,4,5\} \setminus \{v\}}$ and the $160$ observational samples.
The samples in  $B_{v}$ are held out for validation.   The interventional effects on the validation set $B_{v}$ were predicted using only training data. This procedure is carried out for all sets $B_{v}, v=1,\ldots,5$, i.e. each gene perturbation is excluded from the training set once. \\
Preselection with the LASSO was used on the pooled data to screen for a superset of the parental set of variable $X_{p+1}$. For some justification of this approach, see Section~\ref{sec:mark-blank-estim-1}. Then, the \emph{causal Dantzig} without regularization was used, with setting $e=1$ for observational data and $e=2$ for interventional training data. Using \emph{causal Dantzig} without screening step is computationally prohibitive due to the large number of variables and as the procedure is repeated for each possible target variable $X_{1},\ldots, X_{p+1}$. The $s$ most often selected intervention predictions were compared to so-called ``strong intervention effects''  (SIEs) as defined in \citet{meinshausen2016methods}. SIEs are computed on the held-out data $B_{v}$ and are a measure for the total causal effect. The results are depicted in Figure~\ref{fig:kemmeren}. As an example, for \emph{causal Dantzig} the four most often selected intervention predictions correspond to SIEs. \\
Screening for  causal effects is a very challenging problem in this setting, mainly due to the high-dimensionality of the dataset and the presence of hidden confounders. The ground truth is not perfectly known but good proxies (strong intervention effects) can be computed on hold-out interventional data.  The strongest discoveries of \texttt{InvariantCausalPrediction} (\texttt{ICP}) and \texttt{causalDantzig} correspond very well to the benchmark. Assuming hidden confounding and shift interventions (\texttt{causalDantzig}) leads to a different ranking of genes compared to assuming the absence of confounding and allowing for arbitrary interventions (\texttt{ICP}). Interestingly, while both methods miss some important variables, making ``wrong'' assumptions such as linearity or absence of latent confounding do not seem to lead to false positives for the first few variables in the ranking. This form of validation and the comparison to other methods are further discussed in
\citet{meinshausen2016methods}. 
\begin{figure}
\begin{center}
\includegraphics[width=0.5\textwidth]{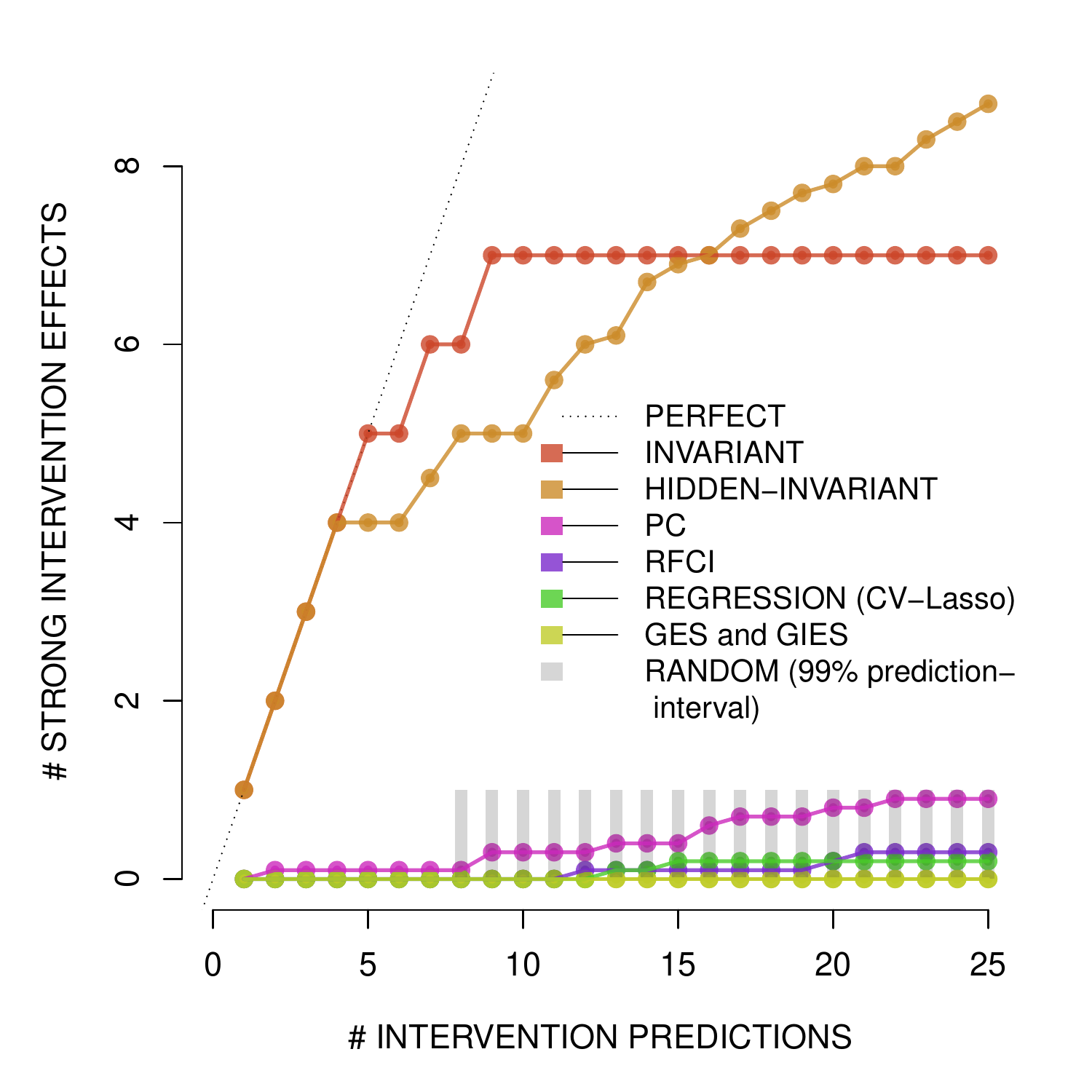}
\caption{ \label{fig:kemmeren} {\it  The results of the gene knockout
    experiments on the Kemmeren dataset \citep{Kemmeren2014}.  This figure has been published in \citet{meinshausen2016methods}. The method \texttt{HIDDEN-INVARIANT} is an unpublished early version of  \texttt{causalDantzig}.}}
\end{center}
\end{figure}

\section{Discussion}

Causal discovery is challenging, particularly in the presence of hidden confounders and feedback loops. However, hidden confounders can rarely be excluded  and feedback loops are to be expected in many real-world applications (e.g., in biological systems). We introduced the notion of \emph{inner-product invariance} and showed that inference in linear structural equation models under \emph{inner-product invariance} is  possible, both for low- and high-dimensional data.

The proposed methods have interesting parallels to widely-used statistical methods. For example, the functional form of the \emph{causal Dantzig} estimator is similar to linear regression. The regularized \emph{causal Dantzig} is similar to the Dantzig selector. For two environments ($|\E |=2$) the \emph{causal Dantzig} estimator can be compared with instrumental variable regression and is consistent in certain settings in which instrumental variable regression fails. Hence, we believe that the \emph{causal Dantzig} will push the boundaries in the analysis of certain types of datasets, in  particular in the analysis of datasets where potentially unknown interventions (or ``perturbations'')  change both the mean and the variance of the observed error distribution. Empirical results show state-of-the-art performance of our proposed estimator on a real-world dataset.

We investigated the identifiability of direct causal effects under the proposed model class. Furthermore, we showed that the regularized \emph{causal Dantzig} can be consistent in the high-dimensional case even if not all covariates have been intervened on.  The estimator can be obtained by solving a linear program and as such is feasible for large-scale causal inference. We derived asymptotic confidence intervals for the unregularized \emph{causal Dantzig}, as well as guarantees for statistical accuracy for the regularized \emph{causal Dantzig}.

The notion of \emph{inner-product invariance} pushes the boundaries for the types of datasets we can leverage for causal discovery. We expect it to be useful for practitioners, in particular as a simple and fast tool for screening for potential direct causal effects. From a theoretical perspective,  the regularized and unregularized \emph{causal Dantzig} provide new perspectives on invariant causal prediction, on the instrumental variable approach and on classical theory for high-dimensional estimation. \\

% \section*{Acknowledgements}
% The authors thank some anonymous reviewers and the Associate Editor for constructive comments. The comments were very helpful and as a result, the paper has improved substantially.

\newpage

\section{Appendix}

\begin{remark}[Reminder of Assumption~1 and some of the notation]
We assume that the distributions of $(X_{1}^{e},...,X_{p+1}^{e})$, $e \in \E$, are generated by the linear SEM
\begin{align*}
  X_{k}^{e} &\leftarrow \sum_{k' \neq k} A_{k,k'} X_{k'}^{e} + \eta_{k}^{e}, \qquad \text{ for } k=1,\ldots,p+1 \text{ and } e \in \E.
\end{align*}
Assume that there exist random variables $\eta^0,
\delta^e\in \mathbb{R}^p$ with $\mathrm{Cov}(\eta^{0}, \delta^{e}) = 0$ for all $\e\in \E$ such that $\eta^\e$ can be written as
\begin{equation*}
\eta^\e  \stackrel{d}{=} \eta^0  + \delta^\e \qquad\mbox{for all  } \e\in \E.
\end{equation*}
We assume that $\delta^\e_{p+1}\equiv 0$  for all $\e\in\E$ and $\mathbb{E}[\eta^{0}] = 0$.\\

We aim to infer the structural equation for variable $X_{p+1}$, hence we denote it by $Y$. Furthermore, for simplicity we write $\beta^{0} = (A_{p+1,k})_{k=1,\ldots,p}$. The values  $ A_{k,k'}$ form a $ (p+1) \times (p+1)$-dimensional matrix that we denote by $A$.
\end{remark}

\subsection{Proofs for Section~\ref{sec:cond-inner-prod}}

\subsubsection{Proof of Proposition~\ref{proposition:const}}

\begin{proof}
Recall that $Y=X_{p+1}$. We can write equation~\eqref{SEM} under Assumption~\ref{assum:additive} more compactly as $X_{1:(p+1)}^{e} = A X_{1:(p+1)}^{e} + \eta^{e}$, where $A$ is the matrix that contains the structural parameters $A_{k,k'}$. In other words,
\begin{equation*}
  X_{1:(p+1)}^{e} = \left(\text{Id} - A \right)^{-1} \eta^{e}.
\end{equation*}
In the following, we denote the $k$-th unit vector in $\mathbb{R}^{p}$ by $e^{(k)}$, i.e.
\begin{equation*}
  e^{(k)}_{k'} = \begin{cases}
    1 & \text{if } k=k', \\
   0 & \text{else.}
  \end{cases}
\end{equation*}
Recall that $\beta^{0} = (A_{p+1,k})_{k=1,\ldots,p}$ and $Y^{e}=X_{p+1}^{e} $. By Assumption~\ref{assum:additive}, $Y^{e} -  X^{e} \beta^0 = \eta_{p+1}^{e}$. Hence,
\begin{align*}
  X_{k}^{e} (Y^{e} -  X^{e} \beta^0) &=  X_{k}^{e} \left( X_{p+1}^{e} -  \sum_{k' \neq  (p+1)}A_{p+1,k'}X_{k'}^{e} \right) \\
  &= (e^{(k)})^{t}  \left(\text{Id} - A \right)^{-1} \eta^{e} \eta_{p+1}^{e} \qquad \text{ for } k=1,\ldots,p.
\end{align*}
Now we can again use Assumption~\ref{assum:additive}. Recall that $\eta^{e} = \eta^{0} + \delta^{e}$, $\mathbb{E} \eta_{p+1}^{e} = 0$ and that $\eta_{p+1}^{0}$ and $\delta^{e}$ are uncorrelated. Hence for $k=1,\ldots,p$,
\begin{align*}
  \mathbb{E} \left[ X_{k}^{e} \left( X_{p+1}^{e} -  \sum_{k'\neq (p+1)} A_{p+1,k'}X_{k'}^{e} \right)  \right] &= (e^{(k)})^{t} \left(\text{Id} - A \right)^{-1} \mathbb{E} \left[  \eta^{e} \eta_{p+1}^{e}  \right] \\
&= (e^{(k)})^{t}   \left(\text{Id} - A \right)^{-1} \mathbb{E} \left[  ( \eta^{0} + \delta^{e} ) \eta_{p+1}^{0}  \right] \\
&=  (e^{(k)})^{t} \left(\text{Id} - A \right)^{-1} \mathbb{E} \left[   \eta^{0}  \eta_{p+1}^{0}  \right].
\end{align*}
Note that this quantity is the same for all environments $e \in \E$, which concludes the proof.
\end{proof}

\subsubsection{Proof of Proposition~\ref{proposition:errors-variables}}

\begin{proof}
 For all $e,f \in \E$ and $k \in \{1,\ldots,p\}$,
\begin{align*}
\begin{split}
 \mathbb{E} \left[ \tilde X^\e_k ( \tilde Y^\e- \tilde X^\e\beta^{0}) \right] &=  \mathbb{E} \left[ ( X^\e_k  + \zeta_{k}^{e} ) (  Y^\e + \zeta_{y}^{e}- \sum_{k=1}^{p} (X_{k}^\e  + \zeta_{k}^{e}) \beta_{k}^{0} \right] \\
&= \mathbb{E} \left[  X^\e_k  ( Y^\e - \sum_{k=1}^{p} X_{k}^\e \beta^{0}_{k}) -  \sum_{k=1}^{p}  (\zeta_{k}^{e})^{2} \beta_{k}^{0} \right] \\
&= \mathbb{E} \left[  X^f_k  ( Y^f - \sum_{k=1}^{p} X_{k}^f \beta^{0}_{k}) -  \sum_{k=1}^{p}  (\zeta_{k}^{f})^{2} \beta_{k}^{0}\right] \\
&=  \mathbb{E} \left[ ( X^f_k  + \zeta_{k}^{f} )(  Y^f + \zeta_{y}^{f}- \sum_{k=1}^{p} (X_{k}^f  + \zeta_{k}^{f}) \beta_{k}^{0} ) \right] \\
&= \mathbb{E}\left[ \tilde X^f_k ( \tilde Y^f- \tilde X^f\beta^{0})\right].
\end{split}
\end{align*}
In the first line and third line we used that $\zeta_{1}^{e},\ldots,\zeta_{k}^{e},\zeta_{y}^{e}$ are centered and jointly independent for all $e \in \E$. In the second line we used that we have inner product invariance for $(X^{e},Y^{e})$, $e \in \E$ under $\beta^{0}$ and that $\mathbb{E}[(\zeta_{k}^{e})^{2}]=\mathbb{E}[(\zeta_{k}^{f})^{2}]$ for all $e,f \in \E$ and $k=1,\ldots,p$. This proves that we also have inner-product invariance for $(\tilde X^{e}, \tilde Y^{e}), e \in \E$ under $\beta^{0}$.

\end{proof}

\subsection{Proofs for Section~\ref{sec:causal-dantzig}}

\subsubsection{Proof of Theorem~\ref{theorem:confidence-intervals}}

\begin{proof}
First note  that $V^{1}$ and $V^{2}$ are invertible as $\G$ and the covariance matrix of $(X^{e})^{t} \eta_{p+1}^{e}$, $e \in \{1,2\}$ are assumed to be invertible. Now note that by inner-product invariance of $(X^{e}, Y^{e})$ under $\beta^{0}$ we have $\G^{-1} \Z = \beta^{0}$ and hence
\begin{equation*}
   (X^{e})^{t} \eta_{p+1}^{e}=  - (X^{e})^{t} X^{e} \G^{-1} \Z + (X^{e})^{t} Y^{e} \text{ for } e \in \{1,2\}.
\end{equation*}
In particular,
\begin{equation}\label{eq:43}
  \G^{-1} (X^{e})^{t} \eta_{p+1}^{e}=  - \G^{-1} (X^{e})^{t} X^{e} \G^{-1} \Z + \G^{-1} (X^{e})^{t} Y^{e} \text{ for } e \in \{1,2\}.
\end{equation}
We denote $\mbox{GL}_{p}$ the set of real-valued invertible $p \times p$ matrices. Define the function $ f : \mbox{GL}_p \times \mathbb{R}^p  \rightarrow
\mathbb{R}^p $  by
\begin{equation*}
  f(\tilde \G,\tilde \Z) := \tilde \G^{-1} \tilde \Z.
\end{equation*}
By elementary matrix algebra, this function is continuously differentiable with
derivative in direction  $(D,d) \in \mathbb{R}^{p \times p} \times \mathbb{R}^p$
\begin{equation*}
  D_{(D,d)}f(\tilde \G, \tilde \Z) =-  \tilde \G^{-1} D \tilde \G^{-1}\tilde \Z+ \tilde \G^{-1}d.
\end{equation*}
As $( \hat \G, \hat \Z ) -\left( \G, \Z\right) = \mathcal{O}_{P}\left( \max \left( \frac{1}{\sqrt{n_{1}}} , \frac{1}{\sqrt{n_{2}}}  \right) \right)$ and $\beta^{0} = f(\G, \Z)$, the delta method yields
\begin{align*}
   &\left( \frac{V^{1}}{n_{1}} + \frac{V^{2}}{n_{2}} \right)^{-\frac{1}{2}} \left( \hat \beta - \beta^{0} \right) \\
=&\left( \frac{V^{1}}{n_{1}} + \frac{V^{2}}{n_{2}} \right)^{-\frac{1}{2}} \left( f(\hat \G, \hat \Z) - f(\G, \Z) \right) \\
=& \left( \frac{V^{1}}{n_{1}} + \frac{V^{2}}{n_{2}} \right)^{-\frac{1}{2}}
    \left(D_{(\hat \G -\G,\hat \Z - \Z)}f(\G,\Z)+\scriptstyle\mathcal{O}
    \textstyle_P \left(\max \left( \frac{1}{\sqrt{n_{1}}} , \frac{1}{\sqrt{n_{2}}}  \right) \right) \right) \\
  =& \left( \frac{V^{1}}{n_{1}} + \frac{V^{2}}{n_{2}} \right)^{-\frac{1}{2}} \left( - \G^{-1}( \hat \G -\G) \G^{-1}\Z+
    \G^{-1} (\hat \Z - \Z) \right)+\scriptstyle\mathcal{O}
    \textstyle_P (1) \\
\rightharpoonup &\mathcal{N}(0,\text{Id}).
\end{align*}
In the last line we used independence of the samples of environment $e=1$ and $e=2$, the CLT and the definition of $V^{1}$ and $V^{2}$ together with equation~\eqref{eq:43}.

\end{proof}

\subsubsection{Proof of Theorem~\ref{theorem:popul-g-invert}}

\begin{proof}
\textbf{Part A:} In this part we prove claim 2 and $"\Leftarrow"$ of claim 1.  By Proposition~\ref{proposition:const} we have inner-product invariance for $\beta^{0}$ and hence the solution set of the population \emph{causal Dantzig} contains $\beta^{0}$.  Assume that for each $k$ there exists $e \in \E$ such that $\delta_{k}^{e} \not \equiv 0$. We want to show that under this assumption the \emph{causal Dantzig} is unique in the population case. By Proposition~\ref{proposition:const} we have inner-product invariance for $\beta^{0}$ and hence each solution $\beta^{*}$ to the population \emph{causal Dantzig} satisfies
\begin{equation*}
  \max_{e \in \E} \| \mathbb{E}[\hat \Z^{e}] - \mathbb{E}[\hat \G^{e}] \beta^{*}] \|_{\infty} =0.
\end{equation*}
Denote $\tilde e$ the ``observational'' environment, i.e. the environment with $\delta_{k}^{\tilde e} \equiv 0$ for $k=1,\ldots,p$. By inner-product invariance under $\beta^{0}$,
\begin{equation*}
  \mathbb{E}[\hat \Z^{\tilde e}] - \mathbb{E}[\hat \G^{\tilde e}] \beta^{*}  = 0 =   \mathbb{E}[\hat \Z^{\tilde e}] - \mathbb{E}[\hat \G^{\tilde e}] \beta^{0}.
\end{equation*}
By rearranging,
\begin{equation*}
  \mathbb{E}[\hat \G^{\tilde e}] \left( \beta^{*} - \beta^{0}\right) = 0.
\end{equation*}
As we want to show $\beta^{*} = \beta^{0}$ it suffices to show that $ \mathbb{E} [ \hat \G^{\tilde e}]$ is invertible.
In the following, for notational brevity we write $(\mbox{Id}-A)_{1:p,1:p}^{-t}$ instead of  $\left((\mbox{Id}-A)^{-t} \right)_{1:p,1:p}$ and $(\mbox{Id}-A)_{1:p,1:p}^{-1}$  instead of  $\left((\mbox{Id}-A)^{-1} \right)_{1:p,1:p}$.
By definition, we have
\begin{align*}
  \mathbb{E}[\G^{\tilde e}] &= \mathbb{E} \left[  \left( X^{\tilde e} \right)^{t} X^{\tilde e} - \frac{1}{|\E|-1} \sum_{e \neq \tilde e} \left(  X^{e}\right)^{t} X^{e} \right] \\
&= \left( (\mbox{Id}-A)^{-1} \right)_{1:p,\bullet}  \mathbb{E} \left[    \left(  \delta^{\tilde e}  \right)^{t}  \delta^{\tilde e} - \frac{1}{|\E|-1}  \sum_{e \neq \tilde e}\left(  \delta^{e}   \right)^{t} \delta^{e}                     \right] \left((\mbox{Id}-A)^{-t} \right)_{\bullet,1:p}  \\
&= (\mbox{Id}-A)_{1:p,1:p}^{-1}  \mathbb{E} \left[    \left(  \delta_{1:p}^{\tilde e}  \right)^{t}  \delta_{1:p}^{\tilde e} - \frac{1}{|\E|-1}  \sum_{e \neq \tilde e}\left(  \delta_{1:p}^{e}   \right)^{t} \delta_{1:p}^{e}                     \right] (\mbox{Id}-A)_{1:p,1:p}^{-t}.
\end{align*}
In the last line we used $\delta_{p+1}^{e} \equiv 0$ for all $e \in \E$. As setting $\tilde e$ is ``observational'', i.e. $\delta^{\tilde e} \equiv 0$,
\begin{align}\label{eq:27}
  \mathbb{E}[\G^{\tilde e}] = \; &  (\mbox{Id}-A)_{1:p,1:p}^{-1}  \mathbb{E} \left[    \left(  \delta_{1:p}^{\tilde e}  \right)^{t}  \delta_{1:p}^{\tilde e} - \frac{1}{|\E|-1}  \sum_{e \neq \tilde e}\left(  \delta_{1:p}^{e}   \right)^{t} \delta_{1:p}^{e}                     \right] (\mbox{Id}-A)_{1:p,1:p}^{-t} \nonumber \\
= \; &  (\mbox{Id}-A)_{1:p,1:p}^{-1}  \mathbb{E} \left[      - \frac{1}{|\E|-1}  \sum_{e \neq \tilde e}\left(  \delta_{1:p}^{e}   \right)^{t} \delta_{1:p}^{e}  \right] (\mbox{Id}-A)_{1:p,1:p}^{-t}
\end{align}
Now we want to show that $ (\mbox{Id}-A)_{1:p,1:p}^{-1}$ is invertible. If this is not the case then there exists $\gamma \in \mathbb{R}^{p} \setminus \{0\} $ such that $\gamma^{t}  (\mbox{Id}-A)_{1:p,1:p}^{-1} = 0$. As $ (\mbox{Id}-A)^{-1}$ is invertible,
\begin{align}\label{eq:39}
\begin{split}
0 \neq \; &  (\gamma,0)^{t}  (\mbox{Id}-A)^{-1} \\
= \; &\left(0,\ldots,0, (\gamma,0)^{t}  (\mbox{Id}-A)_{1:(p+1),p+1}^{-1} \right)  \\
= \; & (0,\ldots,0, \gamma^{t} (\mbox{Id}-A)_{1:p,p+1}^{-1})
\end{split}
\end{align}
In particular, $ \gamma^{t} (\mbox{Id}-A)_{1:p,p+1}^{-1} \neq 0$. As $(X^{e},Y^{e})^{t} =  (\mbox{Id}-A)^{-1} ( \eta^{e})^{t}$, by equation~\eqref{eq:39} we have $\gamma^{t} (X^{e})^{t} =   \gamma^{t} (\mbox{Id}-A)_{1:p,p+1}^{-1}   \eta_{p+1}^{e}$. As $Y^{e}= X_{p+1}^{e}= X^{e} \beta^{0} + \eta_{p+1}^{e}$,
\begin{align*}
  Y^{e} &= X^{e} \beta^{0} + \eta_{p+1}^{e} \\
&= X^{e} \beta^{0} + X^{e} \frac{\gamma}{ \gamma^{t} (\mbox{Id}-A)_{1:p,p+1}^{-1} } \\
&= X^{e} \left( \beta^{0}  + \frac{\gamma}{ \gamma^{t} (\mbox{Id}-A)_{1:p,p+1}^{-1} } \right).
\end{align*}
As we assumed that the Gram matrix of $(X^{e},Y^{e})$ is positive definite for all $e \in \E$, this is a contradiction.  Hence  $ (\mbox{Id}-A)_{1:p,1:p}^{-1}$ is invertible.
Thus the matrix in equation~\eqref{eq:27} is invertible if and only if
\begin{equation*}
  \mathbb{E} \left[\frac{1}{|\E|-1}  \sum_{e \neq \tilde e}\left(  \delta_{1:p}^{e}   \right)^{t} \delta_{1:p}^{e} \right]
\end{equation*}
is invertible. Let $\xi \in \mathbb{R}^{p} \setminus \{0\}$ such that  $  \xi^{t}  \mathbb{E} \left[\sum_{e \neq \tilde e} \left(  \delta_{1:p}^{e}   \right)^{t} \delta_{1:p}^{e}  \right] \xi = 0$. We will lead this to a contradiction. As all matrices $\mathbb{E} \left[\left(\delta_{1:p}^{e} \right)^{t} \delta_{1:p}^{e} \right]$,  $e \neq \tilde e$ are positive semi-definite we have
\begin{equation*}
  \xi^{t} \mathbb{E} \left[\left(\delta_{1:p}^{e} \right)^{t} \delta_{1:p}^{e} \right] \xi =0 \qquad \text{ for all } e \neq \tilde e.
\end{equation*}
As $\xi \neq 0$ there exists $k$ such that $\xi_{k} \neq 0$. Fix such a $k$. By assumption there exists $e \neq \tilde e$ such that $\delta_{k}^{e} \not \equiv 0$. Fix such an environment $e$. Define $S = \{ s : 1 \le s \le p \text{ such that } \delta_{s}^{e} \not \equiv 0\}$, the support of $\delta_{1:p}^{e}$. By definition,
\begin{equation*}
 \xi^{t} \mathbb{E} \left[\left(\delta_{1:p}^{e} \right)^{t} \delta_{1:p}^{e} \right] \xi=  \xi_{S}^{t}\mathbb{E} \left[\left(\delta_{S}^{e} \right)^{t} \delta_{S}^{e} \right] \xi_{S}.
\end{equation*}
But by assumption, the matrix  $\mathbb{E} \left[\left(\delta_{S}^{e} \right)^{t} \delta_{S}^{e} \right]$ is positive definite. Hence
\begin{equation*} \xi_{S}^{t} \mathbb{E} \left[\left(\delta_{S}^{e} \right)^{t} \delta_{S}^{e} \right] \xi_{S} >0. \end{equation*}
Contradiction! Thus $\mathbb{E} \left[\hat \G^{e}\right]$ is invertible and $\beta^{*}=\beta^{0}$. This concludes the proof of part A.\\

\textbf{Part B:}
In this part we prove $"\Rightarrow"$ of claim 1. Proof by contradiction. Assume there exists a $k$ such that $\delta_{k}^{e} \equiv 0$ for all $e \in \E$. We want to show that there exists a second SEM with $\tilde \beta^{0} \neq \beta^{0}$ that satisfies Assumption~\ref{assum:additive} and generates the distributions  of $(X^{e}, Y^{e}), e \in \E$. Fix a $k$ such that $\delta_{k}^{e} \equiv 0$ for all $e \in \E$. As above, it is possible to show that  for all $ \tilde e \in \E$,
\begin{align}\label{eq:32}
 \mathbb{E}[\G^{\tilde e}] =    (\mbox{Id}-A)_{1:p,1:p}^{-1}  \mathbb{E} \left[    \left(  \delta_{1:p}^{\tilde e}  \right)^{t}  \delta_{1:p}^{\tilde e} - \frac{1}{|\E|-1}  \sum_{e \neq \tilde e}\left(  \delta_{1:p}^{e}   \right)^{t} \delta_{1:p}^{e}                     \right] (\mbox{Id}-A)_{1:p,1:p}^{-t} .
\end{align}
As $\delta_{k}^{e} \equiv 0$ for all $e \in \E$ there exists $\Delta \in \mathbb{R}^{p} \setminus \{0\}$ such that $\mathbb{E}[\G^{\tilde e}] \Delta = 0$ for all $\tilde e \in \E$.
Now we want to show that there exists a SEM with $\tilde \beta^{0} \neq \beta^{0} $ that generates the distributions of $(X^{e},Y^{e}), e \in \E$ and satisfies Assumption~\ref{assum:additive}. For $X_{1:p}$ we keep the structural equations $\tilde A_{1:p,\bullet} = A_{1:p,\bullet}$. For the variable $Y$ we define the new structural equation
\begin{equation*}
\tilde A_{p+1,\bullet} := \begin{pmatrix} \beta^{0} + \gamma \Delta \\ 0 \end{pmatrix}^{t},
\end{equation*}
where we choose $\gamma$ small enough to make $\text{Id} - \tilde A $ invertible. \\ Furthermore define
$ (\tilde \eta^{0})^{t} = \left( \text{Id} - \tilde A \right) \left( \text{Id} - A \right)^{-1} (\eta^{0})^{t}$ and $\tilde  \delta^{e} =  \delta^{e}$. Note that this SEM still satisfies that one environment $e$ is ``observational'', i.e. $\tilde \delta^{e} \equiv 0$ and that all interventions $\tilde \delta^{e}$ are full-rank on its support as the same holds true for $\delta^{e}$. Now we want to show that this SEM satisfies inner-product invariance under $\tilde \beta^{0} = \beta^{0} + \gamma \Delta$. By inner-product invariance under $\beta^{0}$, and as $\mathbb{E}[\G^{e}] \Delta =0$ for all $e \in \E$,
\begin{align*}
 \mathbb{E}[  \Z^{e}] - \mathbb{E}[ \G^{e} ] \tilde \beta^{0}  &= \mathbb{E}[  \Z^{e}] - \mathbb{E}[ \G^{e} ] (\beta^{0} + \gamma \Delta) \\
&=  \mathbb{E}[  \Z^{e}] - \mathbb{E}[ \G^{e} ] \beta^{0} \\
&=  \mathbb{E}[  \Z^{e'}] - \mathbb{E}[ \G^{e'} ] \beta^{0} \\
&=  \mathbb{E}[  \Z^{e'}] - \mathbb{E}[ \G^{e'} ] (\beta^{0} + \gamma \Delta) \\
&=  \mathbb{E}[  \Z^{e'}] - \mathbb{E}[ \G^{e'} ] \tilde \beta^{0}
\text{ for all $e,e' \in \E$.}
\end{align*}
Hence we also have inner-product invariance of $(X^{e}, Y^{e}), e \in \E$ under $\tilde \beta^{0}$. Now we want to show that the new SEM generates the distributions of $(X^{e},Y^{e}), e \in \E$, i.e. we want to show that
\begin{equation}\label{eq:28}
  (X^{e},Y^{e})^{t} \overset{!}{=} \left( \text{Id} - \tilde A\right)^{-1} \left( \tilde \eta^{0} + \tilde \delta^{e} \right)^{t}.
\end{equation}
By definition,
\begin{align*}
(X^{e},Y^{e})^{t} = \left( \text{Id} -  A\right)^{-1} \left(  \eta^{0} + \delta^{e} \right)^{t},
\end{align*}
and again by definition we know $\tilde \eta^{0} = \left( \text{Id} - \tilde A \right) \left( \text{Id} - A \right)^{-1} \left( \eta^{0} \right)^{t}$. Hence to prove equation~\eqref{eq:28} it suffices to show
\begin{equation*}
  \left( \text{Id} -  A\right)^{-1} \left( \delta^{e} \right)^{t} \overset{!}{=} \left( \text{Id} - \tilde A\right)^{-1} \left( \tilde \delta^{e} \right)^{t}.
\end{equation*}
As we defined $\tilde \delta^{e} := \delta^{e}$ it suffices to show
\begin{equation*}
  \left( \text{Id} -  A\right)^{-1} \left( \delta^{e} \right)^{t} \overset{!}{=} \left( \text{Id} - \tilde A\right)^{-1} \left(  \delta^{e} \right)^{t}.
\end{equation*}
Rearranging yields
\begin{equation}\label{eq:29}
 \left( \text{Id} - \tilde A\right) \left( \text{Id} -  A\right)^{-1} \left( \delta^{e} \right)^{t} \overset{!}{=}  \left(  \delta^{e} \right)^{t}.
\end{equation}
We know that there exists $\tilde e$ such that $\delta^{\tilde e} \equiv 0$. Using equation~\eqref{eq:27} we obtain
\begin{equation}\label{eq:30}
 \mathbb{E}[\G^{\tilde e}] =    (\mbox{Id}-A)_{1:p,\bullet}^{-1}  \mathbb{E} \left[  - \frac{1}{|\E|-1}  \sum_{e \neq \tilde e}\left(  \delta^{e}   \right)^{t} \delta^{e}                     \right] (\mbox{Id}-A)_{\bullet,1:p}^{-t}.
\end{equation}
By construction $\mathbb{E}[ \G^{\tilde e}] \Delta = 0$, which implies that $\Delta^{t} \mathbb{E}[\G^{\tilde e}] \Delta = 0$. Combining this fact with equation~\eqref{eq:30} yields
\begin{equation*}
  \delta^{e}   (\mbox{Id}-A)^{-t} \begin{pmatrix}\Delta \\ 0 \end{pmatrix} = 0 \qquad \text{ for all } e \in \E.
\end{equation*}
Equivalently,
\begin{equation*}
 (\Delta^{t},0)    (\mbox{Id}-A)^{-1} ( \delta^{e})^{t} = 0.
\end{equation*}
Now we can prove equation~\eqref{eq:29}:
\begin{align*}
 \left( \text{Id} - \tilde A\right) \left( \text{Id} -  A\right)^{-1} \left( \delta^{e} \right)^{t} &= \left(\text{Id} - A - \begin{pmatrix}
  0_{p,p} & 0_{p,1} \\
  \gamma \Delta^{t} & 0 \\
\end{pmatrix} \right) \left( \text{Id} -  A\right)^{-1} \left( \delta^{e} \right)^{t} \\
&= \left( \text{Id} - A\right) \left(\text{Id} - A \right)^{-1} ( \delta^{e})^{t} \\
&=\left( \delta^{e} \right)^{t}.
\end{align*}
This proves equation~\eqref{eq:29} and hence the new SEM generates $(X^{e},Y^{e}), e \in \E$. Hence $\beta^{0}$ is not identifiable. This concludes the proof of part B.

\end{proof}

\subsubsection{Proof of Theorem~\ref{thm:partiali}}

\begin{proof}
It suffices to show that
\begin{equation*}
  Y^{e} - X^{e} \beta =   \eta_{1:p}^{0} (\mathrm{Id} - A)_{1:p,p+1}^{-t}  -  \eta_{1:p}^{0} (\mathrm{Id} - A)_{1:p,1:p}^{-t} \beta
\end{equation*}
for all $e \in \E \cup \{ \tilde e\}$ as the distribution on the right hand side of the data is the same across all environments $ e \in \E \cup \{\tilde e\}$. By Assumption~\ref{assum:additive},
\begin{equation}\label{eq:40}
  X^{e} = (\eta_{1:p}^{0} + \delta^{e})    (\mathrm{Id} - A)_{1:p,1:p}^{-t} \text{ for all } e \in \E \cup \{ \tilde e\}
\end{equation}
and hence
\begin{equation}\label{eq:41}
    Y^{e} - X^{e} \beta = (\eta_{1:p}^{0} + \delta^{e}) \left(   (\mathrm{Id} - A)_{1:p,p+1}^{-t}  -  (\mathrm{Id} - A)_{1:p,1:p}^{-t} \beta \right) \text{ for all } e \in \E \cup  \{ \tilde e \}.
\end{equation}
Hence it suffices to show that
\begin{equation}\label{eq:38}
     (\mathrm{Id} - A)_{k,p+1}^{-t}  -  (\mathrm{Id} - A)_{k,1:p}^{-t} \beta = 0
\end{equation}
for all $k \in \cup_{e \in \E} \{k': \delta_{k'}^{e} \not \equiv 0 \} $. To this end, let $e'$ denote the observational environment, i.e. the environment $e' \in \E$ with $\delta^{e'} \equiv 0$. By Proposition~\ref{proposition:const},
\begin{equation*}
\mathbb{E} [  \Z^{e}]  - \mathbb{E}[ \G^{e} ] \beta^{0} = 0 \text{ for all } e \in \E.
\end{equation*}
Hence also
\begin{equation}\label{eq:42}
 \mathbb{E} [  \Z^{e'}]  - \mathbb{E} [ \G^{e'} ] \beta = 0.
\end{equation}
Using equation~\eqref{eq:40} and equation~\eqref{eq:41}, with $\tilde \beta := (\mathrm{Id} - A)_{1:p,p+1}^{-t}-  (\mathrm{Id} - A)_{1:p,1:p}^{-t} \beta $, equation~\eqref{eq:42} is equivalent to
\begin{equation*}
  ( \mathrm{Id} - A)_{1:p,1:p}^{-1}   \sum_{e \in \E , e \neq e'}  \mathbb{E} [ ( \delta_{1:p}^{e})^{t} \delta_{1:p}^{e} ] \tilde \beta = 0.
\end{equation*}
As shown in the proof of Theorem~\ref{theorem:popul-g-invert}, $  ( \mathrm{Id} - A)_{1:p,1:p}^{-1} $ is invertible. Hence the preceding equation is equivalent to
\begin{equation*}
  \sum_{e \in \E , e \neq e'}  \mathbb{E} [ ( \delta_{1:p}^{e}  )^{t} \delta_{1:p}^{e} ] \tilde \beta = 0
\end{equation*}
Analogously as in the proof of Theorem~\ref{theorem:popul-g-invert} we can use positive definiteness of $\mathbb{E} [ (\delta_{S^{e}}^{e} )^{t} \delta_{S^{e}}^{e} ]$ to conclude that $\tilde \beta_{k} \equiv 0 $ for all $k \in S^{e}, e \in \E, e \neq e'$. As
\begin{equation*}
  \beta_{k} =      (\mathrm{Id} - A)_{k,p+1}^{-t}  -  (\mathrm{Id} - A)_{k,1:p}^{-t} \beta = 0,
\end{equation*}
we proved equation~\eqref{eq:38}, which concludes the proof.
\end{proof}

\subsubsection{Proof of Proposition~\ref{prop:inner-prod-invarpot}}

\begin{proof} First, recall that assumption (A4) says that
\begin{equation}\label{eq:2}
   \mathbb{E}[Y(X) - Y(0) | X=x,E=e] = \mathbb{E}[Y(x) - Y(0) | X=x,E=e] = x \beta^{0}.
\end{equation}
We have
\begin{align*}
&  \mathbb{E}[ X^{t} (Y - X \beta^{0}) | E=e] \\
= \; &   \mathbb{E}[ X^{t} \mathbb{E}[(Y - X \beta^{0}) | X, E = e ] | E=e] \\
= \; &   \mathbb{E}[ X^{t} (\mathbb{E}[(Y(X) - Y(0) | X,E=e] + Y(0)- X \beta^{0})  | E=e] \\
= \; &  \mathbb{E}[ X^{t} (  X \beta^{0} + Y(0)- X \beta^{0})  | E=e] \\
= \; &  \mathbb{E}[ X^{t}  Y(0)  | E=e] \\
= \;&  \mathbb{E}[ X(e)^{t}  Y(0)  | E=e] \\
= \;& \mathbb{E}[ X(e)^{t}  Y(0)  ].
\end{align*}
In the second line we used (A1). In the fourth line we used equation~\eqref{eq:2}. In the last line we used (A2).
By assumption (A5) we have $\mathbb{E}[Y(0)] = \mathbb{E}[Y(0)-Y(X)] = \mathbb{E}[ \mathbb{E}[ Y(0) - Y(X)  |X,E]] 
= \mathbb{E}[X] \beta^{0} = 0$ and hence
\begin{equation*}
  \mathbb{E}[ X^{t} (Y - X \beta^{0}) | E=e] = \mathbb{E}[ X(e)^{t}  Y(0)  ]= \text{Cov}(X(e) , Y(0)).
\end{equation*}
Using assumption (A3) concludes the proof.
\end{proof}

\subsection{Proofs for Section~\ref{sec:high-dim}}

\subsubsection{Proof of Lemma~\ref{lemma:finite-sample-bound}}

\begin{proof}
The proof follows the technique used in  \cite{ye2010rate}. As $z^{*} \le \lambda$, $\beta^{0} \in \{ \beta:  \| \hat \Z - \hat \G
\beta\|_{\infty} \le \lambda  \}$. By definition of $\hat \beta^{\lambda}$, we have $\|\hat \beta^{\lambda} \|_{1} \le \| \beta^{0} \|_{1} $. As the active set of $\beta^{0}$ is $S$ we have $ \| (\hat \beta^{\lambda} - \beta^{0})_{S^{c}} \|_{1} = \| \hat \beta^{\lambda} \|_{1} - \| \hat \beta^{\lambda}_{S} \|_{1} \le \| \beta^{0} \|_{1} - \| \hat \beta^{\lambda}_{S} \|_{1} \le \| (\hat \beta^{\lambda} - \beta^{0})_{S} \|_{1}$. Hence we can invoke the definition of $\mathrm{CCIF}_{q}(S,\hat \G)$ to obtain
\begin{equation}\label{eq:8}
  \|\hat \beta^{\lambda} - \beta^{0} \|_{q} \le \frac{| S|^{1/q} \| \hat \G (\hat \beta^{\lambda} - \beta^{0}) \|_{\infty} }{\mathrm{CCIF}_{q}(S, \hat \G)}.
\end{equation}
To bound the right hand side of equation~(\ref{eq:8}),
\begin{align}\label{eq:9}
\begin{split}
  \|  \hat \G (\hat \beta^{\lambda} - \beta^{0}) \|_{\infty} &\le  \| \hat \Z - \hat \G  \beta^{0} \|_{\infty} +  \|  \hat \Z - \hat \G \hat \beta^{\lambda}  \|_{\infty} \\
&\le z^{*} + \lambda.
\end{split}
\end{align}
Combining equation~(\ref{eq:8}) and equation~(\ref{eq:9}) concludes the proof.
\end{proof}

\subsubsection{Proof of Lemma~\ref{lemma:finite-sample-bound-1}}

\begin{proof}
Using inner-product invariance of $(X^{e},Y^{e})$ under $\beta^{0}$,
\begin{align}\label{eq:44}
\begin{split}
    \| \hat \Z - \hat \G \beta^{0} \|_{\infty} =& \max_{k} \left| \frac{1}{n_{1}}( \Xone_{\bullet k})^{t}(\Yone- \Xone \beta^{0})  - \frac{1}{n_{2}} (\Xtwo_{\bullet k})^{t}(\Ytwo- \Xtwo \beta^{0}) \right| \\
\le& \max_{k} \left| \frac{1}{n_{1}}( \Xone_{\bullet k})^{t}(\Yone- \Xone \beta^{0}) - \mathbb{E} [ (X^{1})^{t} (Y^{1} - X^{1} \beta^{0}) ]\right| + \\
& \max_{k} \left| \frac{1}{n_{2}} (\Xtwo_{\bullet k})^{t}(\Ytwo- \Xtwo \beta^{0}) - \mathbb{E} [ (X^{2})^{t} (Y^{2} - X^{2} \beta^{0}) ] \right|
\end{split}
\end{align}
Now we can use that $\XX^{e}_{i k} (\YY^{e}-\XX^{e}_{i \bullet} \beta^{0})$, $i=1,\ldots,n_{e}$ are i.i.d. with distribution $X_{k}^{e} \eta_{p+1}^{e}$, $e \in \{1,2\}$. By \cite{van2009conditions}, for all $t \ge 0$, with probability exceeding $ 1- 2 \exp(-t)$,
\begin{align*}
&  \frac{1}{n_{e}} \left| (\XX^{e}_{\bullet k})^{t} (\YY^{e}-\XX^{e} \beta) - \mathbb{E} \left[(\XX^{e}_{\bullet k})^{t} (\YY^{e}-\XX^{e}_{i \bullet} \beta) \right] \right| \\
&\le  \sigma_{\varepsilon}  \sqrt{\Var(X_{k}^{e})}  \left( \sqrt{\frac{4t}{n_{e}}} + \frac{4t}{n_{e}} \right).
\end{align*}
Taking a union bound over $k=1,\ldots,p$, for all $t \ge 0$, with probability exceeding $ 1-2 \exp(-t)$,
\begin{align*}
\begin{split}
 &    \max_{k}  \frac{1}{n_{e}} \left| (\XX^{e}_{\bullet k})^{t} (\YY^{e}-\XX^{e} \beta) - \mathbb{E} \left[(\XX^{e}_{\bullet k})^{t} (\YY^{e}-\XX^{e}_{i \bullet} \beta) \right] \right|  \\
&\le \sigma_{\varepsilon}  \sigma_{\text{max}}^{e} \left( \sqrt{\frac{4t+4 \log(p)}{n_{e}}} + \frac{4t + 4 \log(p)}{n_{e}} \right).
\end{split}
\end{align*}
Using the bound for $e=1$ and $e=2$ and equation~\eqref{eq:44} yields the desired result.

\end{proof}

\subsubsection{Proof of Theorem~\ref{theorem:finite-sample-bound-2}}

\begin{proof}
As $  \sigma_{\varepsilon} \sigma_{\text{max}}^{e} \le C$ and as  $  \sqrt{\log(p)/ \min_{e \in  \{1,2\}}n_{e}} \rightarrow 0$ for $n_{1},n_{2},p \rightarrow \infty$, for $t = 0.2 \log p$ we have eventually
\begin{align*}
\begin{split}
&  \sigma_{\varepsilon} \sum_{e \in \{1,2\}}  \sigma_{\text{max}}^{e}\left( \sqrt{\frac{4t +  4\log(p)}{n_{e}}} +   \frac{4t + 4 \log(p)}{n_{e}} \right) \\
 \le \;  & C \sum_{e \in \{1,2\}} \left( \sqrt{\frac{4t +  4\log(p)}{n_{e}}} +   \frac{4t + 4 \log(p)}{n_{e}} \right) \\
\le \; &   2.1 C   \sqrt{\frac{4t +  4\log(p)}{\min_{e \in \{1,2\}}n_{e}}} \\
\le \; & 2.1C \cdot 2.2 \sqrt{\frac{\log(p)}{\min_{e \in \{1,2\}}n_{e}}} \\
\le \; & 4.7 C \sqrt{\frac{\log(p)}{\min_{e \in \{1,2\}}n_{e}}}.
\end{split}
\end{align*}
As $ \lambda \asymp  5 C \sqrt{\log(p)/ \min_{e \in  \{1,2\}}n_{e}}$, eventually
\begin{equation*}
 \sigma_{\varepsilon} \sum_{e \in \{1,2\}}  \sigma_{\text{max}}^{e}\left( \sqrt{\frac{4t +  4\log(p)}{n_{e}}} +   \frac{4t + 4 \log(p)}{n_{e}} \right)  \le  \lambda.
\end{equation*}
Using Lemma 3 for $t=0.2 \log(p)$, the probability of the event $z^{*} \le \lambda$ eventually exceeds $1-4 \exp(-0.2 \log(p))$, which converges to $1$ for $p \rightarrow \infty$.
By Lemma 2, on the event $z^{*} \le \lambda$,
\begin{equation*}
    \| \hat \beta^{\lambda} - \beta^{0} \|_{q} \le  \frac{|S|^{1/q} (\lambda + z^{*})}{\mathrm{CCIF}_{q}(S,\hat \G)} \le  \frac{2|S|^{1/q} }{\mathrm{CCIF}_{q}(S,\hat \G)}  5C \sqrt{\frac{  \log(p)}{ \min_{e \in  \{1,2\}}n_{e} }}.
\end{equation*}
This concludes the proof.
\end{proof}

\subsubsection{Proof of Proposition~\ref{prop:screening}}

\begin{proof}
Using Theorem~\ref{theorem:finite-sample-bound-2}  for $q=\infty$,
\begin{equation*}
 \| \hat \beta^{\lambda} - \beta^{0} \|_{\infty}   \le \frac{10 C}{\mathrm{CCIF}_{\infty}(S,\hat \G)}  \sqrt{\frac{  \log(p)}{ \min_{e \in  \{1,2\}}n_{e} }}  \text{ with $\mathbb{P} \rightarrow 1$ for }n_{1},n_{2},p \rightarrow \infty.
\end{equation*}
Using the betamin-condition,
\begin{align*}
  0& <  \min_{k \in S} | \beta_{k}^{0} |  -  \frac{10 C}{\mathrm{CCIF}_{\infty}(S,\hat \G)}  \sqrt{\frac{  \log(p)}{ \min_{e \in  \{1,2\}}n_{e} }}    \\
&\le   \min_{k \in S} | \hat \beta_{k}^{\lambda}|
\end{align*}
with $\mathbb{P} \rightarrow 1$ for $n_{1},n_{2},p \rightarrow \infty$.  Hence $ \min_{k \in S} | \hat \beta_{k}^{\lambda}| >0$  with $\mathbb{P} \rightarrow 1$ for $n_{1},n_{2},p \rightarrow \infty$. This concludes the proof.
\end{proof}

\subsubsection{Proof of Lemma~\ref{lemma:CCIF-bound}}

\begin{proof}
Consider an $u$ with $\| u_{S^{c}} \|_{1} \le \| u_{S} \|_{1}$. Hence,  $\| u \|_{1} = \| u_{S^{c}} \|_{1} + \| u_{S} \|_{1} \le 2 \| u_{S} \|_{1}$. Using this,
  \begin{align*}
\left| \frac{|S|^{1/q}\| \hat \G u \|_{\infty}}{\| u \|_{q}}  - \frac{|S|^{1/q}\| \G u \|_{\infty}}{\| u \|_{q}} \right|  &\le \frac{|S|^{1/q}\| ( \hat \G - \G) u \|_{\infty}}{\| u \|_{q}} \\
&\le  \frac{|S|^{1/q}\|\hat  \G - \G \|_{\infty}  \| u \|_{1}}{\| u \|_{q}} \\
&\le  \frac{|S|^{1/q}\| \hat \G - \G \|_{\infty}  2\| u_{S} \|_{1}}{\| u \|_{q}} \\
&\le  \frac{|S|^{1/q}\| \hat \G - \G \|_{\infty}  2\| u_{S} \|_{1}}{\| u_{S} \|_{q}} \\
&\le 2 |S | \| \hat \G - \G \|_{\infty}.
  \end{align*}
In the last line we used that $q \ge 1$. This concludes the proof.
\end{proof}

\subsubsection{Causal Dantzig as a LP}\label{sec:causal-dantzig-as}
For fixed $\lambda$, the regularized \emph{causal Dantzig} can be cast as a linear program. For notational simplicity, will show this for the case  $| \E | =2$. Define
\begin{equation*}
  A := \begin{pmatrix}
            - \hat \G \quad \hat \G \\
            \hat \G \quad - \hat \G
  \end{pmatrix}, \quad b := \begin{pmatrix}
            -\hat \Z \\
            \hat \Z
  \end{pmatrix} + \begin{pmatrix}
    \lambda \\
\vdots \\
\lambda
  \end{pmatrix}, \quad \text{ and } c := \begin{pmatrix}
    1 \\
\vdots \\
1
  \end{pmatrix}.
\end{equation*}
Let $\Gamma^{\lambda}$ be the  solution set  of the linear program
\begin{align*}
 \text{minimize } &c^{t} \gamma \\
 \text{ subject to } &A\gamma \le b \text{ and } \gamma \ge 0.
 \end{align*}
Let $B^{\lambda}$ be the solution set of \eqref{eq:7}. The following Lemma shows that $B^{\lambda}$ can easily be obtained from $\Gamma^{\lambda}$.
\begin{lemma}
$ B^{\lambda} =  \{  \gamma_{1:p}- \gamma_{(p+1):2p} : \gamma \in \Gamma^{\lambda} \}$
\end{lemma}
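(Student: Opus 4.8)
The plan is to recognize the statement as the standard reformulation of an $\ell_{1}$-minimization problem as a linear program, obtained by splitting a vector into its positive and negative parts. Writing $\gamma = (u,v)$ with $u := \gamma_{1:p}$, $v := \gamma_{(p+1):2p}$ and setting $\beta := u - v$, I would first verify that the polyhedral constraint of the LP is exactly the feasibility constraint of~\eqref{eq:7}. A direct computation gives
\[
A\gamma = \begin{pmatrix} -\hat \G u + \hat \G v \\ \hat \G u - \hat \G v \end{pmatrix} = \begin{pmatrix} -\hat \G\beta \\ \hat \G\beta \end{pmatrix},
\]
so $A\gamma \le b$ amounts to $\hat \Z - \hat \G\beta \le \lambda\mathbf{1}$ together with $-(\hat \Z - \hat \G\beta) \le \lambda\mathbf{1}$ (with $\mathbf{1}=(1,\dots,1)^{t}\in\mathbb{R}^{p}$), i.e.\ to $\|\hat \Z - \hat \G\beta\|_{\infty} \le \lambda$. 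Thus $\gamma = (u,v)$ with $u,v\ge 0$ is LP-feasible if and only if $\beta = u - v$ is feasible for~\eqref{eq:7}, and in that case the LP objective equals $c^{t}\gamma = \|u\|_{1} + \|v\|_{1}$, using $u,v \ge 0$.

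Next I would compare optimal values. Let $p^{*}$ denote the optimum of~\eqref{eq:7} and $q^{*}$ that of the LP. For any optimal $\beta^{*}\in B^{\lambda}$, set $u := (\beta^{*})^{+}$ and $v := (\beta^{*})^{-}$ (the componentwise positive and negative parts), so that $u,v\ge 0$, $u-v = \beta^{*}$, and $\|u\|_{1}+\|v\|_{1} = \|\beta^{*}\|_{1}$. By the first step $\gamma := (u,v)$ is LP-feasible with objective $\|\beta^{*}\|_{1} = p^{*}$, whence $q^{*}\le p^{*}$. Conversely, for any optimal $\gamma^{*} = (u^{*},v^{*})\in\Gamma^{\lambda}$ the point $\beta := u^{*}-v^{*}$ is feasible for~\eqref{eq:7} and satisfies $\|\beta\|_{1} = \|u^{*}-v^{*}\|_{1} \le \|u^{*}\|_{1}+\|v^{*}\|_{1} = q^{*}$, so $p^{*}\le q^{*}$. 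Therefore $p^{*}=q^{*}$.

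The set equality then follows. For the inclusion $\supseteq$, take $\gamma^{*}\in\Gamma^{\lambda}$ and put $\beta := \gamma^{*}_{1:p}-\gamma^{*}_{(p+1):2p}$; then $\beta$ is feasible and $\|\beta\|_{1}\le c^{t}\gamma^{*} = q^{*} = p^{*}$, and since $p^{*}$ is the minimum over feasible points this forces $\|\beta\|_{1}=p^{*}$, i.e.\ $\beta\in B^{\lambda}$. For $\subseteq$, given $\beta^{*}\in B^{\lambda}$ the point $\gamma := ((\beta^{*})^{+},(\beta^{*})^{-})$ is LP-feasible with objective $\|\beta^{*}\|_{1} = p^{*} = q^{*}$, hence $\gamma\in\Gamma^{\lambda}$, and it projects to $(\beta^{*})^{+}-(\beta^{*})^{-} = \beta^{*}$. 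This yields both inclusions and hence $B^{\lambda} = \{\gamma_{1:p}-\gamma_{(p+1):2p} : \gamma\in\Gamma^{\lambda}\}$.

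The argument is essentially bookkeeping, and the only point requiring a little care is that an LP-optimal $\gamma^{*}=(u^{*},v^{*})$ need not a priori satisfy $u^{*}_{k}v^{*}_{k}=0$, so that $\|u^{*}-v^{*}\|_{1}$ could conceivably be strictly smaller than $c^{t}\gamma^{*}$. The comparison $p^{*}=q^{*}$ together with the minimality of $p^{*}$ is precisely what rules this out at the optimum and guarantees that the projection lands in $B^{\lambda}$ rather than strictly below it; hence the main (and only) obstacle is to order the two inequalities $q^{*}\le p^{*}$ and $p^{*}\le q^{*}$ correctly before invoking minimality.
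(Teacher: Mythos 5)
Your proof is correct. It establishes the same correspondence as the paper but organizes the argument differently: the paper first proves, via an explicit perturbation (subtracting $\min(\gamma_k,\gamma_{k+p})$ from both components, which leaves $A\gamma$ unchanged and strictly decreases $c^t\gamma$), that any LP-optimal $\gamma$ satisfies the complementarity condition $\gamma_k\gamma_{k+p}=0$, so that $c^t\gamma=\|\gamma_{1:p}-\gamma_{(p+1):2p}\|_1$ holds exactly at the optimum; it then chains together several equivalent reformulations of the optimization problem before matching solutions. You instead sandwich the two optimal values, $q^*\le p^*$ via the positive/negative-part embedding and $p^*\le q^*$ via the triangle inequality $\|u^*-v^*\|_1\le\|u^*\|_1+\|v^*\|_1$, and deduce both inclusions from $p^*=q^*$ and minimality. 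This sidesteps the complementarity argument entirely (indeed, complementarity at the LP optimum drops out as a corollary of the forced equality $\|u^*-v^*\|_1=\|u^*\|_1+\|v^*\|_1$), and your closing remark correctly identifies why the order of the two inequalities matters. The only technicality neither you nor the paper addresses is attainment of the optima (both sets are empty when the feasible set is empty, in which case the claim is vacuous), but this is harmless. Your route is arguably cleaner; the paper's makes the structure of LP-optimal points more explicit.
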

\begin{proof}
Let $\gamma \in \Gamma^{\lambda}$. By constraint, all entries of $\gamma$ are non-negative. Furthermore, $\gamma_{k}$ and $\gamma_{p+k}$ cannot be nonzero at the same time: In that case, $\tilde \gamma$ defined as
\begin{equation*}
  \tilde \gamma_{k'} = \begin{cases}
    \gamma_{k'} & k' \neq k \text{ or } k' \neq p+k, \\
    \gamma_{k}-\min(\gamma_{k},\gamma_{k+p}) & k'=k, \\
    \gamma_{k+p} - \min(\gamma_{k},\gamma_{k+p}) & k'=k+p.
  \end{cases}
\end{equation*}
would suffices $A \tilde \gamma = A  \gamma \le b$, $\tilde \gamma \ge 0$, $c^{t} \tilde \gamma < c^{t} \gamma$, which is a contradiction to the definition of $\gamma$. As either  $\gamma_{k}$ or $\gamma_{p+k}$ are equal to zero, $c^{t} \gamma = \| \gamma_{1:p} - \gamma_{(p+1):2p} \|_{1}$. Analogously, one can show that any solution $\gamma$ to
\begin{align*}
   \text{minimize } &  \| \gamma_{1:p} - \gamma_{(p+1):2p}  \|_{1} \\
 \text{ subject to } & \begin{pmatrix}- \hat \G \\ \hat \G  \end{pmatrix}
( \gamma_{1:p} - \gamma_{(p+1):2p}) \le b \text{ and } \gamma \ge 0
\end{align*}
satisfies that either $\gamma_{i} = 0$ or $\gamma_{i+p}=0$. Hence $\Gamma^{\lambda}$ is also the solution set of
\begin{align*}
   \text{minimize } &  \| \gamma_{1:p} - \gamma_{(p+1):2p}  \|_{1} \\
 \text{ subject to } & \begin{pmatrix}- \hat \G \\  \hat \G  \end{pmatrix}
( \gamma_{1:p} - \gamma_{(p+1):2p}) \le b \text{ and } \gamma \ge 0.
\end{align*}
By rewriting the constraint, this problem is equivalent to solving
\begin{align}\label{eq:34}
\begin{split}
   \text{minimize } &  \| \gamma_{1:p} - \gamma_{(p+1):2p}  \|_{1} \\
 \text{ subject to } & \| \hat \Z - \hat \G ( \gamma_{1:p} - \gamma_{(p+1):2p})  \|_{\infty} \le \lambda \text{ and } \gamma \ge 0.
\end{split}
\end{align}
Now for each solution $\gamma$ of this problem we can define $\beta(\gamma) := \gamma_{1:p}- \gamma_{(p+1):2p}$ and $\beta(\gamma)$ satisfies the constraint $ \| \hat  \Z -\hat \G \beta  \|_{\infty} \le \lambda$. Furthermore the objective functionals match, i.e.  $\| \gamma_{1:p} - \gamma_{(p+1):2p}  \|_{1} = \| \beta(\gamma) \|_{1}$.  On the other hand, for each solution $\beta$ of
\begin{align}\label{eq:33}
\begin{split}
   \text{minimize } &  \| \beta  \|_{1} \\
 \text{ subject to } & \| \hat \Z - \hat \G \beta  \|_{\infty} \le \lambda.
\end{split}
\end{align}
we can define $\gamma(\beta) \in \mathbb{R}^{2p}$ via $\gamma(\beta)_{1:p} =  \max(\beta, 0_{p})$ and $\gamma(\beta)_{(p+1):2p} = - \min(\beta,0_{p})$ . Note that by definition $\gamma(\beta)$ satisfies the constraints $\gamma(\beta) \ge 0$, $\| \hat \Z - \hat \G ( \gamma_{1:p} - \gamma_{(p+1):2p})  \|_{\infty} \le \lambda$ and again the objective functionals match, i.e. $\| \beta \|_{1} = \| \gamma(\beta)_{1:p} - \gamma(\beta)_{(p+1):2p} \|_{1}$.  Hence $ B^{\lambda} =  \{  \gamma_{1:p}- \gamma_{(p+1):2p} : \gamma \in \Gamma^{\lambda} \}$. This concludes the proof.
\end{proof}

\subsection{Proof for Section~\ref{sec:pract-cons}}

\subsubsection{Proof of Lemma~\ref{lemma:hidden-markov-blanket}}\label{sec:mark-blank-estim}

\begin{proof}
Proof by contradiction. Let $X_{k}$ be a parent or child of $Y$ in $D_{total}$ with $k \not \in S$. Without loss of generality let us assume that $Y \rightarrow X_{k}$. As the regression coefficient of $X_{k}$ is zero and as $X_{1},\ldots,X_{p},Y$ are multivariate Gaussian, $ Y $ is conditionally independent of $ X_{k} $ given $ X_{S}$. As the distribution of  $X_{1},...,X_{p},Y,H_{1},...,H_{q}$ is faithful to $D_{total}$, $Y$ and $X_{k}$ are $d$-separated by $X_{S}$ in $D_{total}$, see e.g. \cite{Pearl2009} for a reference. Hence the path $Y \rightarrow X_{k}$ is blocked by $X_{S}$. But the path $Y \rightarrow X_{k}$ can only be blocked if $k \in S$. Contradiction. This concludes the proof.
\end{proof}

\subsection{Asymptotic efficiency}\label{sec:asympt-effic}

Assume that  for $e \in \{1,2\}$ the variables $(X^{e},Y^{e})$ are centered (non-degenerate) Gaussian  random variables that are generated from a structural equation model under Assumption~\ref{assum:additive}. Intuitively, as  the Gram matrices are asymptotically efficient estimators of $\mathbb{E} (X^{e})^{t}  X^{e} $ and $  \mathbb{E} (X^{e})^{t} Y^{e}$ one would expect the plug-in estimator $\hat \beta= \hat \G^{-1} \hat \Z$ to be efficient, too. That is still true in some sense, but we have to be a bit careful with the notion of efficiency. There are two issues that we have to take care of. First, we have the additional constraint that the data is generated by a specific SEM that satisfies inner-product invariance under the true causal coefficient $\beta^{0}$. Can this constraint be exploited to lower asymptotic variance? Additionally, we have to deal with the fact that $n_{1}$ and $n_{2} $ may have different asymptotic growth rates.
The following Lemma gives an answer to the first question if we allow for errors-in-variables as defined in equation~\eqref{eq:21}.
\begin{lemma}\label{lemma:construct}
Consider distributions $(\mathring X^{1}, \mathring Y^{1}) \sim \mathcal{N}(0, \mathring \SI^{1})$ and $(\mathring X^{2},\mathring Y^{2}) \sim \mathcal{N}(0,\mathring\SI^{2})$ with inner-product invariance under $ \beta^{0}$ that satisfy Assumption~\ref{assum:additive} and have errors-in-variables as defined in equation~\eqref{eq:21}. For any distribution $(\tilde X^{1},\tilde Y^{1}) \sim \mathcal{N}(0, \tilde \SI^{1})$ with $\tilde \SI^{1}$ sufficiently close to $\mathring \SI^{1}$ and $(\tilde X^{2},\tilde Y^{2}) \sim \mathcal{N}(0,\tilde \SI^{2})$ with $\tilde \SI^{2}$ sufficiently close to  $ \mathring \SI^{2}$, there exists an linear structural equation model with error-in-variables that satisfies Assumption~\ref{assum:additive} and equation~\eqref{eq:21}.
\end{lemma}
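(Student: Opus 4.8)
The plan is to give a direct construction: for any target pair $(\tilde \SI^{1}, \tilde\SI^{2})$ close to $(\mathring\SI^{1},\mathring\SI^{2})$ I will exhibit the structural matrix, the noise covariances and the measurement-error variances of an SEM-with-errors-in-variables whose observed covariances equal the target. Writing the model in covariance form, the observed covariance in environment $e$ must satisfy $\tilde\SI^{e} = (\mathrm{Id}-A)^{-1}(\Psi^{0}+\Phi^{e})(\mathrm{Id}-A)^{-t} + \mathrm{diag}(d_{1},\dots,d_{p},v^{e})$, where $\Psi^{0}=\mathrm{Cov}(\eta^{0})$ is shared across environments, $\Phi^{e}=\mathrm{Cov}(\delta^{e})$ is supported on the first $p$ coordinates (the exclusion restriction $\delta^{e}_{p+1}\equiv 0$), the $d_{k}$ are the environment-invariant measurement-error variances of the $X_{k}$, and $v^{e}$ is the environment-dependent measurement-error variance of $Y$. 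The parameters are constrained by $\Psi^{0},\Phi^{e}\succeq 0$ and $d_{k},v^{e}\ge 0$. Since everything is centered Gaussian, matching covariances matches distributions, so it suffices to solve this system.

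The key observation is the identity $e_{p+1}^{t}(\mathrm{Id}-A)=(-\beta,1)=:r^{t}$ with $\beta=A_{p+1,1:p}$. Subtracting the two environments (the $d_{k}$ cancel) and using that $\Phi^{1}-\Phi^{2}$ has vanishing last row, any realization must satisfy
\[
 r^{t}(\tilde\SI^{1}-\tilde\SI^{2}) = (v^{1}-v^{2})\,e_{p+1}^{t}.
\]
Reading off the first $p$ coordinates and using symmetry, this is equivalent to $(\tilde\SI^{1}-\tilde\SI^{2})_{1:p,1:p}\,\beta = (\tilde\SI^{1}-\tilde\SI^{2})_{1:p,p+1}$, i.e. to the population causal-Dantzig relation $\G\beta=\Z$ with $\G=(\tilde\SI^{1}-\tilde\SI^{2})_{1:p,1:p}$ and $\Z=(\tilde\SI^{1}-\tilde\SI^{2})_{1:p,p+1}$. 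I would therefore \emph{define} the causal coefficient of the new SEM by $\beta:=\G^{-1}\Z$. This is legitimate because $\mathring\G$ is invertible in the regime where the plug-in estimator, and hence efficiency, is defined; note that the environment-invariant measurement errors on $X$ cancel in $\G$, so $\G$ coincides with the latent Gram-shift of Theorem~\ref{theorem:popul-g-invert}, and invertibility persists for targets near $\mathring\SI$. I keep the remaining structural equations fixed, $A_{1:p,\bullet}:=\mathring A_{1:p,\bullet}$, and set $A_{p+1,\bullet}:=(\beta,0)$; then $\mathrm{Id}-A$ is invertible near $\mathring A$ and $\beta\to\beta^{0}$ as the target approaches $\mathring\SI$.

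With $A$ and $v^{1}-v^{2}:=\big(r^{t}(\tilde\SI^{1}-\tilde\SI^{2})\big)_{p+1}$ fixed, I set $d:=\mathring d$, pick $v^{1},v^{2}\ge 0$ with the prescribed difference (possible for any real value), and define $\Sigma^{e}_{\eta}:=(\mathrm{Id}-A)\big(\tilde\SI^{e}-\mathrm{diag}(d,v^{e})\big)(\mathrm{Id}-A)^{t}$. For targets near $\mathring\SI$ these are positive definite, and a short computation with the displayed identity (now holding by our choice of $\beta$ and $v^{1}-v^{2}$) shows that $\Sigma^{1}_{\eta}$ and $\Sigma^{2}_{\eta}$ share their last row and column; write the common vector as $c\in\mathbb{R}^{p}$ and the common last diagonal entry as $s>0$. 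This is exactly what the exclusion restriction demands. Setting $w:=(c^{t},s)^{t}$, I split $\Sigma^{e}_{\eta}=\Psi^{0}+\Phi^{e}$ via $\Psi^{0}:=s^{-1}ww^{t}$ (rank one, positive semidefinite) and $\Phi^{e}:=\Sigma^{e}_{\eta}-\Psi^{0}$, whose only nonzero block is the Schur complement of $s$ in $\Sigma^{e}_{\eta}$, hence positive semidefinite and supported on the first $p$ coordinates. This yields a bona fide SEM satisfying Assumption~\ref{assum:additive} together with~\eqref{eq:21}, and inner-product invariance under $\beta$ then holds automatically by Proposition~\ref{proposition:const} and Proposition~\ref{proposition:errors-variables}.

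The only genuinely delicate point is that the single $p$-vector of freedom $\beta$ must simultaneously enforce the $p$ exclusion-restriction constraints (the shared last row and column of $\Sigma^{1}_{\eta}$ and $\Sigma^{2}_{\eta}$) in \emph{both} environments. The main step is therefore the reduction above, showing that the first $p$ entries of $r^{t}(\tilde\SI^{1}-\tilde\SI^{2})$ vanish precisely when $\G\beta=\Z$, so that the causal-Dantzig choice $\beta=\G^{-1}\Z$ does exactly this. Everything else — invertibility of $\mathrm{Id}-A$ and of $\G$, nonnegativity of $d$ and $v^{e}$, and semidefiniteness of the split — is controlled by continuity from the original realizable point, where it is the errors-in-variables degrees of freedom that supply the slack making the model locally saturated.
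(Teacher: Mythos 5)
Your overall architecture matches the paper's: you fix the new causal coefficient by the population causal-Dantzig relation $\G\beta=\Z$, observe that this is precisely what the exclusion restriction $\delta^{e}_{p+1}\equiv 0$ forces on the difference of covariances, and then split the resulting noise covariance into a shared rank-one part plus environment-specific positive semidefinite parts via a Schur-complement argument. The first and third of these steps are essentially the paper's. The gap is in your realization step. You keep $A_{1:p,\bullet}=\mathring A_{1:p,\bullet}$ and, crucially, $d=\mathring d$, and then assert that $\Sigma^{e}_{\eta}=(\mathrm{Id}-A)\bigl(\tilde\SI^{e}-\mathrm{diag}(d,v^{e})\bigr)(\mathrm{Id}-A)^{t}$ is positive definite for targets near $\mathring\SI^{e}$ ``by continuity from the original realizable point.'' At the reference point this matrix equals $\mathrm{Cov}(\mathring\eta^{e})=\mathrm{Cov}(\mathring\eta^{0})+\mathrm{Cov}(\mathring\delta^{e})$, which is only guaranteed positive \emph{semi}definite: the standing assumption that the observed Gram matrices are positive definite is compatible with a singular latent noise covariance whenever the measurement errors fill in the null directions (take $p=1$, $\mathring\eta^{0}=(H,H)$ perfectly confounded, $\mathring\delta^{1}\equiv 0$, $\mathring d_{1}>0$). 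In that case the reference point sits on the boundary of the PSD cone, a generic small perturbation of $\tilde\SI^{1}$ pushes $\Sigma^{1}_{\eta}$ outside it, and then no decomposition $\Psi^{0}+\Phi^{e}$ into positive semidefinite summands can exist, since a sum of PSD matrices is PSD. Continuity buys you nothing on the boundary, so your construction fails in cases where the lemma is still true.

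The paper's proof is engineered to avoid exactly this: it does not preserve your allocation between latent noise and measurement error. It builds a fresh SEM with no edges among the covariates, sets the measurement-error variances of the $X_{k}$ to zero (absorbing $\mathring d$ into the latent variance of the covariates, which only enlarges the top-left block), and retains only $\mathrm{Var}(r^{e})-\mathrm{Var}(\mathring\eta^{e}_{p+1})+\epsilon$ of measurement error on $Y$. The resulting matrix $S^{e}$ is then strictly positive definite with an $\epsilon$-margin at the reference point, so it survives small perturbations of $\tilde\SI^{e}$, and the same rank-one/Schur-complement split you propose goes through. Your closing sentence correctly identifies the errors-in-variables degrees of freedom as the source of slack, but your construction pins them at their original values and therefore never uses that slack; to repair the argument you must reallocate as the paper does, or otherwise move $d$ and $v^{e}$ strictly into the interior of the feasible set before invoking continuity. (Your necessity computation showing that the vanishing of the first $p$ coordinates of $(-\beta^{t},1)(\tilde\SI^{1}-\tilde\SI^{2})$ is equivalent to $\G\beta=\Z$ is correct and is a nice complement to the paper, which simply posits $\tilde\beta^{0}$ as the solution of that system; both arguments share the implicit assumption that this system is solvable.)
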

This Lemma shows that the fact that our model is generated by a Gaussian linear SEM with additive interventions and errors-in-variables that satisfies inner-product invariance does not restrict the distributions in a neighborhood of other models that satisfy these properties.
Now let us turn to the question what statements can be made about the limit $n_{1} \rightarrow \infty$, $n_{2} \rightarrow \infty$. It is straightforward to model this the following way: for each sample $i=1,\ldots,n$, first a coin is tossed. With probability $0< \pi<1$ we observe a sample from setting $e_{i}=1$ and with probability $1- \pi$ we observe a sample of setting $e_{i}=2$. To be more precise, the corresponding log density can be written as
\begin{align*}
  &\sum_{i=1}^{n} 1_{e_{i} = 1} \log f_{\SI^{1}}(\XX_{i \bullet},\YY_{i}) + 1_{e_{i} = 2} \log f_{\SI^{2}}(\XX_{i \bullet},\YY_{i}) \\
&+  1_{e_{i} = 1} \log(\pi) + 1_{e_{i} = 2} \log(1-\pi),
\end{align*}
where $f_{\SI}$ denotes the density of a centered Gaussian distribution with covariance $\SI \in \mathbb{R}^{(p+1) \times (p+1)}$. Hence, $(\XX^{1},\YY^{1})$ is a sufficient statistics for $\SI^{1}$ and $(\XX^{2}, \YY^{2})$ is a sufficient statistics for $\SI^{2}$. By \citet{anderson1973asymptotically}, the Gram matrix of $(\XX^{1},\YY^{1})$ is asymptotically efficient for estimating $\SI^{1}$ and the Gram matrix of $(\XX^{2},\YY^{2})$ is asymptotically efficient for estimating $\SI^{2}$. The Fisher information matrix is block diagonal with blocks for $\SI^{1}$, $\SI^{2}$ and $\pi$. Thus, the Gram matrices of $(\XX^{1},\YY^{1})$ and $(\XX^{2},\YY^{2})$ are asymptotically efficient for jointly estimating $\SI^{1}$ and $\SI^{2}$. By the delta method, the plug-in estimator $\beta = \hat \G^{-1} \hat \Z$ is asymptotically efficient for estimating $\beta^{0}$.
Note that in the discussion above we have $n_{1} \sim \pi \cdot n$ and $n_{2} \sim (1-\pi) \cdot n$. Hence this is a ``balanced'' scenario and this type of analysis does not work for, say,  $n_{1} = \scriptstyle \mathcal{O} \textstyle (n_{2})$. In the latter case, the asymptotic variance of estimating the Gram matrix in setting $e=1$ is dominating the asymptotic variance of estimating the Gram matrix in setting $e=2$. Hence it can be shown that $\hat \beta$ has the same asymptotic variance as an efficient estimator for $\beta^{0}$ assuming the Gram matrix in setting $e=2$ is known.

\subsubsection{Proof of Lemma~\ref{lemma:construct}}

\begin{proof}
 Choose $\tilde \beta^{0}$ such that $ (\tilde \SI^{1} - \tilde \SI^{2})_{1:p,1:p} \tilde \beta^{0} =  (\tilde \SI^{1} - \tilde \SI^{2})_{1:p,p+1} $. By construction of $\tilde \beta^{0}$ the random variables satisfy
\begin{align*}
  \mathbb{E} \left[ \tilde X_{k}^{1} ( \tilde Y^{1} - \tilde X^{1} \beta^{0} )    \right] &= \tilde \SI^{1}_{k,p+1} -\tilde  \SI_{k ,1:p}^{1} \beta^{0} \\
&= \tilde \SI^{2}_{k,p+1} - \tilde \SI_{k ,1:p}^{2} \beta^{0} \\
&=  \mathbb{E} \left[ \tilde X_{k}^{2} (\tilde  Y^{2} - \tilde X^{2} \beta^{0} )    \right] \text{ for all } k=1,\ldots,p.
\end{align*}
In other words, we have \emph{inner product invariance} under $\tilde \beta^{0}$.  Now we want to show that the distribution of $(\tilde X^{e}, \tilde Y^{e}), e \in \E$ can be generated by a structural equation model of the following form. We want to show that there exist independent random variables $\eta^{0},\delta^{e} \in \mathbb{R}^{p+1} \zeta_{1}^{e},\ldots, \zeta_{p}^{e}, \zeta_{y}^{e} , e \in \{1,2\}$ with $\eta^{e} = \eta^{0} + \delta^{e}$ such that $\eta^{0},\delta^{e},\eta^{e}, e \in \{1,2\}$ satisfy  Assumption~\ref{assum:additive} and such that $\zeta_{1}^{e},\ldots, \zeta_{p}^{e}, \zeta_{y}^{e}, e \in \{1,2\}$ satisfy the assumption mentioned after equation~\eqref{eq:21}. Furthermore, with slight abuse of notation we want that the following structural equation model with error-in-variables
\begin{align}\label{eq:36}
\begin{split}
   X_{k}^{e} &\leftarrow \eta_{k}^{e} + \zeta_{k}^{e}, \text{ for }k =1 ,\ldots,p,\\
   Y^{e} &\leftarrow \sum_{k=1}^{p}  X_{k} \tilde \beta_{k}^{0} + \eta_{p+1}^{e} + \zeta_{y}^{e},
\end{split}
\end{align}
generates the distribution of $(\tilde X^{e},\tilde Y^{e})$, i.e.  satisfies $(X^{e},Y^{e}) \sim \mathcal{N}(0, \tilde \SI^{e}), e \in \{1,2\}$. As $\tilde X_{1}^{e},\ldots,\tilde X_{p}^{e}, \tilde Y^{e}, e \in \{1,2\}$ are centered multivariate Gaussian it suffices to show that the covariance matrix of $(\tilde X_{1}^{e},\ldots,\tilde X_{p}^{e},\tilde Y^{e} - \tilde X^{e} \tilde \beta^{0}), e \in \{1,2\}$  can be decomposed into
\begin{align}\label{eq:37}
  \Sigma_{\eta} + \Sigma_{\delta}^{e}+\Sigma_{\zeta}^{e},
\end{align}
with positive semi-definite matrices $\Sigma_{\eta}, \Sigma_{\delta}^{e}, \Sigma_{\zeta}^{e}$ satisfying
\begin{enumerate}
  \item  $(\Sigma_{\delta}^{e})_{p+1,\bullet} \equiv 0$,
  \item  $\Sigma_{\zeta}^{e}, e \in \{1,2\}$ are diagonal matrices with $(\Sigma_{\zeta}^{1})_{k,k} = (\Sigma_{\zeta}^{2})_{k,k}$ for $k=1,\ldots,p$.
\end{enumerate}
To this end, define $r^{1} = \tilde Y^{1} - \sum_{k=1}^{p} \tilde X^{1}_{k} \beta_{k}^{0}$ and $r^{2} = \tilde Y^{2} - \sum_{k=1}^{p} \tilde X_{k}^{2}\tilde  \beta_{k}^{0}$.  Define the matrices
\begin{align*}
\begin{split}
  S^{1}&= \begin{pmatrix}
  \text{Cov}(\tilde X_{1:p}^{1}) & \text{Cov}(\tilde X_{1:p}^{1}, r^{1})\\
  \text{Cov}(\tilde X_{1:p}^{1},r^{1})^{t} & \text{Var}(\mathring \eta_{p+1}^{1}) - \epsilon \end{pmatrix} \text{ and } \\
  S^{2} &= \begin{pmatrix}
  \text{Cov}(\tilde X_{1:p}^{2}) & \text{Cov}(\tilde X_{1:p}^{2}, r^{2})\\
  \text{Cov}(\tilde X_{1:p}^{2},r^{2})^{t} & \text{Var}(\mathring \eta_{p+1}^{2}) - \epsilon      \end{pmatrix}.
\end{split}
 \end{align*}
Here, $\mathring \eta_{p+1}^{e}$ denotes the noise contribution of $\mathring X_{p+1}^{e}= \mathring Y^{e}$ in the corresponding structural equation model. For $\tilde \SI^{e} \rightarrow \mathring{ \SI}^{e}, e \in \{1,2\}$, $\text{Cov}(\tilde X_{1:p}^{e})$ converges to $\text{Cov}(\mathring X_{1:p}^{e})$ and $\text{Cov}(\tilde X_{1:p}^{e},r^{e}) $ converges to $  \text{Cov}(\mathring X_{1:p}^{e}, \mathring Y^{e}- \mathring X^{e} \beta^{0}) =  \text{Cov}(\mathring X_{1:p}^{e}, \mathring \eta_{p+1}^{e})$.
Recall that the covariance matrices of $(\mathring X_{1}^{e},\ldots,\mathring X_{p}^{e},  \mathring \eta_{p+1}^{e}), e \in \{1,2\}$ are positive definite. Hence $S^{1}$ and $S^{2}$ are positive definite   for $\tilde \SI^{1}$  close to $\mathring \SI^{1}$, $\tilde \SI^{2}$ close to $\mathring \SI^{2}$ and  $\epsilon > 0$ small enough. Now we can define
\begin{align*}
\begin{split}
  (\Sigma_{\zeta}^{1})_{k,k'} &:= \begin{cases}
    \text{Var}(r^{1}) - \text{Var}(\mathring \eta_{p+1}^{1})) + \epsilon  & k=k'=p+1, \\
    0 & \text{else}.
  \end{cases} \\
  (\Sigma_{\zeta}^{2})_{k,k'} &:= \begin{cases}
    \text{Var}(r^{2}) - \text{Var}(\mathring \eta_{p+1}^{2})) + \epsilon & k=k'=p+1, \\
    0 & \text{else}.
  \end{cases}
\end{split}
\end{align*}
With this definition the covariance matrix of $(\tilde X_{1}^{e},\ldots,\tilde X_{p}^{e},\tilde Y^{e} - \tilde X^{e} \tilde \beta^{0}), e \in \{1,2\}$ can be decomposed as $ S^{e}+\Sigma_{\zeta}^{e}, e \in \{1,2\}$. \\
For $\tilde \SI^{1}$  close to $\mathring \SI^{1}$ and $\tilde \SI^{2}$ close to $\mathring \SI^{2}$, the matrices $ \Sigma_{\zeta}^{e}, e \in \{1,2\} $,  are positive semi-definite as $r^{e}$ has asymptotic variance $\text{Var}(\mathring \eta_{p+1}^{e} +  \mathring\zeta_{y}^{e})$, where $\mathring \zeta_{y}^{e}$ denotes the measurement error of $\mathring Y$ in environment $e$ in the corresponding structural equation model.  Thus by equation~\eqref{eq:37} it suffices to show that $S^{e}, e \in \{1,2\}$ can be decomposed into positive semi-definite matrices $\Sigma_{\eta}+ \Sigma_{\delta}^{e}$ such that $(\Sigma_{\delta}^{e})_{p+1,\bullet} \equiv 0$. \\
To this end let us define
\begin{equation*}
x= \begin{pmatrix}
  \frac{\text{Cov}(\tilde X_{1:p}^{1}, r^{1})}{\sqrt{ \text{Var}(\mathring \eta_{p+1}^{1}) - \epsilon}} \\  \vspace{0.05cm} \\\sqrt{ \text{Var}(\mathring \eta_{p+1}^{1}) - \epsilon}
\end{pmatrix}\in \mathbb{R}^{p+1}.
\end{equation*}
Now we want to show that
\begin{equation*}
\Sigma_{\delta}^{e} := \begin{pmatrix} S_{1:p,1:p}^{e} - x_{1:p} x_{1:p}^{t} & 0_{p} \\ 0_{p} & 0 \end{pmatrix}
\end{equation*}
 are positive semi-definite for $e \in \{1,2\}$. To this end take $v \in \mathbb{R}^{p}$. Then for $e \in \{1,2\}$,
\begin{align*}
  v^{t} (\Sigma_{\delta}^{e})_{1:p,1:p} v &= v^{t} S_{1:p,1:p}^{e}  v - v^{t} x_{1:p} x_{1:p}^{t} v \\
&= \left(v^{t}, - \frac{v^{t} x_{1:p}}{\sqrt{ \text{Var}(\mathring \eta_{p+1}^{1}) - \epsilon}}\right)  S^{e} \begin{pmatrix}
     v \\
    - \frac{v^{t} x_{1:p}}{\sqrt{ \text{Var}(\mathring \eta_{p+1}^{1}) - \epsilon}}
\end{pmatrix} \ge 0
\end{align*}
Note that we used that $S^{e}, e \in \{1,2\}$ are positive definite, that by Assumption~\ref{assum:additive}  $ \text{Var}(\mathring \eta_{p+1}^{1}) =  \text{Var}(\mathring \eta_{p+1}^{2})$, and that by inner-product invariance,  $\text{Cov}(\tilde X_{1:p}^{1},r^{1}) = \text{Cov}(\tilde X_{1:p}^{2},r^{2})$. Now by defining $\Sigma_{\eta} := x x^{t}$ we obtain the  decomposition $S^{e} = \Sigma_{\eta} + \Sigma_{\delta}^{e}$. Note that here we used again that by inner-product invariance under $\beta^{0}$, $\text{Cov}(\tilde X_{1:p}^{1},r^{1}) = \text{Cov}(\tilde X_{1:p}^{2},r^{2})$. This completes the proof.

\end{proof}

\bibliographystyle{plainnat}
\bibliography{bibliography-updated}

\end{document}